\newtheorem{theorem}{Theorem}
\newtheorem{axiom}{Axiom}
\newtheorem{assumption}{Assumption}
\newtheorem{conjecture}{Conjecture}
\newtheorem{corollary}{Corollary}
\newtheorem{definition}{Definition}
\newtheorem{example}{Example}
\newtheorem{exercise}{Exercise}
\newtheorem{lemma}{Lemma}
\newtheorem{proposition}{Proposition}
\newtheorem{remark}{Remark}
\chardef\@x10\chardef\@xv60
\def\tcitime{
\def\@time{%
  \@minute\time\@hour\@minute\divide\@hour\@xv
  \ifnum\@hour<\@x 0\fi\the\@hour:%
  \multiply\@hour\@xv\advance\@minute-\@hour
  \ifnum\@minute<\@x 0\fi\the\@minute
  }}%
\def\QCTOpt[#1]#2{%
  \def\QCTOptB{#1}
  \def\QCTOptA{#2}
}
\def\QCTNOpt#1{%
  \def\QCTOptA{#1}
  \let\QCTOptB\empty
}
\def\Qct{%
  \@ifnextchar[{%
    \QCTOpt}{\QCTNOpt}
}
\def\QCBOpt[#1]#2{%
  \def\QCBOptB{#1}
  \def\QCBOptA{#2}
}
\def\QCBNOpt#1{%
  \def\QCBOptA{#1}
  \let\QCBOptB\empty
}
\def\Qcb{%
  \@ifnextchar[{%
    \QCBOpt}{\QCBNOpt}
}
\def\PrepCapArgs{%
  \ifx\QCBOptA\empty
    \ifx\QCTOptA\empty
      {}%
    \else
      \ifx\QCTOptB\empty
        {\QCTOptA}%
      \else
        [\QCTOptB]{\QCTOptA}%
      \fi
    \fi
  \else
    \ifx\QCBOptA\empty
      {}%
    \else
      \ifx\QCBOptB\empty
        {\QCBOptA}%
      \else
        [\QCBOptB]{\QCBOptA}%
      \fi
    \fi
  \fi
}
\def\GRAPHICSPS#1{%
 \ifcase\GRAPHICSTYPE
   \special{ps: #1}%
 \or
   \special{language "PS", include "#1"}%
 \fi
}%
\def\graffile#1#2#3#4{%
    \bgroup
    \leavevmode
    \@ifundefined{bbl@deactivate}{\def~{\string~}}{\activesoff}
    \raise -#4 \BOXTHEFRAME{%
        \hbox to #2{\raise #3\hbox to #2{\null #1\hfil}}}%
    \egroup
}%
\def\draftbox#1#2#3#4{%
 \leavevmode\raise -#4 \hbox{%
  \frame{\rlap{\protect\tiny #1}\hbox to #2%
   {\vrule height#3 width\z@ depth\z@\hfil}%
  }%
 }%
}%
\newif\ifwasdraft
\def\GRAPHIC#1#2#3#4#5{%
 \ifnum\draft=\@ne\draftbox{#2}{#3}{#4}{#5}%
  \else\graffile{#1}{#3}{#4}{#5}%
  \fi
 }%
\def\addtoLaTeXparams#1{%
    \edef\LaTeXparams{\LaTeXparams #1}}%
\newif\ifBoxFrame \BoxFramefalse
\newif\ifOverFrame \OverFramefalse
\newif\ifUnderFrame \UnderFramefalse
\def\BOXTHEFRAME#1{%
   \hbox{%
      \ifBoxFrame
         \frame{#1}%
      \else
         {#1}%
      \fi
   }%
}
\def\doFRAMEparams#1{\BoxFramefalse\OverFramefalse\UnderFramefalse\readFRAMEparams#1\end}%
\def\readFRAMEparams#1{%
 \ifx#1\end%
  \let\next=\relax
  \else
  \ifx#1i\dispkind=\z@\fi
  \ifx#1d\dispkind=\@ne\fi
  \ifx#1f\dispkind=\tw@\fi
  \ifx#1t\addtoLaTeXparams{t}\fi
  \ifx#1b\addtoLaTeXparams{b}\fi
  \ifx#1p\addtoLaTeXparams{p}\fi
  \ifx#1h\addtoLaTeXparams{h}\fi
  \ifx#1X\BoxFrametrue\fi
  \ifx#1O\OverFrametrue\fi
  \ifx#1U\UnderFrametrue\fi
  \ifx#1w
    \ifnum\draft=1\wasdrafttrue\else\wasdraftfalse\fi
    \draft=\@ne
  \fi
  \let\next=\readFRAMEparams
  \fi
 \next
 }%
\def\IFRAME#1#2#3#4#5#6{%
      \bgroup
      \let\QCTOptA\empty
      \let\QCTOptB\empty
      \let\QCBOptA\empty
      \let\QCBOptB\empty
      #6%
      \parindent=0pt%
      \leftskip=0pt
      \rightskip=0pt
      \setbox0 = \hbox{\QCBOptA}%
      \@tempdima = #1\relax
      \ifOverFrame
          \typeout{This is not implemented yet}%
          \show\HELP
      \else
         \ifdim\wd0>\@tempdima
            \advance\@tempdima by \@tempdima
            \ifdim\wd0 >\@tempdima
               \textwidth=\@tempdima
               \setbox1 =\vbox{%
                  \noindent\hbox to \@tempdima{\hfill\GRAPHIC{#5}{#4}{#1}{#2}{#3}\hfill}\\%
                  \noindent\hbox to \@tempdima{\parbox[b]{\@tempdima}{\QCBOptA}}%
               }%
               \wd1=\@tempdima
            \else
               \textwidth=\wd0
               \setbox1 =\vbox{%
                 \noindent\hbox to \wd0{\hfill\GRAPHIC{#5}{#4}{#1}{#2}{#3}\hfill}\\%
                 \noindent\hbox{\QCBOptA}%
               }%
               \wd1=\wd0
            \fi
         \else
            \ifdim\wd0>0pt
              \hsize=\@tempdima
              \setbox1 =\vbox{%
                \unskip\GRAPHIC{#5}{#4}{#1}{#2}{0pt}%
                \break
                \unskip\hbox to \@tempdima{\hfill \QCBOptA\hfill}%
              }%
              \wd1=\@tempdima
           \else
              \hsize=\@tempdima
              \setbox1 =\vbox{%
                \unskip\GRAPHIC{#5}{#4}{#1}{#2}{0pt}%
              }%
              \wd1=\@tempdima
           \fi
         \fi
         \@tempdimb=\ht1
         \advance\@tempdimb by \dp1
         \advance\@tempdimb by -#2%
         \advance\@tempdimb by #3%
         \leavevmode
         \raise -\@tempdimb \hbox{\box1}%
      \fi
      \egroup%
}%
\def\DFRAME#1#2#3#4#5{%
 \begin{center}
     \let\QCTOptA\empty
     \let\QCTOptB\empty
     \let\QCBOptA\empty
     \let\QCBOptB\empty
     \ifOverFrame 
        #5\QCTOptA\par
     \fi
     \GRAPHIC{#4}{#3}{#1}{#2}{\z@}
     \ifUnderFrame 
        \nobreak\par\nobreak#5\QCBOptA
     \fi
 \end{center}%
 }%
\def\FFRAME#1#2#3#4#5#6#7{%
 \begin{figure}[#1]%
  \let\QCTOptA\empty
  \let\QCTOptB\empty
  \let\QCBOptA\empty
  \let\QCBOptB\empty
  \ifOverFrame
    #4
    \ifx\QCTOptA\empty
    \else
      \ifx\QCTOptB\empty
        \caption{\QCTOptA}%
      \else
        \caption[\QCTOptB]{\QCTOptA}%
      \fi
    \fi
    \ifUnderFrame\else
      \label{#5}%
    \fi
  \else
    \UnderFrametrue%
  \fi
  \begin{center}\GRAPHIC{#7}{#6}{#2}{#3}{\z@}\end{center}%
  \ifUnderFrame
    #4
    \ifx\QCBOptA\empty
      \caption{}%
    \else
      \ifx\QCBOptB\empty
        \caption{\QCBOptA}%
      \else
        \caption[\QCBOptB]{\QCBOptA}%
      \fi
    \fi
    \label{#5}%
  \fi
  \end{figure}%
 }%
\def\makeactives{
  \catcode`\"=\active
  \catcode`\;=\active
  \catcode`\:=\active
  \catcode`\'=\active
  \catcode`\~=\active
}
   \gdef\activesoff{%
      \def"{\string"}
      \def;{\string;}
      \def:{\string:}
      \def'{\string'}
      \def~{\string~}
    }
\def\FRAME#1#2#3#4#5#6#7#8{%
 \bgroup
 \ifnum\draft=\@ne
   \wasdrafttrue
 \else
   \wasdraftfalse%
 \fi
 \def\LaTeXparams{}%
 \dispkind=\z@
 \def\LaTeXparams{}%
 \doFRAMEparams{#1}%
 \ifnum\dispkind=\z@\IFRAME{#2}{#3}{#4}{#7}{#8}{#5}\else
  \ifnum\dispkind=\@ne\DFRAME{#2}{#3}{#7}{#8}{#5}\else
   \ifnum\dispkind=\tw@
    \edef\@tempa{\noexpand\FFRAME{\LaTeXparams}}%
    \@tempa{#2}{#3}{#5}{#6}{#7}{#8}%
    \fi
   \fi
  \fi
  \ifwasdraft\draft=1\else\draft=0\fi{}%
  \egroup
 }%
\def\TEXUX#1{"texux"}
\def\limfunc#1{\mathop{\rm #1}}%
\long\def\QQQ#1#2{%
     \long\expandafter\def\csname#1\endcsname{#2}}%
\long\def\QQA#1#2{}%
\def\QTR#1#2{{\csname#1\endcsname #2}}
\def\EXPAND#1[#2]#3{}%
\def\NOEXPAND#1[#2]#3{}%
\def\LaTeXparent#1{}%
\def\ChildStyles#1{}%
\def\ChildDefaults#1{}%
\def\QTagDef#1#2#3{}%
  \providecommand{\UNICODE}[2][]{}
\def\QQfnmark#1{\footnotemark}
 \def\abstract{%
  \if@twocolumn
   \section*{Abstract (Not appropriate in this style!)}%
   \else \small 
   \begin{center}{\bf Abstract\vspace{-.5em}\vspace{\z@}}\end{center}%
   \quotation 
   \fi
  }%
   \def\registered{\relax\ifmmode{}\r@gistered
                    \else$\m@th\r@gistered$\fi}%
 \def\r@gistered{^{\ooalign
  {\hfil\raise.07ex\hbox{$\scriptstyle\rm\text{R}$}\hfil\crcr
  \mathhexbox20D}}}}{}%
\newdimen\theight
\def\Column{%
 \vadjust{\setbox\z@=\hbox{\scriptsize\quad\quad tcol}%
  \theight=\ht\z@\advance\theight by \dp\z@\advance\theight by \lineskip
  \kern -\theight \vbox to \theight{%
   \rightline{\rlap{\box\z@}}%
   \vss
   }%
  }%
 }%
\def\qed{%
 \ifhmode\unskip\nobreak\fi\ifmmode\ifinner\else\hskip5\p@\fi\fi
 \hbox{\hskip5\p@\vrule width4\p@ height6\p@ depth1.5\p@\hskip\p@}%
 }%
\def\miss{\hbox{\vrule height2\p@ width 2\p@ depth\z@}}%
\def\tcol#1{{\baselineskip=6\p@ \vcenter{#1}} \Column}  %
\def\newfmtname{LaTeX2e}
  \DeclareOldFontCommand{\rm}{\normalfont\rmfamily}{\mathrm}
  \DeclareOldFontCommand{\sf}{\normalfont\sffamily}{\mathsf}
  \DeclareOldFontCommand{\tt}{\normalfont\ttfamily}{\mathtt}
  \DeclareOldFontCommand{\bf}{\normalfont\bfseries}{\mathbf}
  \DeclareOldFontCommand{\it}{\normalfont\itshape}{\mathit}
  \DeclareOldFontCommand{\sl}{\normalfont\slshape}{\@nomath\sl}
  \DeclareOldFontCommand{\sc}{\normalfont\scshape}{\@nomath\sc}
\def\alpha{{\Greekmath 010B}}%
\def\beta{{\Greekmath 010C}}%
\def\gamma{{\Greekmath 010D}}%
\def\delta{{\Greekmath 010E}}%
\def\epsilon{{\Greekmath 010F}}%
\def\zeta{{\Greekmath 0110}}%
\def\eta{{\Greekmath 0111}}%
\def\theta{{\Greekmath 0112}}%
\def\iota{{\Greekmath 0113}}%
\def\kappa{{\Greekmath 0114}}%
\def\lambda{{\Greekmath 0115}}%
\def\mu{{\Greekmath 0116}}%
\def\nu{{\Greekmath 0117}}%
\def\xi{{\Greekmath 0118}}%
\def\pi{{\Greekmath 0119}}%
\def\rho{{\Greekmath 011A}}%
\def\sigma{{\Greekmath 011B}}%
\def\tau{{\Greekmath 011C}}%
\def\upsilon{{\Greekmath 011D}}%
\def\phi{{\Greekmath 011E}}%
\def\chi{{\Greekmath 011F}}%
\def\psi{{\Greekmath 0120}}%
\def\omega{{\Greekmath 0121}}%
\def\varepsilon{{\Greekmath 0122}}%
\def\vartheta{{\Greekmath 0123}}%
\def\varpi{{\Greekmath 0124}}%
\def\varrho{{\Greekmath 0125}}%
\def\varsigma{{\Greekmath 0126}}%
\def\varphi{{\Greekmath 0127}}%
\def\nabla{{\Greekmath 0272}}
\def\FindBoldGroup{%
   {\setbox0=\hbox{$\mathbf{x\global\edef\theboldgroup{\the\mathgroup}}$}}%
}
\def\Greekmath#1#2#3#4{%
    \if@compatibility
        \ifnum\mathgroup=\symbold
           \mathchoice{\mbox{\boldmath$\displaystyle\mathchar"#1#2#3#4$}}%
                      {\mbox{\boldmath$\textstyle\mathchar"#1#2#3#4$}}%
                      {\mbox{\boldmath$\scriptstyle\mathchar"#1#2#3#4$}}%
                      {\mbox{\boldmath$\scriptscriptstyle\mathchar"#1#2#3#4$}}%
        \else
           \mathchar"#1#2#3#4%
        \fi 
    \else 
        \FindBoldGroup
        \ifnum\mathgroup=\theboldgroup 
           \mathchoice{\mbox{\boldmath$\displaystyle\mathchar"#1#2#3#4$}}%
                      {\mbox{\boldmath$\textstyle\mathchar"#1#2#3#4$}}%
                      {\mbox{\boldmath$\scriptstyle\mathchar"#1#2#3#4$}}%
                      {\mbox{\boldmath$\scriptscriptstyle\mathchar"#1#2#3#4$}}%
        \else
           \mathchar"#1#2#3#4%
        \fi     	    
	  \fi}
\newif\ifGreekBold  \GreekBoldfalse
\let\SAVEPBF=\pbf
\def\pbf{\GreekBoldtrue\SAVEPBF}%
  \newcounter{equationnumber}  
  \def\mathletters{%
     \addtocounter{equation}{1}
     \edef\@currentlabel{\theequation}%
     \setcounter{equationnumber}{\c@equation}
     \setcounter{equation}{0}%
     \edef\theequation{\@currentlabel\noexpand\alph{equation}}%
  }
    \def\BibTeX{{\rm B\kern-.05em{\sc i\kern-.025em b}\kern-.08em
                 T\kern-.1667em\lower.7ex\hbox{E}\kern-.125emX}}}{}%
\def\AmS{{\protect\usefont{OMS}{cmsy}{m}{n}%
                A\kern-.1667em\lower.5ex\hbox{M}\kern-.125emS}}}{}%
\def\@@eqncr{\let\@tempa\relax
    \ifcase\@eqcnt \def\@tempa{& & &}\or \def\@tempa{& &}%
      \else \def\@tempa{&}\fi
     \@tempa
     \if@eqnsw
        \iftag@
           \@taggnum
        \else
           \@eqnnum\stepcounter{equation}%
        \fi
     \fi
     \global\tag@false
     \global\@eqnswtrue
     \global\@eqcnt\z@\cr}
\def\TCItag{\@ifnextchar*{\@TCItagstar}{\@TCItag}}
\def\@TCItag#1{%
    \global\tag@true
    \global\def\@taggnum{(#1)}}
\def\@TCItagstar*#1{%
    \global\tag@true
    \global\def\@taggnum{#1}}
\let\DOTSI\relax
\def\RIfM@{\relax\ifmmode}%
\def\FN@{\futurelet\next}%
\def\iint{\DOTSI\intno@\tw@\FN@\ints@}%
\def\iiint{\DOTSI\intno@\thr@@\FN@\ints@}%
\def\iiiint{\DOTSI\intno@4 \FN@\ints@}%
\def\idotsint{\DOTSI\intno@\z@\FN@\ints@}%
\def\ints@{\findlimits@\ints@@}%
\newif\iflimtoken@
\newif\iflimits@
\def\findlimits@{\limtoken@true\ifx\next\limits\limits@true
 \else\ifx\next\nolimits\limits@false\else
 \limtoken@false\ifx\ilimits@\nolimits\limits@false\else
 \ifinner\limits@false\else\limits@true\fi\fi\fi\fi}%
\def\multint@{\int\ifnum\intno@=\z@\intdots@                          
 \else\intkern@\fi                                                    
 \ifnum\intno@>\tw@\int\intkern@\fi                                   
 \ifnum\intno@>\thr@@\int\intkern@\fi                                 
 \int}
\def\multintlimits@{\intop\ifnum\intno@=\z@\intdots@\else\intkern@\fi
 \ifnum\intno@>\tw@\intop\intkern@\fi
 \ifnum\intno@>\thr@@\intop\intkern@\fi\intop}%
\def\intic@{%
    \mathchoice{\hskip.5em}{\hskip.4em}{\hskip.4em}{\hskip.4em}}%
\def\negintic@{\mathchoice
 {\hskip-.5em}{\hskip-.4em}{\hskip-.4em}{\hskip-.4em}}%
\def\ints@@{\iflimtoken@                                              
 \def\ints@@@{\iflimits@\negintic@
   \mathop{\intic@\multintlimits@}\limits                             
  \else\multint@\nolimits\fi                                          
  \eat@}
 \else                                                                
 \def\ints@@@{\iflimits@\negintic@
  \mathop{\intic@\multintlimits@}\limits\else
  \multint@\nolimits\fi}\fi\ints@@@}%
\def\intkern@{\mathchoice{\!\!\!}{\!\!}{\!\!}{\!\!}}%
\def\plaincdots@{\mathinner{\cdotp\cdotp\cdotp}}%
\def\intdots@{\mathchoice{\plaincdots@}%
 {{\cdotp}\mkern1.5mu{\cdotp}\mkern1.5mu{\cdotp}}%
 {{\cdotp}\mkern1mu{\cdotp}\mkern1mu{\cdotp}}%
 {{\cdotp}\mkern1mu{\cdotp}\mkern1mu{\cdotp}}}%
\def\RIfM@{\relax\protect\ifmmode}
\def\text{\RIfM@\expandafter\text@\else\expandafter\mbox\fi}
\let\nfss@text\text
\def\text@#1{\mathchoice
   {\textdef@\displaystyle\f@size{#1}}%
   {\textdef@\textstyle\tf@size{\firstchoice@false #1}}%
   {\textdef@\textstyle\sf@size{\firstchoice@false #1}}%
   {\textdef@\textstyle \ssf@size{\firstchoice@false #1}}%
   \glb@settings}
\def\textdef@#1#2#3{\hbox{{%
                    \everymath{#1}%
                    \let\f@size#2\selectfont
                    #3}}}
\newif\iffirstchoice@
\def\Let@{\relax\iffalse{\fi\let\\=\cr\iffalse}\fi}%
\def\vspace@{\def\vspace##1{\crcr\noalign{\vskip##1\relax}}}%
\def\multilimits@{\bgroup\vspace@\Let@
 \baselineskip\fontdimen10 \scriptfont\tw@
 \advance\baselineskip\fontdimen12 \scriptfont\tw@
 \lineskip\thr@@\fontdimen8 \scriptfont\thr@@
 \lineskiplimit\lineskip
 \vbox\bgroup\ialign\bgroup\hfil$\m@th\scriptstyle{##}$\hfil\crcr}%
\def\Sb{_\multilimits@}%
\def\endSb{\crcr\egroup\egroup\egroup}%
\def\Sp{^\multilimits@}%
\newdimen\ex@
\def\rightarrowfill@#1{$#1\m@th\mathord-\mkern-6mu\cleaders
 \hbox{$#1\mkern-2mu\mathord-\mkern-2mu$}\hfill
 \mkern-6mu\mathord\rightarrow$}%
\def\leftarrowfill@#1{$#1\m@th\mathord\leftarrow\mkern-6mu\cleaders
 \hbox{$#1\mkern-2mu\mathord-\mkern-2mu$}\hfill\mkern-6mu\mathord-$}%
\def\leftrightarrowfill@#1{$#1\m@th\mathord\leftarrow
\mkern-6mu\cleaders
 \hbox{$#1\mkern-2mu\mathord-\mkern-2mu$}\hfill
 \mkern-6mu\mathord\rightarrow$}%
\def\overrightarrow{\mathpalette\overrightarrow@}%
\def\overrightarrow@#1#2{\vbox{\ialign{##\crcr\rightarrowfill@#1\crcr
 \noalign{\kern-\ex@\nointerlineskip}$\m@th\hfil#1#2\hfil$\crcr}}}%
\def\overleftarrow{\mathpalette\overleftarrow@}%
\def\overleftarrow@#1#2{\vbox{\ialign{##\crcr\leftarrowfill@#1\crcr
 \noalign{\kern-\ex@\nointerlineskip}$\m@th\hfil#1#2\hfil$\crcr}}}%
\def\overleftrightarrow{\mathpalette\overleftrightarrow@}%
\def\overleftrightarrow@#1#2{\vbox{\ialign{##\crcr
   \leftrightarrowfill@#1\crcr
 \noalign{\kern-\ex@\nointerlineskip}$\m@th\hfil#1#2\hfil$\crcr}}}%
\def\underrightarrow{\mathpalette\underrightarrow@}%
\def\underrightarrow@#1#2{\vtop{\ialign{##\crcr$\m@th\hfil#1#2\hfil
  $\crcr\noalign{\nointerlineskip}\rightarrowfill@#1\crcr}}}%
\def\underleftarrow{\mathpalette\underleftarrow@}%
\def\underleftarrow@#1#2{\vtop{\ialign{##\crcr$\m@th\hfil#1#2\hfil
  $\crcr\noalign{\nointerlineskip}\leftarrowfill@#1\crcr}}}%
\def\underleftrightarrow{\mathpalette\underleftrightarrow@}%
\def\underleftrightarrow@#1#2{\vtop{\ialign{##\crcr$\m@th
  \hfil#1#2\hfil$\crcr
 \noalign{\nointerlineskip}\leftrightarrowfill@#1\crcr}}}%
\def\qopnamewl@#1{\mathop{\operator@font#1}\nlimits@}
\let\nlimits@\displaylimits
\def\setboxz@h{\setbox\z@\hbox}
\def\varlim@#1#2{\mathop{\vtop{\ialign{##\crcr
 \hfil$#1\m@th\operator@font lim$\hfil\crcr
 \noalign{\nointerlineskip}#2#1\crcr
 \noalign{\nointerlineskip\kern-\ex@}\crcr}}}}
 \def\rightarrowfill@#1{\m@th\setboxz@h{$#1-$}\ht\z@\z@
  $#1\copy\z@\mkern-6mu\cleaders
  \hbox{$#1\mkern-2mu\box\z@\mkern-2mu$}\hfill
  \mkern-6mu\mathord\rightarrow$}
\def\leftarrowfill@#1{\m@th\setboxz@h{$#1-$}\ht\z@\z@
  $#1\mathord\leftarrow\mkern-6mu\cleaders
  \hbox{$#1\mkern-2mu\copy\z@\mkern-2mu$}\hfill
  \mkern-6mu\box\z@$}
\def\projlim{\qopnamewl@{proj\,lim}}
\def\injlim{\qopnamewl@{inj\,lim}}
\def\varinjlim{\mathpalette\varlim@\rightarrowfill@}
\def\varprojlim{\mathpalette\varlim@\leftarrowfill@}
\def\varliminf{\mathpalette\varliminf@{}}
\def\varliminf@#1{\mathop{\underline{\vrule\@depth.2\ex@\@width\z@
   \hbox{$#1\m@th\operator@font lim$}}}}
\def\varlimsup{\mathpalette\varlimsup@{}}
\def\varlimsup@#1{\mathop{\overline
  {\hbox{$#1\m@th\operator@font lim$}}}}
\def\align{\@verbatim \frenchspacing\@vobeyspaces \@alignverbatim
You are using the "align" environment in a style in which it is not defined.}
\let\csname endalign*\endcsname =\endtrivlist
\def\alignat{\@verbatim \frenchspacing\@vobeyspaces \@alignatverbatim
You are using the "alignat" environment in a style in which it is not defined.}
\let\csname endalignat*\endcsname =\endtrivlist
\def\xalignat{\@verbatim \frenchspacing\@vobeyspaces \@xalignatverbatim
You are using the "xalignat" environment in a style in which it is not defined.}
\let\csname endxalignat*\endcsname =\endtrivlist
\def\gather{\@verbatim \frenchspacing\@vobeyspaces \@gatherverbatim
You are using the "gather" environment in a style in which it is not defined.}
\let\csname endgather*\endcsname =\endtrivlist
\def\multiline{\@verbatim \frenchspacing\@vobeyspaces \@multilineverbatim
You are using the "multiline" environment in a style in which it is not defined.}
\let\csname endmultiline*\endcsname =\endtrivlist
\def\arrax{\@verbatim \frenchspacing\@vobeyspaces \@arraxverbatim
You are using a type of "array" construct that is only allowed in AmS-LaTeX.}
\def\tabulax{\@verbatim \frenchspacing\@vobeyspaces \@tabulaxverbatim
You are using a type of "tabular" construct that is only allowed in AmS-LaTeX.}
\let\csname endarrax*\endcsname =\endtrivlist
\let\csname endtabulax*\endcsname =\endtrivlist
 \def\endequation{%
     \ifmmode\ifinner 
      \iftag@
        \addtocounter{equation}{-1} 
        $\hfil
           \displaywidth\linewidth\@taggnum\egroup \endtrivlist
        \global\tag@false
        \global\@ignoretrue   
      \else
        $\hfil
           \displaywidth\linewidth\@eqnnum\egroup \endtrivlist
        \global\tag@false
        \global\@ignoretrue 
      \fi
     \else   
      \iftag@
        \addtocounter{equation}{-1} 
        \eqno \hbox{\@taggnum}
        \global\tag@false%
        $$\global\@ignoretrue
      \else
        \eqno \hbox{\@eqnnum}
        $$\global\@ignoretrue
      \fi
     \fi\fi
 } 
 \newif\iftag@ \tag@false
 \def\TCItag{\@ifnextchar*{\@TCItagstar}{\@TCItag}}
 \def\@TCItag#1{%
     \global\tag@true
     \global\def\@taggnum{(#1)}}
 \def\@TCItagstar*#1{%
     \global\tag@true
     \global\def\@taggnum{#1}}
     \def\tag{\@ifnextchar*{\@tagstar}{\@tag}}
     \def\@tag#1{%
         \global\tag@true
         \global\def\@taggnum{(#1)}}
     \def\@tagstar*#1{%
         \global\tag@true
         \global\def\@taggnum{#1}}
\newcommand{\argmin}{\operatorname*{argmin}}
\begin{document}

\title{Minimizing Sensitivity to Model Misspecification\thanks{%
We thank Kirill Evdokimov, Josh Angrist, Tim Armstrong, Gary Chamberlain, Tim Christensen, Ben Connault, Max Farrell, Jin Hahn, Chris Hansen, Lars Hansen, Kei Hirano, Max Kasy, Roger Koenker, Thibaut Lamadon, Esfandiar Maasoumi, Magne Mogstad, Roger Moon, Whitney Newey, Tai Otsu, Franco Peracchi, Jack Porter, Andres Santos, Azeem Shaikh, Jesse Shapiro, Richard Smith, Alex Torgovistky, and Ken Wolpin, as well as the audiences in various seminars and conferences, for comments. Bonhomme acknowledges support from the NSF, Grant SES-1658920. Weidner acknowledges support 
from the Economic and Social Research Council through grants (RES-589-28-0001, RES-589-28-0002 and ES/P008909/1)
 to the ESRC Centre for Microdata Methods and Practice (CeMMAP),
 and from the European Research Council grants  ERC-2014-CoG-646917-ROMIA and
ERC-2018-CoG-819086-PANEDA.}}
\author{St\'ephane
Bonhomme\thanks{%
University of Chicago.} \and Martin Weidner\thanks{University of Oxford, and Institute for Fiscal Studies, London.}}

\date{$\quad $\\ October 2021}
\vskip 3cm\maketitle

\begin{abstract}
\noindent    
We propose a framework for estimation and inference when the model may be misspecified. We rely on a local asymptotic approach where the degree of misspecification is indexed by the sample size. We construct estimators whose mean squared error is minimax in a neighborhood of the reference model, based on one-step adjustments. In addition, we provide confidence intervals that contain the true parameter under local misspecification. As a tool to interpret the degree of misspecification, we map it to the local power of a specification test of the reference model. Our approach allows for systematic sensitivity analysis when the parameter of interest may be partially or irregularly identified. As illustrations, we study three applications: an empirical analysis of the impact of conditional cash transfers in Mexico where misspecification stems from the presence of stigma effects of the program, a cross-sectional binary choice model where the error distribution is misspecified, and a dynamic panel data binary choice model where the number of time periods is small and the distribution of individual effects is misspecified.   

\bigskip

\noindent \textsc{JEL codes:}\textbf{\ } C13, C23.

\noindent \textsc{Keywords:}\textbf{\ } Model misspecification, robustness, sensitivity analysis.
\end{abstract}

\baselineskip21pt

\bigskip

\bigskip

\setcounter{page}{0}\thispagestyle{empty}

\newpage

\section{Introduction\label{Intro_sec}}
Although economic models are intended as plausible approximations to a complex economic reality, econometric inference often relies on the model being an exact description of the population environment. To account for the possibility that their models are misspecified, economists have developed a number of approaches such as specification tests, semi-parametric and nonparametric methods, and more recently bounds approaches. Implementing those approaches typically requires estimating a more general model than the original specification, possibly involving nonparametric and partially identified components.

In this paper, we consider a different approach, which consists in quantifying how model misspecification affects the parameter of interest, and in modifying the estimate in order to minimize the impact of misspecification. The goal of the analysis is twofold. First, we provide simple adjustments, which do not require re-estimating the model, and provide guarantees on performance when the model is misspecified. Second, we construct confidence intervals that account for model misspecification error in addition to sampling uncertainty. 

In our approach, we consider deviations from a {reference specification} of the model{, in a particular class}. The reference model is parametric and fully specified given covariates. It may, for example, correspond to the empirical specification of a structural economic model. We do not assume that the reference model is correctly specified, and allow for \emph{local} deviations from it within a larger class of models. Relative to other approaches, a local analysis presents important advantages in terms of tractability.

We construct {minimax} estimators which minimize worst-case mean squared error (MSE) in a given neighborhood of the reference model. The worst case is influenced by the directions of model misspecification which matter most for the parameter of interest. We focus in particular on two types of neighborhoods, for two leading classes of applications: Euclidean neighborhoods, in settings where the larger class of models containing the reference specification is parametric, and Kullback-Leibler neighborhoods, in semi-parametric mixture models where misspecification of functional forms is measured by the Kullback-Leibler divergence between density functions.

The framework we propose is inspired by Hansen and Sargent's (2001, 2008) work on robust decision making under uncertainty and ambiguity. As in their work, optimal decisions depend on the size of the neighborhood around the reference model. In this paper, we do not attempt to provide a data-driven choice for the neighborhood size. Instead, we take the size as given and derive formulas for optimal estimation in neighborhoods of a given size. We discuss how to interpret the magnitude of the neighborhood size in various parametric and semi-parametric examples. In addition, we show that the neighborhood size can be mapped to the local power --- in certain directions --- of a likelihood-ratio test of correct specification of the reference model. 

Our approach delivers a class of estimators that can be used for systematic sensitivity analysis. In addition, we show how to construct confidence intervals which asymptotically contain the population parameter of interest with pre-specified probability, both under correct specification and local misspecification. We show that acknowledging misspecification leads to easy-to-compute enlargements of conventional confidence intervals. Such confidence intervals are ``honest'', in the sense that they account for the bias of the estimator (e.g., Donoho, 1994, Armstrong and Koles\'ar, 2020).

Our local approach leads to tractable expressions for worst-case bias and MSE, as well as for   minimum-MSE estimators in a given neighborhood of the reference model. A minimum-MSE estimator takes the form of a one-step adjustment of the estimator based on the reference model by a term which reflects the impact of model misspecification, in addition to a more standard term which adjusts the estimate in the direction of the efficient estimator based on the reference model. Implementing the optimal estimator only requires computing the score and Hessian of a larger model, evaluated at the reference model. The large model never needs to be estimated. This feature of our approach is reminiscent of the logic of Lagrange Multiplier (LM) testing.

We illustrate our approach using three examples. We first study the evaluation of the PROGRESA program in Mexico, which provides income transfers to households subject to the condition that the child attends school. Todd and Wolpin (2006) estimate a structural model of education choice on villages that were initially randomized out. They compare the predictions of the structural model with the estimated experimental impact. As emphasized by Todd and Wolpin (2008) and Attanasio \textit{et al.} (2012), the ability to predict the effects of the program based solely on control villages imposes restrictions on the economic model. Within a simple static model of education choice, we assess the sensitivity of counterfactual predictions to a form of misspecification under which program participation may have a direct ``stigma'' effect on the marginal utility of schooling (Wolpin, 2013).

We next study the impact of misspecification of the error distribution in a cross-sectional binary choice model. Our aim is to estimate the outcome probabilities under different values of the covariates. While point-identification can be achieved under independence and sufficiently rich support of covariates (Manski, 1988), the quantities of interest are partially identified in our setting. Relying on a normal (probit) reference model, we show how our estimators and confidence intervals can be used for sensitivity analysis, when the researcher is concerned about misspecification of the normal distribution. 

Our third and last example is a dynamic binary choice model in short panel data. We assume that time-varying errors are \textit{i.i.d.} normal, but leave the distribution of individual heterogeneity given initial conditions unrestricted. In this setting also, common parameters and average effects often fail to be point-identified (Chamberlain, 2010, Honor\'e and Tamer, 2006, Chernozhukov \textit{et al.}, 2013), thus motivating a sensitivity analysis approach. We show that minimizing worst-case MSE in such panel data settings leads to a Tikhonov-regularized estimator, where the penalization reflects the degree of misspecification allowed for. In simulations, we illustrate that our estimator can provide substantial bias and MSE reduction relative to commonly used estimators.

\paragraph{Related work and outline.}

As in the literature on robust statistics (Huber, 1964, Huber and Ronchetti, 2009, Hampel \textit{et al.}, 1986, and especially Rieder, 1994), we rely on a minimax approach and aim to minimize the worst-case impact of misspecification in a neighborhood of a model. A difference with this work is that we focus on misspecification of {specific aspects} of a model, by considering {parametric or semi-parametric} classes of models around the reference specification. By contrast, the robust statistics literature has mostly focused on fully {nonparametric} classes, motivated by data contamination issues. 

A related literature studies orthogonalization and locally robust moment functions; see Neyman (1959), Newey (1994), Chernozhukov \textit{et al.} (2018), Chernozhukov \textit{et al.} (2020), and also Fraser (1964). Here we account for both bias and variance, weighting them by the size of the neighborhood around the reference model. In addition, our approach does not require the larger model to be point-identified. Our analysis also connects to Bayesian robustness (e.g., Berger and Berliner, 1986, Gustafson, 2000, Vidakovic, 2000, Mueller, 2012), although our minimum-MSE estimators and confidence intervals have a frequentist interpretation. 

Also related are the literatures on statistical decision theory (e.g., Wald, 1950, Chamberlain 2000, Watson and Holmes, 2016, Hansen and Marinacci, 2016, and especially Hansen and Sargent, 2008) and the literature on sensitivity analysis in statistics and economics (e.g., Rosenbaum and Rubin, 1983, Leamer, 1985, Imbens, 2003, Altonji \textit{et al.}, 2005, Nevo and Rosen, 2012, Oster, 2019, Masten and Poirier, 2020, 2021). Our analysis of minimum-MSE estimation and sensitivity in the OLS/IV example is related to Hahn and Hausman (2005) and Angrist \textit{et al.} (2017). Our approach based on local misspecification has a number of precedents, such as Newey (1985), Conley \textit{et al.} (2012), Guggenberger (2012), Bugni \textit{et al.} (2012), Kitamura \textit{et al.} (2013), and Bugni and Ura (2019). Also related is Claeskens and Hjort's (2003) work on the focused information criterion. 

Recent papers rely on a local approach to misspecification to provide tools for sensitivity analysis. Andrews \textit{et al.} (2017) propose a measure of sensitivity of parameter estimates to the moments used in estimation. Andrews \textit{et al.} (2020) introduce a measure of informativeness of descriptive statistics in the estimation of structural models; see also Mukhin (2018). Our goal is different, in that we aim to provide a framework for estimation and inference in the presence of misspecification. Armstrong and Koles\'ar (2021) study models defined by over-identified systems of moment conditions that are approximately satisfied at true values, up to an additive term that vanishes asymptotically, and derive results for optimal estimation and inference. In this paper, we seek to ensure robustness to misspecification of a reference model within a larger class of models. 

Our focus on {specific} forms of model misspecification also relates to recent approaches to estimate partially identified models (Chen \textit{et al.}, 2011, Norets and Tang, 2014, Schennach, 2013, Giacomini and Kitagawa, 2021). Christensen and Connault (2019) consider structural models defined by equilibrium conditions, and develop inference methods on the identified set of counterfactual predictions subject to restrictions on the distance between the true model and a reference specification. Our local approach is complementary to these methods. It allows tractability in complex models, such as structural economic models, since implementation does not require estimating a larger model. In our framework, we view the parametric reference model as a useful benchmark, although its predictions need to be modified in order to minimize the impact of misspecification. This aspect relates our paper to shrinkage methods (e.g., Hansen, 2016, 2017, Fessler and Kasy, 2019, Maasoumi, 1978), with the difference that here we are interested in a single parameter.

The plan of the paper is as follows. In Section \ref{Sec_outline}, we describe our framework and derive the main results. In Section \ref{Sec_param}, we apply our framework to parametric and semi-parametric mixture models. In Section \ref{Sec_epsilon}, we discuss how to use our approach for sensitivity analysis, with a focus on the interpretation of neighborhood size. In Sections \ref{App_TW} and \ref{Sec_numeric}, we present our illustrations. Finally, we conclude in Section \ref{Sec_conclu}. The supplementary material available \href{https://sites.google.com/site/stephanebonhommeresearch/}{online} contains an appendix and codes for replication.

\section{Framework of analysis\label{Sec_outline}}

In this section, we describe the main elements of our approach in a general setting. In the next section, we will specialize the analysis to a locally quadratic setting, which includes both parametric misspecification and semi-parametric misspecification of distributional functional forms.   

\subsection{Setup}

We observe a random sample $( Y_i \, : \, i=1,\ldots,n)$ from a density $f_{\beta,\pi}(y)$ (with respect to a continuous or discrete measure), where $\beta \in {\cal B}$ is a finite-dimensional parameter,
and $\pi \in \Pi$ is a finite- or infinite-dimensional parameter. 
 Throughout the paper, the parameter of interest is $\delta_{\beta,\pi} $, a scalar function or functional of $\beta$ and $\pi$. We assume that $\delta_{\beta,\pi}$ and $f_{\beta,\pi}$ are known, smooth functions of $\beta$ and $\pi$. Examples of functionals of interest in economic applications include counterfactual policy effects in structural models, and average effects in panel data settings. The true parameter values $\beta_0$ and $\pi_0$ that generate the observed data $Y_1,\ldots,Y_n$ are unknown to the researcher.
 Our goal is to estimate $\delta_{\beta_0,\pi_0}$ and construct confidence intervals for it. We abstract from covariates to simplify the presentation, but it is straightforward to extend our results to  conditional models of the form $f_{\beta_0,\pi_0}(y\,|\, x)$; see Subsection \ref{sec:Covariates}.
 
Our starting point is that the researcher has chosen a reference model $\pi(\gamma)$, which parameterizes the unknown $\pi \in \Pi$
in terms of a finite-dimensional parameter $\gamma \in {\mathcal G} $. We say that the reference model is {correctly specified} if there exists a  value $\gamma \in {\mathcal G}$ such that  $\pi_0 = \pi(\gamma)$. Otherwise, we say that the model is {misspecified}.
{To measure misspecification, we rely on a distance measure $d$ on $\Pi$, and we denote the maximal amount of misspecification as $\epsilon\geq 0$.} 

In our theory, we consider an asymptotic sequence where $\epsilon = \epsilon_n $ tends to zero as $n $ tends to infinity, so the maximal amount of misspecification gets smaller as the sample size increases. The reason for focusing on $\epsilon$ tending to zero is tractability, as a small-$\epsilon$ analysis allows us to rely on linearization techniques and obtain simple, explicit expressions. Moreover, when estimating $\delta_{\beta_0,\pi_0}$, the estimation bias due to misspecification (of order $\epsilon^{1/2}$) and the
standard deviation (of order $n^{-1/2}$) are asymptotically comparable, so both play a role in the mean squared error. This local asymptotic approach has a number of precedents in the literature, notably Rieder (1994). Along the sequence, the true parameter $\pi_0 = \pi_{0,n}$ depends on $n$,
and we assume that, for a fixed parameter $\gamma_* $, $d( \pi_{0,n} , \pi(\gamma_*) )\leq \epsilon_n$ for all $n$. This implies that $\lim_{n \rightarrow \infty} d( \pi_{0,n}, \pi(\gamma_*) ) = 0$; that is, $\pi_{0,n}$ converges to $\pi(\gamma_*)$ as $n $ tends to infinity. Hereafter we drop the indices $n$ and do not make the sample size dependence of $\epsilon$ and $\pi_0$ explicit. For example,
we simply write $d( \pi_0 , \pi(\gamma_*) )\leq \epsilon$. 

Given the distance measure $d$, and some $\epsilon>0$, we define an $\epsilon$-{neighborhood} around $\pi(\gamma_*)$ as
\begin{align*}
\Gamma_{\epsilon}(\gamma_*) &= \left\{\pi_0 \in \Pi \,:\,d( \pi_0, \pi(\gamma_*) )\leq \epsilon\right\}.
\end{align*}
We assume that the true $\pi_0$ that generates the data satisfies $\pi_0 \in \Gamma_{\epsilon}(\gamma_*) $. Later we will assume that $\gamma_*$ can be estimated
consistently by some preliminary estimator $\widehat \gamma  $. The distance measure $d$, the misspecification bound $\epsilon$, and the preliminary estimator $\widehat \gamma$ are chosen by the researcher.\footnote{Instead of defining the $\epsilon$-neighborhood of misspecified models around a fixed point $\pi(\gamma_*)$, one could
alternatively consider all $\pi_0$ in the set $\Gamma_{\epsilon} = \cup_{\gamma \in {\cal G}} \, \Gamma_{\epsilon}(\gamma)$, which is the
$\epsilon$-neighborhood around the manifold of reference models $\pi(\gamma)$,  $\gamma \in {\cal G}$. This  alternative definition would avoid having to define $\gamma_*$, and
we employed it in the first version of this paper to justify the same 
local approximation to the worst-case MSE optimal estimator derived below
 (see Subsection 2.3 in Bonhomme and Weidner, 2018). In the current presentation, we only introduce the $\epsilon$-neighborhood $\Gamma_{\epsilon}(\gamma_*)$ around a fixed $\gamma_*$, however this has no effect on the minimax misspecification adjustments  that we derive. By fixing $\gamma_*$, this presentation also aligns with the way in which local minimax results are typically discussed in statistics (see, e.g., Theorem 8.11 in Van der Vaart, 2007).}

\paragraph{Examples.}
As a first example, consider a parametric model defined by Euclidean parameters $\beta$ and $\pi$, where $\pi=0$ under the reference model. For example, $\pi$ can represent the effect of an omitted control variable in a regression, or the degree of endogeneity of a regressor as in the example we analyze in Subsection \ref{subsec_par}. Suppose that the researcher is interested in the parameter $\delta_{\beta_0,\pi_0}=c'\beta_0$ for a known vector $c$, such as one component of $\beta_0$. In this case, we will take the weighted Euclidean (squared) distance $d(\pi_0,\pi)=\|\pi_0-\pi\|_{\Omega}^2=(\pi_0-\pi)'\Omega (\pi_0-\pi)$, for a positive-definite matrix $\Omega$.

As a second example, consider a semi-parametric mixture model whose likelihood depends on a finite-dimensional parameter vector $\beta$ and a nonparametric density $\pi$ of unobservables $A\in{\cal{A}}$, abstracting from conditioning covariates for simplicity. The joint density of $(Y,A)$ is $g_{\beta_0}(y\,|\, a)\pi_0(a)$, for some known function $g$. Suppose that the researcher's goal is to estimate an average effect $\delta_{\beta_0,\pi_0}=\mathbb{E}_{\pi_0}\Delta(A,\beta_0)$, for a known function $\Delta$. It is common to estimate the model by parameterizing the unknown density as $\pi(\gamma)$, where $\gamma$ is finite-dimensional. We focus on situations where, although the researcher thinks of $\pi(\gamma)$ as a plausible approximation to the population distribution $\pi_0$, she is not willing to rule out that it may be misspecified. In this case we use the Kullback-Leibler divergence to define semi-parametric neighborhoods, and we take $d(\pi_0,\pi)=2\int_{\cal{A}} \log \left(\frac{\pi_0(a)}{\pi(a)}\right)\pi_0(a)da$. 
\hfill $\square$

\vskip .3cm

We focus on asymptotically linear estimators $\widehat{\delta}=\widehat{\delta}(Y_1,\ldots,Y_n)$ that admit a stochastic expansion of the form
\begin{equation}
\widehat{\delta}=\delta_{\beta_0,\pi(\gamma_*)}+\frac{1}{n}\sum_{i=1}^n h(Y_i,\beta_0,\gamma_*)+ o_{P_0}(n^{-\frac{1}{2}} + \epsilon^{1/2}),
\label{est_delta_hat}
\end{equation}
where this expansion holds uniformly for all $P_0=P_{\beta_0,\pi_0}$ such that $\pi_0\in\Gamma_{\epsilon}(\gamma_*)$, in a sense that we will discuss below and make precise in Theorem \ref{theo1}. Along the sequence we consider, the product $\epsilon n$ tends to a positive constant, so the remainder in (\ref{est_delta_hat}) is $o_{P_0}(n^{-\frac{1}{2}} )$. Although asymptotic linearity is satisfied by many econometric estimators, it can fail in certain semi-parametric problems (e.g., Cattaneo \textit{et al.}, 2014) and in problems involving model selection or shrinkage (e.g., Liao, 2013, Cheng and Liao, 2015), for example.

Equation (\ref{est_delta_hat}) is a form of local regularity of the estimator $\widehat{\delta}$. Consider first the correctly specified case, where $\epsilon=0$ and $\pi_0 = \pi(\gamma_*)$. In this case $h(\cdot,\beta_0,\gamma_*)$ is the influence function of $\widehat{\delta}$. We assume that the following conditions are satisfied,
\begin{align}
\mathbb{E}_{\beta_0,\pi(\gamma_*)} \, h(Y,\beta_0,\gamma_*) = 0 , 
\label{Con:Unbiased}
\end{align} 
and
\begin{align} 
\nabla_{\beta\gamma}\delta_{\beta_0,\pi(\gamma_*)}
+\mathbb{E}_{\beta_0,\pi(\gamma_*)} \, \nabla_{\beta\gamma}h(Y,\beta_0,\gamma_*) &=0,
\label{CharacterizeInfluenceH}   
\end{align}
where $\mathbb{E}_{\beta,\pi}$ denotes the expectation under $f_{\beta,\pi}$, and $\nabla_{\beta\gamma}$ denotes the derivative with respect to the vector $(\beta',\gamma')'$. Both \eqref{Con:Unbiased} and \eqref{CharacterizeInfluenceH} are standard properties of influence functions of regular asymptotically linear estimators.

We will refer to \eqref{Con:Unbiased} as \emph{unbiasedness}, since it guarantees that $ \widehat{\delta}$
is asymptotically unbiased for $\delta_{\beta_0,\pi_0}$ under correct specification of the reference model. We assume that unbiasedness holds at all possible values of $\beta_0$ and $\gamma_*$.
Then, by differentiating \eqref{Con:Unbiased} with respect to $\beta_0$ and $\gamma_*$
and plugging the resulting equations into \eqref{CharacterizeInfluenceH}, we obtain
\begin{align} 
&\mathbb{E}_{\beta_0,\pi(\gamma_*)}  \,  h(Y,  \beta_0,\gamma_*)    \, \nabla_{\beta\gamma} \log f_{\beta_0,\pi(\gamma_*)}(Y)
=  \nabla_{\beta\gamma}\delta_{\beta_0,\pi(\gamma_*)}.
\label{Con:EtaGradient}
\end{align}
Under unbiasedness, \eqref{CharacterizeInfluenceH} and \eqref{Con:EtaGradient} are  equivalent. We will later work with \eqref{Con:EtaGradient}, since it only features $h(Y,  \beta_0,\gamma_*)$ and not its gradient. Under suitable conditions, \eqref{Con:EtaGradient} is necessary and sufficient for the asymptotically linear estimator $\widehat{\delta}$ to be regular; see, e.g., Newey (1990). As an example, for m-estimators, \eqref{Con:EtaGradient} can be interpreted as the generalized information matrix equality. Asymptotic linearity and regularity are commonly imposed in the  semi-parametric efficiency literature (Bickel \textit{et al.}, 1993). These conditions rule out, for example, superefficient estimators such as Hodges' estimator. We will refer to \eqref{CharacterizeInfluenceH}, or alternatively \eqref{Con:EtaGradient}, as \emph{local robustness},
using a terminology introduced by Chernozhukov \textit{et al.} (2020).\footnote{%
	While in Chernozhukov \textit{et al.} (2020) local robustness is imposed as a substantive restriction
	on more general moment functions,  in our setting (\ref{CharacterizeInfluenceH}) and (\ref{Con:EtaGradient}) are regularity conditions given unbiasedness.}

Consider now the misspecified case, where $\epsilon>0$. In this case, we strengthen the condition of asymptotic linearity, and require that $\widehat{\delta}$ be locally asymptotically linear; see, e.g., Klaassen (1987). Formally, under local, small-$\epsilon$ misspecification, we assume the stochastic expansion \eqref{est_delta_hat} continues to hold, but now uniformly for all $\pi_0\in\Gamma_{\epsilon}(\gamma_*)$.\footnote{To see why \eqref{est_delta_hat} is a plausible way of imposing asymptotic linearity here, let $\phi(Y_i,\beta_0,\pi_0)$ be the influence function of $ \widehat{\delta}$. Expanding as $n\rightarrow\infty$ and $\epsilon\rightarrow 0$ we have
	\begin{align*}
	\widehat{\delta} &= \delta_{\beta_0,\pi(\gamma_*)} 
	+\frac{1}{n}\sum_{i=1}^n \underbrace{\phi(Y_i,\beta_0,\pi(\gamma_*))}_{=h(Y_i,\beta_0,\gamma_*)}
	+  [\pi_0 - \pi(\gamma_*)]' \, 
	\underbrace{  \left[\nabla_{\pi}\delta_{\beta_0,\pi(\gamma_*)}
		+\mathbb{E}_{\beta_0,\pi(\gamma_*)}\nabla_{\pi}\phi(Y,\beta_0,\pi(\gamma_*)) \right] }_{=0}
		\\ & \qquad \qquad \qquad \qquad \qquad \qquad \qquad \qquad \qquad \qquad \qquad \qquad \qquad \qquad \qquad \qquad \qquad \qquad \qquad
	+ o_{P_0}(n^{-\frac{1}{2}} + \epsilon^{\frac{1}{2}}) .
	\end{align*}
	In this expansion the term linear in $\pi_0 - \pi(\gamma_*)$ vanishes, whenever
	$\phi(Y,\beta_0,\pi_0)$  satisfies an influence function regularity condition analogous to \eqref{CharacterizeInfluenceH},
	and the term quadratic in $\pi_0 - \pi(\gamma_*)$  gives a contribution $ o_{P_0}(\epsilon^{\frac{1}{2}}) $.
} In the following, we focus on locally asymptotically linear estimators that satisfy \eqref{est_delta_hat},
under the conditions \eqref{Con:Unbiased} and \eqref{Con:EtaGradient}. Notice that, under local misspecification, the influence function $h(Y,\beta_0,\gamma_*)$ has no longer mean zero under $P_0$ in general.

Our goal in this paper is twofold. First, we will construct confidence intervals for the target parameter $\delta_{\beta_0,\pi_0}$ which are uniformly asymptotically valid on $\Gamma_{\epsilon}(\gamma_*)$. Second, an important goal of the analysis is to construct estimators $\widehat{\delta}=\widehat{\delta}(Y_1,\ldots,Y_n)$ that are asymptotically optimal in a minimax sense. For this purpose, we will show how to compute a function $h$ such that the (trimmed) worst-case mean squared error (MSE) $\mathbb{E}_{\beta_0,\pi_0} [ ( \widehat \delta  - \delta_{\beta_0,\pi_0} )^2 ]$ in the $\epsilon$-neighborhood $\Gamma_{\epsilon}(\gamma_*)$ of the reference model, among estimators of the form
\begin{equation}
\label{eq_delta_hat}
\widehat{\delta}_{h,\widehat{\beta},\widehat{\gamma}}=\delta_{\widehat{\beta},\pi(\widehat{\gamma})}+\frac{1}{n} \sum_{i=1}^n h(Y_i, \widehat \beta,\widehat \gamma),
\end{equation}
is minimized under our local asymptotic analysis. In fact, we will show how to compute estimators that minimize (trimmed) worst-case MSE among asymptotically linear estimators; see Theorem \ref{theo1} below for a precise statement. Here $\widehat{\beta}$  and $\widehat{\gamma}$ are
preliminary estimators of $\beta_0$ and $\gamma_*$ that are {root-$n$} consistent under correct specification. For example, $\widehat{\beta}$  and $\widehat{\gamma}$ may be maximum likelihood estimators (MLE) based on the reference model. It follows from \eqref{Con:Unbiased} and \eqref{Con:EtaGradient} that, under regularity conditions on the preliminary estimators, the form of the minimum-MSE $h$ function is not affected by the choice of 
$\widehat{\beta}$ and $\widehat{\gamma}$.

\paragraph{Examples (cont.)}

In our parametric example, a natural estimator is the MLE of $c'\beta_0$ based on the reference specification, such as the OLS estimator under the assumption that $\pi=0$; e.g., that the coefficient of an omitted control variable is zero. In a correctly specified likelihood setting, such an estimator will be consistent and efficient. However, when the reference model is misspecified, it may be dominated in terms of bias or MSE by other regular estimators.  

In our semi-parametric mixture example, a commonly used (``random-effects'') estimator of $\delta_{\beta_0,\pi_0}=\mathbb{E}_{\pi_0}\Delta(A,\beta_0)$ is obtained by replacing the population average by an integral with respect to the parametric distribution $\pi(\widehat{\gamma})$, where $\widehat{\gamma}$ is the MLE of $\gamma$. Another popular (``empirical Bayes'') estimator is obtained by substituting an integral with respect to the posterior distribution of $A$ based on $\pi(\widehat{\gamma})$.  We will compare the finite-sample performance of these estimators to that of our minimum-MSE estimator in our panel data illustration in Section \ref{Sec_numeric}. 
\hfill $\square$

\subsection{Heuristic derivation of the minimum-MSE estimator}

In this subsection, we provide heuristic derivations for the worst-case bias and the minimum-MSE estimator. This will lead to the main expressions in equations (\ref{eqbepsilon}) and (\ref{MMSEproblem}) below. In the next subsection, we will provide regularity conditions under which these derivations are formally justified.

We assume that $\Gamma_\epsilon(\gamma_*)$ is a convex set. For any linear map $u:\Pi \rightarrow \mathbb{R}$, we define\footnote{The definition in (\ref{DefDualNorm}) is stated for finite-dimensional $\pi$. This definition can be generalized to the case where $\pi$ is infinite-dimensional; see Appendix \ref{App_general} and Section \ref{Sec_param}. }
\begin{align}
    \left\|  u  \right\|_{\gamma_*,\epsilon}
    =
     \sup_{ \pi_0 \in  \Gamma_\epsilon(\gamma_*)} \, \epsilon^{-\frac{1}{2}} \,  u'(\pi_0 - \pi(\gamma_*)),\quad   \left\|  u  \right\|_{\gamma_*}=\limfunc{lim}_{\epsilon\rightarrow 0} \,  \left\|  u  \right\|_{\gamma_*,\epsilon} .
   \label{DefDualNorm}
\end{align}
We assume that the distance measure $d$ is chosen such that $ \left\|  \cdot  \right\|_{\gamma_*}$ is unique and well-defined,
and that it is a norm, dual to a local  approximation to $d( \pi_0, \pi(\gamma_*))$ for fixed $\pi(\gamma_*)$. Both our examples of distance measures -- weighted Euclidean distance and Kullback-Leibler divergence -- satisfy these assumptions.

We focus on estimators $\widehat{\delta}$ that satisfy (\ref{est_delta_hat}) for a suitable $h$ function for which \eqref{Con:Unbiased} and \eqref{Con:EtaGradient} hold. Under appropriate regularity conditions, the worst-case bias of $\widehat{\delta}$ in the neighborhood $\Gamma_\epsilon(\gamma_*)$ can be expanded for small $\epsilon$ and large $n$ as
\begin{align} 
&\sup_{\pi_0 \in \Gamma_\epsilon(\gamma_*)}
\left|   \mathbb{E}_{\beta_0,\pi_0} \widehat{\delta}  -\delta_{\beta_0,\pi_0}\right|=b_{\epsilon}(h,\beta_0,\gamma_*)+ o(n^{-\frac{1}{2}}+\epsilon^{\frac{1}{2}} ),\label{eqbias}\end{align} 
where
\begin{align} 
&b_{\epsilon}(h,\beta_0,\gamma_*)=\epsilon^{\frac{1}{2}} \, \left\|  \nabla_\pi \delta_{\beta_0,\pi(\gamma_*)} -  \mathbb{E}_{\beta_0,\pi(\gamma_*)} \, h(Y,\beta_0,\gamma_*) \; \nabla_\pi \log f_{\beta_0,\pi(\gamma_*)}(Y) \right\|_{\gamma_*} ,\label{eqbepsilon}\end{align} 
for $\|\cdot\|_{\gamma_*}$ the dual norm defined in (\ref{DefDualNorm}).\footnote{When $\pi$ is infinite-dimensional $\nabla_{\pi}$ denotes a G\^ateaux derivative.} 

Then, the worst-case MSE in $\Gamma_\epsilon(\gamma_*)$ can be expanded as follows, again under appropriate regularity conditions
(see Lemma~\ref{lemma:MSEapprox} in the appendix),
\begin{align}
\sup_{\pi_0\in \Gamma_\epsilon(\gamma_*)}   \mathbb{E}_{\beta_0,\pi_0} \left[ \left( \widehat \delta - \delta_{\beta_0,\pi_0} \right)^2 \right]
&= 
b_{\epsilon}(h,\beta_0,\gamma_*)^2
+  \frac{{\rm Var}_{\beta_0,\pi(\gamma_*)}(h(Y,\beta_0,\gamma_*))  } {n}
+ o(n^{-1}+\epsilon).\label{eqMSE}
\end{align}
We therefore define the {minimum-MSE} function $h_{\epsilon}^{\rm MMSE}(y,\beta_0,\gamma_*)$ as
\begin{align}
h_{\epsilon}^{\rm MMSE}( \cdot ,\beta_0,\gamma_*) & = \argmin_{h( \cdot ,\beta_0,\gamma_*)} \,
\Bigg\{  \epsilon \, \left\|   \nabla_\pi \delta_{\beta_0,\pi(\gamma_*)} -  \mathbb{E}_{\beta_0,\pi(\gamma_*)} \, h(Y,\beta_0,\gamma_*) \; \nabla_\pi \log f_{\beta_0,\pi(\gamma_*)}(Y) \right\|^2_{\gamma_*}  
\nonumber \\ & \quad \quad \quad \qquad 
+  \frac{{\rm Var}_{\beta_0,\pi(\gamma_*)}(h(Y,\beta_0,\gamma_*))  } {n}   \Bigg\}
  \quad 
\text{subject to \eqref{Con:Unbiased} and \eqref{Con:EtaGradient}}.
\label{MMSEproblem}
\end{align}

Finally, let $\widehat \beta$ and $\widehat \gamma$ be preliminary estimators
that are consistent for $\beta_0$ and $\gamma_*$ under the reference model $f_{\beta_0,\pi(\gamma_*)}$.
Then, the {minimum-MSE} estimator of $\delta_{\beta_0,\pi_0}$ is given by
\begin{align}
\widehat \delta\,^{\rm MMSE}_\epsilon = \delta_{\widehat \beta,\pi(\widehat \gamma)}+\frac 1 n \sum_{i=1}^n h_\epsilon^{\rm MMSE}(Y_i, \widehat \beta,\widehat \gamma).
\label{est_delta_hat_eta_hat}
\end{align}   
This estimator minimizes an asymptotic approximation to the worst-case MSE in $\Gamma_\epsilon(\gamma_*)$. Using a small-$\epsilon$ approximation is crucial for analytic tractability, since the variance term in (\ref{eqMSE}) only needs to be calculated under the reference model, and the optimization problem \eqref{MMSEproblem} is convex.
In practice, (\ref{MMSEproblem}) only needs to be solved at $ \widehat \beta$ 
and $ \widehat{\gamma}$. In addition, as we already pointed out, the form of the minimum-MSE estimator is not affected by the choice of the preliminary estimators $\widehat \beta$ and $\widehat{\gamma}$. 

The constraints on $h_{\epsilon}^{\rm MMSE}( \cdot ,\beta_0,\gamma_*) $ imposed in
\eqref{MMSEproblem} are the unbiasedness condition \eqref{Con:Unbiased}
and the local robustness condition (\ref{Con:EtaGradient}).
As we discussed above, given unbiasedness, local robustness is a regularity condition, and unbiasedness is a substantive condition that implies that our estimator 
$ \widehat \delta\,^{\rm MMSE}_\epsilon $ is only optimal within the class of estimators
that are asymptotically unbiased for $\delta_{\beta_0,\pi_0}$ under the reference model.

\paragraph{Special cases.}

To provide intuition about the minimum-MSE function $h^{\rm MMSE}_{\epsilon}$, let us define two Hessian matrices $H_{\beta\gamma}$, of size $\limfunc{dim}\beta+\limfunc{dim}\gamma$, and ${\cal{H}}_{\beta\pi}$, of size $\limfunc{dim}\beta+\limfunc{dim}\pi$, as\footnote{The definition of ${\cal{H}}_{\beta\pi}$ generalizes to the infinite-dimensional $\pi$ case; see Appendix \ref{App_general} and Section \ref{Sec_param}.
	}
\begin{align*}
 H_{\beta\gamma}&=   \mathbb{E}_{\beta_0,\pi(\gamma_*)}\left[ \nabla_{\beta\gamma} \log f_{\beta_0,\pi(\gamma_*)}(Y) \right]   \left[ \nabla_{\beta\gamma} \log f_{\beta_0,\pi(\gamma_*)}(Y) \right]',\\{\cal{H}}_{\beta\pi}&
= \mathbb{E}_{\beta_0,\pi(\gamma_*)}\left[ \nabla_{\beta\pi} \log f_{\beta_0,\pi(\gamma_*)}(Y) \right]   \left[ \nabla_{\beta\pi} \log f_{\beta_0,\pi(\gamma_*)}(Y) \right]'.
\end{align*}

Throughout our analysis, we assume that $H_{\beta\gamma}$ is invertible. This requires that the Hessian matrix of the parametric reference model be non-singular, thus requiring that $\beta_0$ and $\gamma_*$ be identified under the reference model. When $\epsilon=0$ we find that
\begin{align}
h_0^{\rm MMSE}( y ,\beta_0,\gamma_*) &=        \nabla_{\beta\gamma} \log f_{\beta_0,\pi(\gamma_*)}(y) '     H_{\beta\gamma}^{-1}     
\, \nabla_{\beta\gamma}  \delta_{\beta_0,\pi(\gamma_*)}  .
\label{SolutionEpsilonZero}  
\end{align}
Thus, under the assumption that the parametric reference model is correctly specified, $\widehat \delta^{\rm MMSE}_\epsilon$ is simply the one-step approximation to the MLE for $\delta_{\beta_0,\pi_0}$ that maximizes the likelihood with respect to the ``small'' parameter $(\beta',\gamma')'$. This ``one-step efficient'' adjustment is purely based on efficiency considerations. Such one-step approximations are classical estimators in statistics (e.g., Bickel \textit{et al.}, 1993).   

Another interesting special case of the minimum-MSE $h$ function arises in the limit $\epsilon \rightarrow \infty$, when the matrix or operator ${\cal{H}}_{\beta\pi}$ is invertible. Note that invertibility of ${\cal{H}}_{\beta\pi}$, which may fail when $\pi_0$ is not identified, is not needed in our analysis and we only use it to analyze this limiting case. We then have that
\begin{align}
\lim_{\epsilon \rightarrow \infty} h_\epsilon^{\rm MMSE}( y ,\beta_0,\gamma_*)
=      \left[ \nabla_{\beta\pi}\log f_{\beta_0,\pi(\gamma_*)}(y) \right]'     
\, {\cal{H}}_{\beta\pi}^{-1}          \, \nabla_{\beta\pi}  \delta_{\beta_0,\pi(\gamma_*)}   .
\label{OneStepFullMLE}   
\end{align}
In this limit $\widehat \delta\,^{\rm MMSE}_\epsilon$
is simply the one-step approximation to the MLE
for $\delta_{\beta_0,\pi_0}$ that maximizes the likelihood with respect to the ``large'' parameter $(\beta',\pi')'$. For any $\epsilon$, the estimator $ \widehat \delta\,^{\rm MMSE}_\epsilon $ is a nonlinear interpolation between the one-step MLE approximation to the parametric reference model and the one-step MLE approximation to the large model. We obtain one-step approximations in our approach, since \eqref{MMSEproblem} is only
a {local} approximation to the full MSE-minimization problem. 

However, an estimator based on (\ref{OneStepFullMLE}) may be ill-behaved in non point-identified problems, or in problems where the identification of $\pi_0$ is irregular. By contrast, $\widehat \delta\,^{\rm MMSE}_\epsilon$ is always
well-defined, since the variance of $h(Y,\beta_0,\gamma_*)$ acts as a sample size-dependent regularization. The form of $\widehat \delta\,^{\rm MMSE}_\epsilon$ is thus based on {both} efficiency and robustness. In addition, note that, while neither (\ref{SolutionEpsilonZero}) nor (\ref{OneStepFullMLE}) involve the particular choice of distance measure with respect to which neighborhoods are defined, for given $\epsilon>0$ the minimum-MSE estimator will depend on the chosen distance measure.

Lastly, it is common in applications with covariates to model the conditional distribution of outcomes $Y$ given covariates $X$ as $f_{\beta_0,\pi_0}(y\,|\, x)$, while leaving the marginal distribution of $X$, $f_X(x)$, unspecified. Our approach can easily be adapted to deal with such {conditional} models, as we will describe in Subsection \ref{sec:Covariates} in a locally quadratic setting.

\subsection{Properties of the minimum-MSE estimator}

In this subsection, we provide a formal characterization of the minimum-MSE estimator by showing that it achieves minimum worst-case MSE in a class of regular asymptotically linear estimators, as $n$ tends to infinity and $\epsilon n$ tends to a constant. All sequences  can thus be
equivalently indexed by $\epsilon$ or $n$; for example, $h_{\epsilon}$ in the following theorem could equivalently be indexed by $n$. Moreover, under the stated assumptions, the heuristic derivations of the previous subsection are formally justified. All proofs are in the appendix.

	\begin{theorem}
		\label{theo1}
		Let $n \rightarrow \infty$ and $\epsilon \rightarrow 0$
		such that $\epsilon  n \rightarrow c$, for some constant $c \in (0,\infty)$. Let Assumptions~\ref{ass:Expansion} and~\ref{ass:Expansion2} in Appendix \ref{App_general} hold, 
		let $\beta_0 \in {\cal B}$ and $\gamma_* \in {\cal G}$,
		and
		let $\widehat \delta_{\epsilon}=\widehat \delta_{\epsilon}(Y_1,\ldots,Y_n)$ be a sequence of 
		estimators with an influence function expansion of the form
		\begin{align}
	            \widehat \delta_{\epsilon}  &=  \delta_{\beta_0,\pi(\gamma_*)} 
	            +    \frac 1 n \sum_{i=1}^n  h_{\epsilon}(Y_i, \beta_0, \gamma_*)  
                     +  n^{-1/2} \,   R_n ,      
                     \label{DefRemainder1}
		\end{align}
		where   $R_n$ is a sequence of random  variables
		with 
		\begin{align*}
		  \sup_{\pi_0\in \Gamma_\epsilon(\gamma_*)}
      {\rm P}_{\beta_0,\pi_0} \left(     \left|     R_n \right| >  \log(n)   \right) &= o(1),
	&
		 \sup_{\pi_0 \in \Gamma_{\epsilon}(\gamma_*)}  \mathbb{E}_{\beta_0,\pi_0}
		\left[    R_n^2 \, \mathbbm{1}\left(  |R_n| \leq  2 \, \log(n)   \right)  \right] &= o( 1 ),
	\end{align*}	
		and $h_{\epsilon}(\cdot,\beta_0, \gamma_*)$ is
    a sequence of influence functions  that satisfy the constraints \eqref{Con:Unbiased} and \eqref{Con:EtaGradient},
		as well as $\sup_{\pi_0 \in \Gamma_{\epsilon}(\gamma_*)} \allowbreak
		 \mathbb{E}_{\beta_0,\pi_0}  \left| h_{\epsilon}(Y, \beta_0, \gamma_*)  \right|^{\kappa} = O(1)$, for some 
		$\kappa > 2$.
		We then have, for any sequence $m_n > 0$
                 with $m_n \rightarrow 0$ and $m_n \, n^{1/2}   \,  [\log(n)]^{-1}    \rightarrow \infty$, that
		\begin{align}
		\sup_{\pi_0\in \Gamma_\epsilon(\gamma_*)} 
		& \mathbb{E}_{\beta_0,\pi_0}  \left[ 
		\left( \widehat \delta^{\, \rm MMSE}_{\epsilon}  - \delta_{\beta_0,\pi_0} \right)^2 
		\mathbbm{1}\left( \left| \widehat \delta^{\, \rm MMSE}_{\epsilon}  - \delta_{\beta_0,\pi_0} \right|
		 \leq  m_n
		\right)		
		 \right]
	\nonumber \\ & \quad 	\leq 
		\sup_{\pi_0\in \Gamma_\epsilon(\gamma_*)}  \mathbb{E}_{\beta_0,\pi_0}  \left[ \left( \widehat \delta_{\epsilon}  - \delta_{\beta_0,\pi_0} \right)^2 
		\mathbbm{1}\left( \left| \widehat \delta_{\epsilon}   - \delta_{\beta_0,\pi_0} \right| \leq 
		m_n \right)
		\right]   
		+ o\left(\frac{1}{n}\right)  . \label{eq_star}
		\end{align}
	\end{theorem}

\vskip .3cm

We establish Theorem \ref{theo1} in a joint asymptotic sequence where $\epsilon $ tends to zero as $n$ tends to infinity and $\epsilon n $ tends to a finite positive constant. Under this sequence, the leading term in the worst-case MSE is of order $\epsilon$ (squared bias), or equivalently of order $1/n$ (variance).
The theorem considers a
trimmed MSE to allow for the possibility that the estimators for $ \delta_{\beta_0,\pi_0}$ do not have moments.
The trimming cutoff $m_n$ shrinks to zero at a rate slower than $n^{-1/2}$ (or equivalently $\epsilon^{1/2}$), so that
for estimators without heavy tails the leading-order bias and standard deviation should not be affected by the trimming.	The theorem states that the leading-order worst-case trimmed MSE achieved by our minimum-MSE estimator $\widehat \delta^{\, \rm MMSE}_{\epsilon}$ is at least as good as the one achieved by any other sequence of estimators satisfying our regularity conditions. All the assumptions on $\widehat \delta_{\epsilon}$ and $h_{\epsilon}(\cdot, \beta,\gamma)$ that we require for this result are  listed in the statement of the theorem. Note that in Theorem \ref{theo1} we {assume} that (\ref{DefRemainder1}) holds, subject to \eqref{Con:Unbiased} and \eqref{Con:EtaGradient}. On might conjecture that the result holds absent these conditions. However, our current proof crucially relies on them.\footnote{Condition (\ref{DefRemainder1}) is a form of local regularity of the sequence of estimators $\widehat{\delta}_{\epsilon}$. The additional regularity conditions in Assumptions~\ref{ass:Expansion} and~\ref{ass:Expansion2} are smoothness conditions on $f_{\beta_0,\pi_0}(y)$, $\delta_{\beta_0,\pi_0}$, $\pi(\gamma)$, and $d(\pi_0,\pi(\gamma))$ as functions of $\beta_0$, $\pi_0$, and $\gamma$, 
	and an appropriate  rate condition on the preliminary estimators $\widehat \beta$ and $\widehat \gamma$. In particular, in Assumption~\ref{ass:Expansion2}, we require the preliminary estimators
 $\widehat \beta$ and $\widehat \gamma$ to have moments
of order larger than two. This may require modifying the preliminary estimators to ensure that they have finite moments, as in for example Hausman \textit{et al.} (2011), who focus on GMM estimators.} {We will provide explicit expressions for $h_{\epsilon}^{\rm MMSE}$ in various models in the next section.}

\subsection{Confidence intervals\label{subsec_CI}}

In addition to point estimates, our framework allows us to compute confidence intervals that contain $\delta_{\beta_0,\pi_0}$ with pre-specified probability under our local asymptotic analysis. To see this, let $\widehat{\delta}$ be an estimator satisfying (\ref{est_delta_hat}), (\ref{Con:Unbiased}), and (\ref{Con:EtaGradient}). For a given confidence level $\alpha\in (0,1)$, let us define the following interval
\begin{eqnarray}
CI_{\epsilon}(1-\alpha,\widehat{\delta})= \left[\widehat{\delta}\pm\left(b_{\epsilon}\left(h,\widehat{\beta},\widehat{\gamma}\right)+\frac{\widehat{\sigma}_h}{\sqrt{n}}c_{1-\alpha/2}\right)\right],\label{CI_def}
\end{eqnarray}
where $b_{\epsilon}\left(\cdot\right)$ is given by (\ref{eqbepsilon}), $\widehat{\sigma}_h^2$ is the sample variance of $h(Y_1,\widehat{\beta},\widehat{\gamma}),\ldots , h(Y_n,\widehat{\beta},\widehat{\gamma})$, and $c_{1-\alpha/2}=\Phi^{-1}(1-\alpha/2)$ is the $(1-\alpha/2)$-standard normal quantile. Under suitable regularity conditions, the interval $CI_{\epsilon}(1-\alpha,\widehat{\delta})$ contains $\delta_{\beta_0,\pi_0}$ with probability {at least} $1-\alpha$ as $n$ tends to infinity and $\epsilon n$ tends to a constant, both under correct specification and under local misspecification of the reference model. Formally, we have the following result.

\begin{theorem}\label{theo_CI}
	Let $n \rightarrow \infty$ and $\epsilon \rightarrow 0$
	such that $\epsilon  n \rightarrow c$, for some constant $c \in (0,\infty)$. Let Assumptions \ref{ass:Expansion} and \ref{ass_CI} in Appendix \ref{App_general} hold,
	and also assume that the influence function $h$ of $\widehat{\delta}$ satisfies $ \sup_{\pi_0\in \Gamma_{\epsilon}(\gamma_*)}     \mathbb{E}_{\beta_0,\pi_0}  h^2(Y,\beta_0,\gamma_*) = O(1)$. Then we have
	\begin{equation}{\limfunc{inf}}_{\pi_0\in\Gamma_\epsilon(\gamma_*) }\,{\limfunc{Pr}}_{\beta_0,\pi_0} \left[\delta_{\beta_0,\pi_0}\in CI_{\epsilon}(1-\alpha,\widehat{\delta})\right]\geq 1-\alpha+o(1).\label{res_theo_CI}\end{equation}
	
\end{theorem}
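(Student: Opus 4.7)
The plan is to decompose the estimation error into a bias part controlled by $b_{\epsilon}(h,\widehat\eta)$ and a centered stochastic part for which the normal quantile $c_{1-\mu/2}$ provides the correct margin, all uniformly on $\Gamma_\epsilon$.

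First, using the asymptotic linear expansion \eqref{est_delta_hat} (which applies to $\widehat\delta$ uniformly on $\Gamma_\epsilon$ by Assumption \ref{ass_CI}), write
\begin{equation*}
\widehat\delta-\delta_{\theta_0}
= B(\theta_0,\eta) + V_n(\theta_0,\eta) + R_n,
\end{equation*}
where $B(\theta_0,\eta)=\delta_{\theta(\eta)}-\delta_{\theta_0}+\mathbb{E}_{\theta_0}h(Y,\eta)$ is the leading bias, $V_n(\theta_0,\eta)=\tfrac1n\sum_{i=1}^n\bigl[h(Y_i,\eta)-\mathbb{E}_{\theta_0}h(Y,\eta)\bigr]$ is the centered average, and $R_n=o_P(\epsilon^{1/2})+o_P(n^{-1/2})$ uniformly over $\Gamma_\epsilon$.

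Second, I would bound $|B(\theta_0,\eta)|$ by $b_\epsilon(h,\eta)+o(\epsilon^{1/2})$ uniformly on $\Gamma_\epsilon$. Taylor-expanding $\delta_{\theta_0}$ and $\mathbb{E}_{\theta_0}h(Y,\eta)$ in $\theta_0$ around $\theta(\eta)$, and using $\mathbb{E}_{\theta(\eta)}h(Y,\eta)=0$ from \eqref{Con:Unbiased} together with the score identity $\nabla_{\theta}\mathbb{E}_{\theta}h(Y,\eta)\big|_{\theta=\theta(\eta)}=\mathbb{E}_{\theta(\eta)}\!\bigl[h(Y,\eta)\nabla_\theta\log f_{\theta(\eta)}(Y)\bigr]$, yields
\begin{equation*}
B(\theta_0,\eta)=-\bigl[\nabla_\theta\delta_{\theta(\eta)}-\mathbb{E}_{\theta(\eta)}h(Y,\eta)\nabla_\theta\log f_{\theta(\eta)}(Y)\bigr]'\bigl(\theta_0-\theta(\eta)\bigr)+O\!\bigl(d(\theta_0,\theta(\eta))\bigr).
\end{equation*}
The dual-norm definition \eqref{DefDualNorm} then gives $|B(\theta_0,\eta)|\le b_\epsilon(h,\eta)+o(\epsilon^{1/2})$ uniformly on $\Gamma_\epsilon$.

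Third, I would handle the plug-in of $\widehat\eta$ and the stochastic terms. By the rate condition on $\widehat\eta$ (from Assumption \ref{ass:Expansion}/\ref{ass_CI}) and smoothness of $\eta\mapsto b_\epsilon(h,\eta)$, we have $b_\epsilon(h,\widehat\eta)=b_\epsilon(h,\eta)+o_P(\epsilon^{1/2})$. A uniform law of large numbers on $\Gamma_\epsilon$ (using the finite second-moment bound on $h$) gives $\widehat\sigma_h^{\,2}=\sigma_{h,\theta_0}^2+o_P(1)$, where $\sigma_{h,\theta_0}^2=\mathrm{Var}_{\theta_0}(h(Y,\eta))$, and a uniform Lindeberg CLT (which holds because $\mathbb{E}_{\theta_0}h^2(Y,\eta)=O(1)$ uniformly) yields $\sqrt{n}\,V_n(\theta_0,\eta)/\sigma_{h,\theta_0}\xrightarrow{d}\mathcal{N}(0,1)$ uniformly.

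Finally, I combine. Using $|B+V_n+R_n|\le|B|+|V_n+R_n|$, the event $\{\delta_{\theta_0}\in CI_\epsilon(1-\mu,\widehat\delta)\}$ is implied by
\begin{equation*}
|V_n+R_n|\le b_\epsilon(h,\widehat\eta)-|B(\theta_0,\eta)|+\tfrac{\widehat\sigma_h}{\sqrt n}c_{1-\mu/2}
=\tfrac{\widehat\sigma_h}{\sqrt n}c_{1-\mu/2}+o_P(\epsilon^{1/2}),
\end{equation*}
where the last equality uses Steps 2 and 3. Since $\sqrt n V_n/\sigma_{h,\theta_0}$ is asymptotically standard normal, $R_n=o_P(n^{-1/2})$, $\widehat\sigma_h/\sigma_{h,\theta_0}\to 1$, and $\epsilon^{1/2}=O(n^{-1/2})$ by our joint asymptotic regime, the probability of this event converges uniformly to $2\Phi(c_{1-\mu/2})-1=1-\mu$, giving \eqref{res_theo_CI}. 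The main obstacle is establishing the uniform CLT and the uniform control of $R_n$ and $\widehat\sigma_h^{\,2}$ over $\Gamma_\epsilon$; this is why the hypotheses must upgrade the pointwise regularity in Assumption \ref{ass:Expansion} to the uniform-in-$(\theta_0,\eta)$ form encoded in Assumption \ref{ass_CI}.
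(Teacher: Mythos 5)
Your decomposition and argument track the paper's proof almost exactly: the same split of $\widehat{\delta}-\delta_{\theta_0}$ into a deterministic bias term, a centered sample average, and a remainder $\widehat{R}_\eta$; the same bound $|\delta_{\theta_0}-\delta_{\theta(\eta)}-\mathbb{E}_{\theta_0}h(Y,\eta)|\le b_\epsilon(h,\eta)+o(\epsilon^{1/2})$ obtained from the linearization and the dual norm (the paper recycles the argument of Lemma \ref{lemma:Expansions}(ii) for this); the same Lipschitz control of $b_\epsilon(h,\widehat{\eta})-b_\epsilon(h,\eta)$ via the $O(\epsilon^{1/2})$ bound on $\nabla_\eta b_\epsilon$; and the same one-sided implication between events to obtain the coverage lower bound.

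The one place you overreach is the parenthetical claim that the uniform CLT for $\sqrt{n}\,V_n/\sigma_{h,\theta_0}$ ``holds because $\mathbb{E}_{\theta_0}h^2(Y,\eta)=O(1)$ uniformly.'' A uniformly bounded second moment does not deliver a CLT uniformly over a class of distributions; one needs uniform integrability of $h^2$ under the $\theta_0$'s, e.g.\ a uniform $2+\nu$ moment bound or a uniform Lindeberg condition. The paper does not derive this step at all: part (iii) of Assumption \ref{ass_CI} simply postulates the required uniform asymptotic normality (for all sequences $a_n=c_{1-\mu/2}+o(1)$), together with $\sigma_h(\theta_0,\eta)$ bounded away from zero. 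Your closing remark correctly anticipates that the assumption set must encode this uniformity, so the structure of your argument survives; just replace the false implication by an appeal to that assumption. A further cosmetic point: the displayed ``equality'' $b_\epsilon(h,\widehat{\eta})-|B(\theta_0,\eta)|+\widehat{\sigma}_h c_{1-\mu/2}/\sqrt{n}=\widehat{\sigma}_h c_{1-\mu/2}/\sqrt{n}+o_P(\epsilon^{1/2})$ should be a lower bound (``$\ge$'' up to $o_P(\epsilon^{1/2})$), since $|B(\theta_0,\eta)|$ can be strictly smaller than the worst-case bias at non-extremal $\theta_0$; the inequality is all that is needed for the one-sided coverage claim.
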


\vskip .3cm

Such ``fixed-length'' confidence intervals, which take into account both misspecification bias and sampling uncertainty, have been studied in different contexts (e.g., Donoho, 1994, Armstrong and Koles\'ar, 2020, 2021).\footnote{A variation suggested by these authors, which reduces the length of the interval, is to compute the interval as $\widehat{\delta}\pm$ $b_{\epsilon}(h,\widehat{\beta},\widehat{\gamma})$ times the $(1-\alpha)$-quantile of $\left|{\cal{N}}\left(1,\widehat{\sigma}_h^2/(nb_{\epsilon}(h,\widehat{\beta},\widehat{\gamma})^2)\right)\right|$.\label{Fig_CIftnoteAK}}

\section{Locally quadratic case: applications to parametric models and semi-parametric mixture models\label{Sec_param}}

In this section, we first derive explicit expressions for minimum-MSE estimators in a class of models that have a locally quadratic structure; that is, where the dual norm $\|\cdot\|_{\gamma_*}$ can be associated with an inner product. We then apply these results to parametric models and semi-parametric mixture models. Our motivation for focusing on these settings is that they allow us to develop practical implementation methods. We have not explored implementation in other models with infinite-dimensional $\pi$ parameters.

Formally, let us define the tangent space $\overline{{\cal{T}}}$ of the parameter space $\Pi$ at $\pi(\gamma^*)$, where for simplicity we ignore the dependence on $\gamma^*$ in the notation.\footnote{The tangent space at $\pi(\gamma^*)$ includes all directions in which one can pass tangentially through $\pi(\gamma^*)$. See Barden and Thomas (2003) for a formal exposition.} Let us then define the cotangent space ${\cal T}$ as the set of linear maps $u:\overline{{\cal{T}}}\rightarrow \mathbb{R}$. Throughout this section, we assume that  ${\cal T}$ is a Hilbert space equipped with the norm $\left\|\cdot\right\|_{\gamma_*}$. This locally quadratic structure characterizes our two leading examples of parametric and semi-parametric mixture models. In such cases the tangent space (at an interior $\pi(\gamma^*)$) is simply $\Pi$, and the cotangent space is the set of linear maps $u:\Pi\rightarrow \mathbb{R}$.

\subsection{Characterization of the minimum-MSE estimator\label{sec:LocallyQuadratic}}

Consider the case where the square of the local dual norm defined in \eqref{DefDualNorm} can be written as $\left\|  u  \right\|^2_{\gamma_*} = u^{\top}   u  $,
where $u^{\top}   w$ represents some  inner product of elements $u$ and $w$ of the cotangent space ${\cal T}$ 
of $\Pi$ at $\pi(\gamma_*)$. For conciseness, from now on we will remove the subscripts $\beta_0$, $\gamma_{*}$, and $\pi(\gamma_*)$ throughout, unless there is a risk of confusion. In particular, unless otherwise noted, all expectations will be evaluated under the reference model. Here, $\pi$ can be finite-dimensional as in parametric models (which we analyze in Subsection \ref{subsec_par}), or infinite-dimensional as in semi-parametric mixture models where $\pi$ is a density (studied in Subsection \ref{subsec_semipar}). 

Let us start by introducing some notation. Let $s_{\beta\gamma}(y)=\nabla_{\beta\gamma} \log f(y) $ and $s_{\pi}(y)=\nabla_{\pi} \log f(y) $ denote the components of the score. We define the Hessian  operators $H_{\pi} : {\cal T} \rightarrow {\cal T}$, $H_{\pi,\beta\gamma}:\mathbb{R}^{\limfunc{dim}\beta+\limfunc{dim}\gamma}\rightarrow{\cal{T}}$, and $H_{\beta\gamma,\pi}:{\cal{T}}\rightarrow\mathbb{R}^{\limfunc{dim}\beta+\limfunc{dim}\gamma}$ by\footnote{
	Formally, $u^{\top}$ is an element of the tangent space of $\Pi$ at $\pi(\gamma_*)$; that is,
	$u \mapsto u^{\top}$ represents a linear mapping from the cotangent space ${\cal T}$ to the tangent space $\overline {\cal T}$. %
	}
$$  H_{\pi} = \mathbb{E}  s_\pi (Y) s_\pi (Y)^{\top},\quad H_{\pi,\beta\gamma}   =   \mathbb{E} s_{\pi} (Y)   s_{\beta\gamma} (Y)' ,\quad H_{\beta\gamma,\pi}  =  \mathbb{E} s_{\beta\gamma} (Y)s_\pi(Y) ^{\top}.$$ In addition, we define the following projected versions of the gradient $\widetilde \nabla_{\pi}  
= \nabla_{\pi}  
-  H_{\pi,\beta\gamma}    H_{\beta\gamma}^{-1}     \nabla_{\beta\gamma}   $, score $\widetilde s_{\pi}(y)=s_{\pi}(y)-H_{\pi,\beta\gamma}    H_{\beta\gamma}^{-1}     s_{\beta\gamma}(y) $, and Hessian 
$ \widetilde H_{{\pi}}   
=   H_{\pi}   -  H_{\pi,\beta\gamma}   H_{\beta\gamma}^{-1}H_{\beta\gamma,\pi}$. 

The next lemma characterizes the minimum-MSE $h$ function in the locally quadratic case.

\begin{lemma}\label{lem_locquad}
	For the locally quadratic case of this section, the three following equivalent characterizations of $h_{\epsilon}^{\rm MMSE}$ defined in \eqref{MMSEproblem} hold:
\begin{align}
h_{\epsilon}^{\rm MMSE}(y)
= &   s_{\beta\gamma}(y) '     H_{\beta\gamma}^{-1}     
\, \nabla_{\beta\gamma}  \delta +   (\epsilon n)  \,  \widetilde s_{\pi}(y) ^\top
   \left(  \widetilde \nabla_{\pi} \delta -\mathbb{E}\left[h_{\epsilon}^{\rm MMSE}(Y) \widetilde s_{\pi} (Y)  
\right]\right)  
\label{SolutionMMSE_linsys}     \\
=&     s_{\beta\gamma} (y) '     H_{\beta\gamma}^{-1}     
\, \nabla_{\beta\gamma}  \delta  +   (\epsilon n) \,  \widetilde s_{\pi} (y)  ^\top
  \left(   \nabla_{\pi} \delta -\mathbb{E}\left[h_{\epsilon}^{\rm MMSE}(Y)  s_{\pi} (Y)  
\right]\right)    
\label{SolutionMMSE_rewrite}     \\
	= &    s_{\beta\gamma} (y) '    H_{\beta\gamma} ^{-1}     
	\, \nabla_{\beta\gamma}  \delta+     \widetilde s_{\pi} (y)  ^\top
	\,  \left[  \widetilde H_{{\pi}}  + (\epsilon n)^{-1}  \mathbb{I} \right]^{-1}
	\, \widetilde \nabla_{\pi} \delta   ,
	\label{SolutionMMSE}     
	\end{align}
	where $ \mathbb{I} $ denotes the identity map on ${\cal T}$.
	
\end{lemma}

\subsection{Covariates}
\label{sec:Covariates}

So far in our presentation, we have abstracted from covariates.
We now consider the case where in addition to the outcomes $Y_i$ we observe a vector of covariates $X_i$. We assume that $(Y_i,X_i)$ are randomly drawn from a conditional distribution of $Y_i$ given $X_i$ given by the model
$f_{\beta_0,\pi_0}(y\,|\,x)$, and an unrestricted marginal distribution $f_X$ of $X_i$. Our parameter of interest is $\delta_{\beta_0,\pi_0,f_X} =  \mathbb{E}_{f_X} \delta_{\beta_0,\pi_0}(X)$, where $\mathbb{E}_{f_X}$ denotes
an expectation over $f_X$. We consider estimators of the form 
$$\widehat{\delta}_h=
\frac 1 n \sum_{i=1}^n  \, \delta_{\widehat \beta,\pi(\widehat \gamma)}(X_i)+\frac 1 n \sum_{i=1}^n  \,h (Y_i,X_i,\widehat \beta,\widehat \gamma,\widehat f_X ) ,$$
where $\widehat \beta$ and $\widehat \gamma$ are preliminary estimates whose probability limits are
$\beta_0$ and $\gamma_*$,
and $ \widehat f_X$ is the empirical distribution of $X_i$ in the sample. While $f_X$ is unknown and infinite-dimensional, it only enters into our object
of interest (and the expression of $h_{\epsilon}^{\rm MMSE}$ below) as an expectation, and the corresponding sample average is still estimated at the $\sqrt{n}$-rate. 
We have the following characterization of the minimum-MSE influence function.
\begin{lemma}\label{lem_locquad_cov}
	Consider the locally quadratic case of this section. Let $H_{\beta\gamma}(x)$ and $H_{\pi,\beta\gamma}(x) $ be conditional counterparts to $H_{\beta\gamma}$ and $H_{\pi,\beta\gamma}$, and likewise let $s_{\beta\gamma}(y\,|\, x)$, $s_{\pi}(y\,|\, x)$ and $\widetilde s_{\pi}(y\,|\, x)=s_{\pi}(y\,|\, x)-
	\left[  \mathbb{E}_{f_X}H_{\pi,\beta\gamma}(X) \right] [\mathbb{E}_{f_X}H_{\beta\gamma}(X)]^{-1}s_{\beta\gamma}(y\,|\, x)$ denote (projected) scores in the conditional model. We have
	\begin{align}
	&h_{\epsilon}^{\rm MMSE}(y,x)
	\,=\,     \delta(x)-\mathbb{E}_{f_X}\delta(X) +  s_{\beta\gamma}(y\,|\, x) '     [\mathbb{E}_{f_X}H_{\beta\gamma}(X)]^{-1}     
	\, \mathbb{E}_{f_X}\nabla_{\beta\gamma}  \delta(X) 
	\nonumber  \\& \quad \quad \quad \quad\quad \quad\quad \quad    +   (\epsilon n)  \widetilde s_{\pi} (y\,|\, x) ^\top
	\left\{  \mathbb{E}_{f_X}\widetilde{\nabla}_{\pi} \delta(X) -\mathbb{E}_{f_X}\mathbb{E}\left[ h_{\epsilon}^{\rm MMSE}(Y,X) \widetilde s_{\pi} (Y\,|\, X)\right]   \right\} ,
	\label{SolutionMMSE_COV}    
	\end{align}
	with analogous counterparts to (\ref{SolutionMMSE_rewrite}) and (\ref{SolutionMMSE}).
\end{lemma}

A first difference between \eqref{SolutionMMSE_linsys}  and  \eqref{SolutionMMSE_COV} 
 is that various expectations over $f_X$ occur here, which we will replace by sample averages 
when calculating the  estimator $\widehat{\delta}^{\rm MMSE}_\epsilon$ in practice.
A second difference comes from the term $ \delta(x)-\mathbb{E}_{f_X}\,\delta(X)$.
However, this term does not contribute to $\widehat{\delta}^{\rm MMSE}_\epsilon$,
since its sample average is zero once we replace $f_X$ by the empirical distribution $\widehat f_X$.\footnote{The term $ \delta(x)-\mathbb{E}_{f_X}\,\delta(X)$ ensures that $h_{\epsilon}^{\rm MMSE}$ is locally robust with respect to $f_X$ in the sense of (\ref{Con:EtaGradient}).}

\subsection{Parametric models\label{subsec_par}}

A simple locally quadratic example is a parametric model where $\pi$ is finite-dimensional, and the distance measure over $\pi$ is based on a weighted Euclidean metric $\|\cdot\|_{\Omega}$ for a positive definite weight matrix $\Omega$. Here we treat $\Omega$ and the neighborhood size $\epsilon$ as known. We will discuss the choice of $\Omega$, and the interpretation of $\epsilon$, in Section \ref{Sec_epsilon}.

 The small-$\epsilon$ approximation to the bias of $\widehat{\delta}$ is given by (\ref{eqbepsilon}), with $\|\cdot\|_{\gamma_*}=\|\cdot\|_{\Omega^{-1}}$, where $\Omega^{-1}$ is the inverse of $\Omega$. 
In this case, for vectors $u,w \in \mathbb{R}^{\dim \pi}$ we have
$  u^\top \,  w  = u'  \, \Omega^{-1}  w $. Let $$\mathbb{H}_\pi= \mathbb{E}\left[ s_{\pi} (Y) s_{\pi} (Y)' \right] ,\quad  \widetilde{\mathbb{H}}_\pi= \mathbb{E}\left[ s_{\pi} (Y) s_{\pi} (Y)' \right]-\mathbb{E}\left[ s_{\pi} (Y) s_{\beta\gamma} (Y)' \right]H_{\beta\gamma}^{-1}\mathbb{E}\left[  s_{\beta\gamma} (Y)s_{\pi} (Y)' \right],$$ be the usual parametric Hessian matrices. We have $H_{\pi} = {\mathbb{H}}_\pi\Omega^{-1}$, $
H_{\pi,\beta\gamma}  =  \mathbb{E}\left[ s_{\pi} (Y) s_{\beta\gamma} (Y)' \right] $, $H_{\beta\gamma,\pi}=H_{\pi,\beta\gamma}'\Omega^{-1}$,
$    \widetilde s_{\pi}  
= s_{\pi}  
-  H_{\pi,\beta\gamma} H_{\beta\gamma}^{-1}     s_{\beta\gamma}   $,
and
$\widetilde H_{{\pi}}   =   \widetilde{\mathbb{H}}_\pi\Omega^{-1} $. From (\ref{SolutionMMSE}) we then obtain the following.

\begin{corollary}{(parametric models)}\label{SolutionMMSE_para}
\begin{align*}
h_{\epsilon}^{\rm MMSE}(y)
&=     s_{\beta\gamma} (y) '    H_{\beta\gamma} ^{-1}     
\, \nabla_{\beta\gamma}  \delta+       \widetilde s_{\pi} (y) '  
\,  \Omega^{-1}\left[  \widetilde H_{{\pi}} + (\epsilon n)^{-1} I \right]^{-1}  \, \widetilde \nabla_{\pi} \delta  ,
\end{align*}where $I$ is the identity matrix of size $\limfunc{dim}\pi$.
\end{corollary} 

In addition to the ``one-step efficient'' adjustment $h_{0}^{\rm MMSE} = s_{\beta\gamma} '    H_{\beta\gamma} ^{-1}     
 \nabla_{\beta\gamma}  \delta$, the minimum-MSE function $h_{\epsilon}^{\rm MMSE}$ in Corollary \ref{SolutionMMSE_para} thus provides a further adjustment that is motivated by robustness concerns. It is easy to generalize this formula to account for conditioning covariates whose distribution is unspecified, as in Lemma \ref{lem_locquad_cov}.

It is interesting to compute the limit of the MSE-minimizing $h$ function as $\epsilon$ tends to infinity in the case where
$H_\pi$ is invertible. This leads to the following expression, which is identical to \eqref{OneStepFullMLE},
\begin{align}
\lim_{\epsilon \rightarrow \infty}  h_{\epsilon}^{\rm MMSE}(y)
&=   s_{\beta\gamma} (y) '    H_{\beta\gamma} ^{-1}     
\, \nabla_{\beta\gamma}  \delta +     \widetilde s_\pi(y)'  
\;   \widetilde{\mathbb{H}}_\pi^\dagger  \; \; \widetilde \nabla_{\pi} \delta  ,
\label{OneStepFullMLEv2}     
\end{align}
where $\widetilde{\mathbb{H}}_\pi^\dagger$ denotes the Moore-Penrose generalized inverse of $\widetilde{\mathbb{H}}_\pi$. Comparing (\ref{OneStepFullMLEv2}) and Corollary \ref{SolutionMMSE_para} shows that the optimal  $\widehat \delta\,^{\rm MMSE}_\epsilon$ 
is a Ridge-regularized version of the one-step full MLE, where $(\epsilon n)^{-1}  I $  regularizes the projected Hessian matrix $\widetilde{ H}_{\pi}=\widetilde{\mathbb{H}}_\pi\Omega^{-1} $. Our ``robust'' adjustment remains well-defined under singularity, and it accounts for small or zero eigenvalues of the Hessian in an MSE-optimal way.

\paragraph{A linear regression example.}

Studying a linear regression model helps to illustrate some of the main features of our approach. Consider the model
\begin{eqnarray*}
	Y=X'\beta+U ,\quad 
	X=C Z+V,
\end{eqnarray*}
where $Y$ is a scalar outcome, and $X$ and $Z$ are random vectors of covariates and instruments, respectively, $\beta$ is a $\limfunc{dim}X$ parameter vector, and $C$ is a $\limfunc{dim}X\times \limfunc{dim}Z$ matrix. We assume that $U=\pi' V+\xi$, where $\xi$ is normal with zero mean and variance $\sigma^2$, independent of {$X,Z,V$}, and $V$ is normal with zero mean and non-singular covariance matrix $\Sigma_V$, independent of $Z$. Let $\Sigma_Z $ be the covariance matrix of $Z$, and let $\Sigma_X=C \Sigma_ZC'+\Sigma_V$. For simplicity we assume that $C$, $\Sigma_V$, $\Sigma_Z$, and $\sigma^2$ are known, and we take $\Omega=I$ to be the identity matrix. The parameters are thus $\beta$ and $\pi$. In the reference model we take $\pi=0$, hence treating $X$ as exogenous whereas the larger model allows for endogeneity. The target parameter is $\delta_{\beta_0,\pi_0}=c'\beta_0$, for a known $\limfunc{dim}\beta\times 1$ vector $c$.

From (\ref{SolutionMMSE_COV}) we have\footnote{In this case, there is no $\gamma$ parameter, $s_\beta (y,x\,|\,z)=\frac{1}{\sigma^2}x(y-x'\beta_0)$, $s_\pi (y,x\,|\,z)=\frac{1}{\sigma^2}(x-C z)(y-x'\beta_0)$, ${\mathbb{E}}_{f_Z}  H_{\beta}(Z)   = \frac{1}{\sigma^2}\Sigma_X$, $\widetilde\nabla_{\pi}   = \nabla_\pi - \Sigma_V\Sigma_X^{-1}\nabla_\beta$, and ${\mathbb{E}}_{f_Z}  \widetilde{H}_{\pi}(Z)   =  \frac{1}{\sigma^2}(\Sigma_V-\Sigma_V\Sigma_X^{-1}\Sigma_V)$.}
\begin{align}\label{eq_ex1_A}
&h_{\epsilon}^{\rm MMSE}(y,x,z)
=        (y-x'\beta_0)x'     \Sigma_X ^{-1}     
\, c \notag\\&\, \quad \,  -     (y-x'\beta_0) \left[  (x-C z)  - \Sigma_V\Sigma_X^{-1}x\right]'    \left[ \left(\Sigma_V-\Sigma_V\Sigma_X^{-1}\Sigma_V\right)   + (\epsilon n)^{-1}  \sigma^2 I \right]^{-1}  \, \Sigma_V\Sigma_X^{-1}c.
\end{align}
Hence, when $\epsilon=0$ the minimum-MSE estimator of $c'\beta_0$ is the ``one-step efficient'' adjustment in the direction of the OLS estimator, with influence function $h_{0}^{\rm MMSE}(y,x,z)
=     (y-x'\beta_0)x'     \Sigma_X ^{-1}     
\, c$. As $\epsilon $ tends to infinity, assuming $C \Sigma_ZC'$ is invertible, it follows from (\ref{eq_ex1_A}) that
\begin{align*}
&\lim_{\epsilon \rightarrow \infty} h_{\epsilon}^{\rm MMSE}(y,x,z)
=       (y-x'\beta_0) \left[  C z\right]'    \left[ C \Sigma_ZC'\right]^{-1}  \, c,
\end{align*}
which is the influence function of the IV estimator.

For given $\epsilon>0$ and $n$, our adjustment remains well-defined even when $C \Sigma_ZC'$ is singular.\footnote{Note that in the absence of an instrument $Z$, the minimum-MSE estimator coincides with the one-step efficient adjustment in the direction of the OLS estimator.\label{ftnote1}} When $c'\beta_0$ is identified (that is, when $c$ belongs to the range of $C$), the minimum-MSE estimator remains well-behaved as $\epsilon n$ tends to infinity, otherwise setting a finite $\epsilon$ value is needed to control the increase in variance. The term $(\epsilon n)^{-1}$ in (\ref{eq_ex1_A}) acts as a form of regularization, akin to Ridge regression. In Appendix \ref{App_GMM}, we show how to extend the parametric setting of this subsection to models defined by moment restrictions, and we revisit this example while dropping the normality assumptions. \hfill $\square$

\subsection{Semi-parametric mixture models\label{subsec_semipar}}

We now consider a class of semi-parametric models, where the distribution of outcomes $Y$ conditional on unobserved latent variables $A\in {\cal A}$
is described  parametrically by $Y \, \big|\, A \sim g_{\beta_0}(\cdot\, |\,A)$, with finite-dimensional unknown parameter $\beta_0 \in {\cal B}$,
while the distribution of $A \sim \pi_0$ is left unrestricted in the ``large'' correctly specified model. Here $\Pi$ is the set of probability distributions over $ {\cal A}$. The distribution of observed outcomes as a function of the unknown parameters $\beta_0 \in {\cal B}$ and $\pi_0 \in \Pi$ is given by
\begin{align}
   f_{\beta_0,\pi_0}(y) = \int_{\cal A} \, g_{\beta_0}(y|a) \, \pi_0(a) \, da.
   \label{ModelSemiparametric}
\end{align}
The parameter of interest is a functional of $\beta_0$ and $\pi_0$, which takes the form of an expectation over $A$; that is,
$$\delta_{\beta_0,\pi_0} =  \mathbb{E}_{\pi_0} \, \Delta_{\beta_0}(A)
= \int_{\cal A}  \, \Delta_{\beta_0}(a)  \, \pi_0(a) \, da ,
$$
where $ \Delta_{\beta_0}(a)$ is a known function of $\beta_0$ and $a$.

In Section \ref{Sec_numeric}, we will illustrate this setup in two binary choice models: a cross-sectional model, and a dynamic panel data model. In the first case, $A$ is an error term independent of covariates, normally distributed under the reference model. In the second case, $A$ is a latent individual effect correlated with initial conditions, specified using a parametric correlated random-effects reference model (Chamberlain, 1984). In both models, we will estimate average effects, which are expectations with respect to the distribution of $A$. Our approach will provide insurance against misspecification of the parametric functional forms.

Let us specify a parametric reference model for the distribution of the latent variables $A$, and denote the reference density by $\pi(\gamma)$, where $\gamma$ is a finite-dimensional parameter. Under the reference model, the distribution of outcomes is given by $f_{\beta_0,\pi(\gamma_*)}(y) = \int_{\cal A} \, g_{\beta_0}(y\,|\,a) \, \pi(a\,|\,\gamma_*) \, da
$. However, this model may be misspecified, and we assume that the true distribution $\pi_0$
belongs to the  neighborhood
$\Gamma_{\epsilon}(\gamma_*) = \{\pi_0 \in \Pi \,:\,d( \pi_0, \pi(\gamma_*) )\leq \epsilon\}$,
which we define here in terms of the  Kullback-Leibler (KL) divergence
$d( \pi_0, \pi(\gamma_*)  ) = 2 \, \mathbb{E}_{\pi_0}  \log[\pi_0(A) / \pi(A\,|\,\gamma_*)]$.

We are going to derive the expression of the minimum-MSE estimator by applying (\ref{SolutionMMSE_rewrite}). In this setting, elements of the cotangent space ${\cal T}$ of $\pi(\gamma)$ at $\gamma_*$
can be represented by functions $u : {\cal A} \mapsto \mathbb{R}$. For example, the gradient $\nabla_{\pi} \delta_{\beta,\pi} $ is a cotangent element, which can be represented by the function $\Delta_\beta(\cdot)$.\footnote{Note that, since $\pi$ integrates to one (and therefore tangent space elements integrate to zero), one can equivalently represent $\nabla_{\pi} \delta_{\beta,\pi} $ as $\Delta_\beta(\cdot)-c$ for any constant $c$. A possible choice is $c=\mathbb{E}_{\pi(\gamma_*)}\Delta_\beta(A)$. } For elements $u,w \in {\cal T}$ we define their inner product by
$ u^\top \, w     =  {\rm Cov}_{\pi(\gamma_*)}\left[ u(A) , w(A) \right] $;
that is, the corresponding squared norm in \eqref{DefDualNorm} is $
\left\|  u  \right\|^2_{\gamma_*} =  {\rm Var}_{\pi(\gamma_*)} \left[  u(A) \right] $. One can show that this norm is indeed the dual to a suitable local approximation to the KL divergence
as defined in \eqref{DefDualNorm}; see Appendix~\ref{App_der_sec3}.

Let us omit again parameter subscripts from the notation for conciseness. From \eqref{ModelSemiparametric}, we see that
 $s_{\pi}(y)=\nabla_{\pi} \log f(y)$ can be represented by the function $g(y\,|\,a) / f(y) $. As a result,
for  $u \in {\cal T}$ we have
\begin{align*}
  s_{\pi} (y) ^\top \, u
     &=  {\rm Cov}\left[ u(A) ,  \frac{g(y\,|\,A)} {f(y) } \right] 
     =  \mathbb{E}\left[ u(A) \,\big|\, Y=y\right]-\mathbb{E} u(A)  ,
\end{align*}
where we have used that $\mathbb{E} g(y\,|\,A) / f(y)  =1$. In addition, we have
\begin{align*}
 \widetilde s_{\pi} (y) ^\top \, u
&= \mathbb{E}\left[ u(A) \,\big|\, Y=y\right]-\mathbb{E} u(A) - s_{\beta\gamma} (y)'H_{\beta\gamma}^{-1}\mathbb{E}\left[  s_{\beta\gamma} (Y)u(A) \right],
\end{align*}
and, for any function $h$, $\mathbb{E}[h(Y)s_{\pi}(Y)]$ can be represented by the function $\mathbb{E}[h(Y)\,|\, A=a]$. 

Rewriting the first-order condition in equation \eqref{SolutionMMSE_rewrite}, we thus obtain the following result, which shows that $h_{\epsilon}^{\rm MMSE}$ is the solution to a linear system.

\begin{corollary}{(semi-parametric mixture models)}\label{SolutionMMSE_semiparam}
\begin{align*}
h_{\epsilon}^{\rm MMSE}(y)
=    &  s_{\beta\gamma} (y) '     H_{\beta\gamma}^{-1}     
\, \nabla_{\beta\gamma}  \delta  +   (\epsilon n) \bigg\{\mathbb{E}\left[\Delta(A)-\delta-\overline{h}_{\epsilon}^{\rm MMSE}(A)\,|\, Y=y\right]\notag\\
&\quad \quad \quad\quad \quad \quad  \quad \quad \quad \quad -s_{\beta\gamma} (y)'H_{\beta\gamma}^{-1}\mathbb{E} \left[s_{\beta\gamma} (Y) \left(\Delta(A)-\overline{h}_{\epsilon}^{\rm MMSE}(A)\right)     \right]\bigg\},     
\end{align*}	
	where $\overline{h}_{\epsilon}^{\rm MMSE}(a):=\mathbb{E}[h_{\epsilon}^{\rm MMSE}(Y)\,|\, A=a]$. 
\end{corollary}

 Corollary \ref{SolutionMMSE_semiparam} can be generalized to account for conditioning covariates $X$. We now apply Lemma \ref{lem_locquad_cov} to provide two generalizations, which we will use in the two examples in Section \ref{Sec_numeric}. In the first one, we assume that $A$ and $X$ are independent under $\pi_0$. This is the case in our cross-sectional illustration in Subsection \ref{subsec_bin}, where $A$ is an error term independent of $X$. In this case, $\pi_0$ is the marginal distribution of $A$. We then have the following characterization.

\begin{corollary}{(semi-parametric mixture models, independent covariates)}\label{MMSE_prob}
\begin{align*}
h_{\epsilon}^{\rm MMSE}(y,x)&=\delta(x)-\mathbb{E}_{f_X}\delta(X)+s_{\beta\gamma}(y\,|\, x)'     [\mathbb{E}_{f_X}H_{\beta\gamma}(X)]^{-1}    
\, \mathbb{E}_{f_X}\nabla_{\beta\gamma}\delta (X)  \notag\\& +   (\epsilon n)  \bigg\{\mathbb{E}\left[\mathbb{E}_{f_X}\left[\Delta(A,X)\right]-\mathbb{E}_{f_X}\delta(X)-\overline{h}_{\epsilon}^{\rm MMSE}(A)\,\big|\, Y=y,X=x\right]\notag \\
&  - s_{\beta\gamma}(y\,|\, x)' [\mathbb{E}_{f_X}H_{\beta\gamma}(X)]^{-1}    \mathbb{E}_{f_X}\mathbb{E}\,\bigg[s_{\beta\gamma}(Y\,|\, X)\left(\mathbb{E}_{f_X}\left[\Delta(A,X)\right]-\overline{h}_{\epsilon}^{\rm MMSE}(A)\right)\bigg]\bigg\},
\end{align*}
where here $\overline{h}_{\epsilon}^{\rm MMSE}(a):=\mathbb{E}_{f_X}[\mathbb{E}(h_{\epsilon}^{\rm MMSE}(Y,X)\,|\,  A=a,X)]$.	
\end{corollary}

In the second generalization, we leave the joint distribution of $(A,X)$ unrestricted under $\pi_0$. This is the case in our panel data illustration in Subsection \ref{subsec_panel},  where $A$ is an individual effect that may be correlated with $X$. In this case, $\pi_0$ is the conditional distribution of $A$ given $X$, and we measure the distance between conditional distributions using $d( \pi_0, \pi(\gamma_*)  ) = 2 \, \mathbb{E}_{f_X}\mathbb{E}_{\pi_0}  \log[\pi_0(A\,|\, X) / \pi(A\,|\,X,\gamma_*)]$. We then have the following characterization.

\begin{corollary}{(semi-parametric mixture models, correlated covariates)}\label{MMSE_prob2}
	\begin{align*}
	h_{\epsilon}^{\rm MMSE}(y,x)&=   \delta(x)-\mathbb{E}_{f_X}\delta(X)+s_{\beta\gamma}(y\,|\, x)'     [\mathbb{E}_{f_X}H_{\beta\gamma}(X)]^{-1} 
	\, \mathbb{E}_{f_X}\nabla_{\beta\gamma}\delta (X)  \notag\\
&+   (\epsilon n)  \bigg\{\mathbb{E}\left[\Delta(A,X)-\mathbb{E}_{f_X}\delta(X)-\overline{h}_{\epsilon}^{\rm MMSE}(A,X)\,\big|\, Y=y,X=x\right]\notag\\
	&- s_{\beta\gamma}(y\,|\, x)' [\mathbb{E}_{f_X}H_{\beta\gamma}(X)]^{-1}  \mathbb{E}_{f_X}  \mathbb{E}\,\bigg[s_{\beta\gamma}(Y\,|\, X)\left(\Delta(A,X)-\overline{h}_{\epsilon}^{\rm MMSE}(A,X)\right)\bigg]\bigg\},
	\end{align*}
	where here $\overline{h}_{\epsilon}^{\rm MMSE}(a,x):=\mathbb{E}\left(h_{\epsilon}^{\rm MMSE}(Y,X)\,|\,  A=a,X=x\right)$.
\end{corollary}

To provide intuition about the form of the solution in semi-parametric mixture models, let us start by considering a setting where $\beta_0$ and $\gamma_*$ are known to the researcher, while abstracting from covariates for simplicity. Let $\mathbb{E}_{{\cal{Y}}\,|\, {\cal{A}}}$ and $\mathbb{E}_{{\cal{A}}\,|\, {\cal{Y}}}$ denote the conditional expectation operators of $Y$ given $A$ and $A$ given $Y$, respectively. Corollary \ref{SolutionMMSE_semiparam} implies that (see Appendix \ref{App_der_sec3} for a derivation)
\begin{align}
h_{\epsilon}^{\rm MMSE}
=    &   \mathbb{E}_{{\cal{A}}\,|\, {\cal{Y}}}\left[\mathbb{E}_{{\cal{Y}}\,|\, {\cal{A}}}\circ\mathbb{E}_{{\cal{A}}\,|\, {\cal{Y}}}+(\epsilon n)^{-1}\mathbb{I}_{\cal{A}}\right]^{-1}(\Delta-\delta), \label{eq_minimum_IP_sol}
\end{align}
where $\circ$ denotes the composition operator, and
 $\mathbb{I}_{\cal{A}}$ denotes the identity operator; that is, $\mathbb{I}_{\cal{A}}\pi=\pi$ for $\pi:{\cal{A}}\rightarrow\mathbb{R}$. In semi-parametric mixture settings such as panel data models, average effects are often only partially identified or not root-$n$ estimable due to ill-posedness.\footnote{See, e.g., Chernozhukov \textit{et al.} (2013), Pakes and Porter (2013), Severini and Tripathi (2012), and Bonhomme and Davezies (2017).} The presence of the Tikhonov penalty $(\epsilon n)^{-1}$ in (\ref{eq_minimum_IP_sol}) bypasses these issues by making the operator $[\mathbb{E}_{{\cal{Y}}\,|\, {\cal{A}}}\circ\mathbb{E}_{{\cal{A}}\,|\, {\cal{Y}}}+(\epsilon n)^{-1}\mathbb{I}_{\cal{A}}]$ non-singular. By focusing on a shrinking neighborhood of the reference distribution, as opposed to entertaining any possible distribution, our approach avoids issues of non-identification and ill-posedness while guaranteeing MSE-optimality within that neighborhood.\footnote{Note that regular estimation is possible when there exists a function $\psi(y)$ such that $\Delta(A)-\delta=\mathbb{E}[\psi(Y)\,|\, A]$. In this case ${\limfunc{lim}}_{\epsilon\rightarrow \infty} \,\widehat{\delta}_{\epsilon}^{\rm MMSE}$ is consistent for $\mathbb{E}_{\pi_0}\, \Delta(A)=\delta+\mathbb{E}_{\pi_0}\, \mathbb{E}[\psi(Y)\,|\, A]$ for all $\pi_0$. }

Next, consider the estimation of $c'\beta_0$, for $c$ a $\limfunc{dim}\beta\times 1$ vector, and assume $\gamma_*$ known for simplicity. Let $\mathbb{I}_{\cal{Y}}h=h$ for $h:{\cal{Y}}\rightarrow\mathbb{R}$, and let \begin{equation*}\mathbb{W}^{\epsilon}=  \mathbb{I}_{\cal{Y}}-\mathbb{E}_{{\cal{A}}\,|\, {\cal{Y}}} \left[\mathbb{E}_{{\cal{Y}}\,|\, {\cal{A}}}\circ\mathbb{E}_{{\cal{A}}\,|\, {\cal{Y}}}+(\epsilon n)^{-1}\mathbb{I}_{\cal{A}}\right]^{-1} \mathbb{E}_{{\cal{Y}}\,|\, {\cal{A}}}.\end{equation*}
It follows from Corollary \ref{SolutionMMSE_semiparam} that
\begin{equation}
h^{\rm MMSE}_{\epsilon}(y)=\mathbb{W}^{\epsilon}\, s_{\beta} (y)'\left\{\mathbb{E}\left[s_{\beta} (Y)\mathbb{W}^{\epsilon}\,s_{\beta} (Y)'\right]\right\}^{-1}c.\label{regu_IF_semi}
\end{equation}
As $\epsilon$ tends to infinity, $\mathbb{W}^{\epsilon}$ approximates the functional differencing projection operator $\mathbb{W}=\mathbb{I}_{\cal{Y}}-\mathbb{E}_{{\cal{A}}\,|\, {\cal{Y}}} \mathbb{E}_{{\cal{A}}\,|\, {\cal{Y}}}^{\dagger}$, where $\mathbb{E}_{{\cal{A}}\,|\, {\cal{Y}}}^{\dagger}$ denotes the Moore-Penrose generalized inverse of $\mathbb{E}_{{\cal{A}}\,|\, {\cal{Y}}}$ (see Bonhomme, 2012). In this limit, the minimum-MSE estimator is the one-step approximation to the semi-parametric efficient estimator of $c'\beta_0$. Yet, the efficient estimator fails to exist when the matrix denominator in (\ref{regu_IF_semi}) is singular.\footnote{In discrete choice panel data models, common parameters are generally not point-identified (Chamberlain, 2010, Honor\'e and Tamer, 2006). In panel data models with continuous outcomes, identification and regularity require high-level ``non-surjectivity'' conditions which may be hard to verify (Bonhomme, 2012).} Here the term $(\epsilon n)^{-1}$ acts as a regularization of the functional differencing projection, which makes $h^{\rm MMSE}_{\epsilon}$ well-defined irrespective of the nature of identification. 

Lastly, consider a model with covariates $X$ that are independent of the latent variables $A$, as in our illustration in Subsection \ref{subsec_bin}. Assuming that $\beta_0$ and $\gamma_*$ are known, and that $\Delta(A)$ does not depend on $X$, and letting $\mathbb{E}_{{\cal{Y}},{\cal{X}}\,|\, {\cal{A}}}$ and $\mathbb{E}_{{\cal{A}}\,|\, {\cal{Y}},{\cal{X}}}$ denote the conditional expectation operators of $(Y,X)$ given $A$ and $A$ given $(Y,X)$, respectively, Corollary \ref{MMSE_prob} implies
\begin{align}
h_{\epsilon}^{\rm MMSE}
=    &   \mathbb{E}_{{\cal{A}}\,|\, {\cal{Y}},{\cal{X}}}\left[\mathbb{E}_{{\cal{Y}},{\cal{X}}\,|\, {\cal{A}}}\circ\mathbb{E}_{{\cal{A}}\,|\, {\cal{Y}},{\cal{X}}}+(\epsilon n)^{-1}\mathbb{I}_{\cal{A}}\right]^{-1}\mathbb{E}_{f_X}(\Delta-\delta). \label{eq_minimum_IP_sol_X}
\end{align} 
The solution is similar to (\ref{eq_minimum_IP_sol}), with the difference that here, due to independence, both $Y$ and $X$ are informative about the latent $A$.

\subsection{Implementation in parametric models and semi-parametric mixture models}

To implement the method in parametric settings, the researcher needs to compute the score and Hessian of the larger model. Since we focus on smooth models, methods based on numerical derivatives or simulation-based approximations can be used. Minimum-MSE estimators are generally not available in closed form in semi-parametric mixture models. {Nevertheless, in these models $h_{\epsilon}^{\rm MMSE}$ can be computed by minimizing the quadratic objective (abstracting from covariates for simplicity)
	\begin{align}
		\mathbb{E} \left(\Delta (A)-\delta-\mathbb{E}[h(Y)\,|\, A]\right)^2 +(\epsilon n)^{-1}\mathbb{E}  h^2(Y)   
 ,
\end{align}
	  with respect to $h$, subject to the linear constraints $ \mathbb{E}  h(Y) =0 $ and
	  $\mathbb{E}  \,  h(Y)    s_{\beta\gamma} (Y) =  \nabla_{\beta\gamma}  \delta$. This is a regularized linear inverse problem (see, e.g., Engl \textit{et al.}, 2000, and Kress, 2014), which is well-posed given the presence of the Tikhonov penalty $(\epsilon n)^{-1}$. Numerous numerical approaches have been developed to solve linear inverse problems. In the illustrations we implement a simulation-based method that relies on matrix operations. In Appendix \ref{App_compute} we describe this method, and also explain how we compute confidence intervals. Note that, given initial estimates $\widehat{\beta}$ and $\widehat{\gamma}$, computing minimum-MSE estimators and confidence intervals does not require nonlinear optimization.}

\section{Using minimum-MSE estimates and confidence intervals for sensitivity analysis \label{Sec_epsilon}}

In this section, we discuss how to apply our approach in practice. So far, we have shown how to compute minimum-MSE estimators and confidence intervals for different values of $\epsilon$. We now describe a strategy to choose an interpretable range for $\epsilon$, and report the results of the estimation on this range, in the spirit of sensitivity analysis.
	
	\subsection{Context-specific interpretations of magnitudes of $\epsilon$\label{sec41}}
	
In order to apply our approach, an important first step is to form intuition about orders of magnitudes of $\epsilon$ in the model under study. The literature on sensitivity analysis that we referred to in the introduction provides intuition in certain models; see for example the analysis of linear IV models in Conley \textit{et al.} (2012). In Sections \ref{App_TW} and \ref{Sec_numeric}, we will discuss the magnitudes of $\epsilon$ in our examples. In our evaluation of a conditional cash transfer program in Section \ref{App_TW}, misspecification stems from omitted stigma effects in households' preferences. In this case, we will show that $\epsilon$ can be mapped to the ratio of the marginal utility of the subsidy (that is, the ``stigma'' effect) to that of consumption. Economic intuition can then suggest whether a given $\epsilon$ is large or small. {In this setting, one can motivate the assumption that $\epsilon$ shrinks as $n$ increases as reflecting that the econometrician's uncertainty about the presence of stigma effects diminishes when the sample gets larger.}  
	
	In the binary choice models we study in Section \ref{Sec_numeric}, $\pi$ is infinite-dimensional, and $\epsilon/2$ is the squared radius of a Kullback-Leibler ball around a normal density. To visualize the implications of assuming that the true $\pi$ belongs to an $\epsilon$-neighborhood of the normal, we will plot worst-case probability bounds, and show how setting $\epsilon$ to a particular value imposes \textit{ex-ante} restrictions on the parameter of interest. Local approximations to bounds on functionals of $\pi$ are easy to compute. Alternatively, one may report estimated worst-case distributions --- i.e., a $\pi_0$ that achieves the supremum in (\ref{eqbias}) --- in the spirit of Christensen and Connault~(2019).

		\subsection{Interpretation based on statistical testing\label{sec42}}

	As a complement to forming context-specific intuition about magnitudes, here we outline a generic interpretation based on statistical testing. Specifically, we show that setting $\epsilon$ is isomorphic to setting a lower bound on the local power of a likelihood-ratio test of the reference model, against alternatives outside the neighborhood $\Gamma_{\epsilon}(\gamma_*)$ in certain directions. The $\epsilon$-neighborhood will thus contain all models that are hard to statistically distinguish from the reference model in those directions. This logic has antecedents in robust statistics (Huber and Ronchetti, 2009), and robust control in economics (Hansen and Sargent, 2008).

	To proceed, let us focus on the parametric case of Subsection \ref{subsec_par} with identity weight matrix $\Omega$. Let $v$ be a unitary vector, and consider a likelihood-ratio test of the null hypothesis $H_0: \, \pi_0=\pi(\gamma_*)$ against the local alternative $H_1:\, \pi_0=\pi(\gamma_*)+\xi v/\sqrt{n}$, for some constant $\xi>0$. Let the size of the test be $\alpha\in(0,1)$. The local power of the test is then $p=\Pr\left(Z[\mu]>\widetilde{c}_{\alpha}\right)$,
	where $\widetilde{c}_{\alpha}$ is the $(1-\alpha)$-quantile of the chi-squared distribution with one degree of freedom, and $Z[\mu]$ follows a non-central chi-squared distribution with one degree of freedom and non-centrality parameter $\mu=\|\widetilde{H}_{\pi}^{1/2} v\|\xi$; see, e.g., Van der Vaart (2007, page 237).\footnote{Here $\widetilde{H}_{\pi}$ is the usual parametric (projected) Hessian matrix, since $\Omega$ is the identity.} For given $\alpha$ and $p$ values, let $\mu(\alpha,p)$ be such that $\Pr\left(Z[\mu(\alpha,p)]>\widetilde{c}_{\alpha}\right)=p$.\footnote{$\mu(\alpha,p)$ is implicitly defined by $\Phi(\mu(\alpha,p)+\Phi^{-1}(\alpha/2))+\Phi(-\mu(\alpha,p)+\Phi^{-1}(\alpha/2))=p$,
		where $\Phi$ is the standard normal cumulative distribution function.} It follows that
	$ \mu(\alpha,p)=\|\widetilde{H}_{\pi}^{1/2} v\|\xi$. Hence, noting that $\mu(\alpha,p)$ is increasing in $p$, and defining$$\epsilon(v)=\frac{\mu(\alpha,p)^2}{nv'\widetilde{H}_{\pi}v},$$ 
	taking $\epsilon\geq \epsilon(v)$ ensures that local power in direction $v$ is at least $p$ whenever $\xi/\sqrt{n}\geq \epsilon^{\frac{1}{2}}$.\footnote{This definition is easy to extend to the general locally quadratic case of Subsection \ref{sec:LocallyQuadratic}. Let $v$ be a unitary direction in the tangent space $\overline{\cal{T}}$ of $\pi(\gamma)$ at $\gamma_*$, and let $\Omega_{\gamma_*} \, : \, \overline {\cal T} \rightarrow  {\cal T}$ be the linear operator defined in the supplementary appendix. In the parametric case $\Omega_{\gamma_*}$ is simply the matrix $\Omega$. In the general setup the non-centrality parameter is
		$\langle v,\widetilde{H}_{\pi}\Omega_{\gamma_*}v\rangle^{1/2} \, \xi$, 
		and $\epsilon(v)=\mu(\alpha,p)^2/(n\langle  v,\widetilde{H}_{\pi}\Omega_{\gamma_*}v\rangle)$, for $\left\langle v,u\right\rangle\in\mathbb{R}$ the scalar product between $v\in\overline{{\cal{T}}}$ and $u\in{\cal{T}}$.} Note that, for fixed $\alpha$ and $p$, the product $\epsilon(v)  n$ is independent of $n$.

	To ensure power larger than $p$ outside the neighborhood in \emph{all} directions $v$, one could compute the \emph{supremum} of $\epsilon(v)$ over all directions.\footnote{It is sufficient to consider directions that are orthogonal to the directions $\nabla_{\gamma_*}\pi'$ of the reference model.} 	Setting $\epsilon$ larger than all $\epsilon(v)$'s is motivated by a desire to calibrate the fear of misspecification of the researcher: when $p$ is large, say $80\%$, all alternatives outside the neighborhood $\Gamma_{\epsilon}(\gamma_*)$ are then easy to statistically distinguish from the reference model based on a sample of $n$ observations. However, for the supremum of $\epsilon(v)$ to be finite, $\widetilde{H}_{\pi}$ needs to be non-singular, which precludes models with partial or irregular identification. As an example, in the linear model of Subsection \ref{subsec_par}, the supremum of $\epsilon(v)$ is infinite whenever ${\Sigma}_X-{\Sigma}_V=C \Sigma_ZC'$ is singular; that is, whenever the IV model is under-identified. In such a case, there thus exist certain directions along which the specification test has no power, no matter how large $\epsilon$ is. Likewise, in semi-parametric models, the eigenvalues of the infinite-dimensional operator $\widetilde{H}_{\pi}$ may not be bounded away from zero due to ill-posedness.

In partially or irregularly identified models,  given some fixed values of $\alpha$ and $p$, a possibility is to report several $\epsilon$ value: a \emph{first} value $\epsilon_1$ that corresponds to the \emph{infimum} of $\epsilon(v)$ over all directions $v$ --- hence to the most favorable direction; a \emph{second} value $\epsilon_2\geq \epsilon_1$, such that power is at least $p$ outside the neighborhood in the most favorable direction in the subspace of directions orthogonal to the most favorable one; a \emph{third} value $\epsilon_3\geq \epsilon_2$ that provides power guarantees along the most favorable direction orthogonal to the previous two ones, and so on. Letting $\lambda_k(B)$ denote the $k$-th largest eigenvalue of $B$, we have
	\begin{equation}\label{seq_eps_gen}
	\epsilon_k=\frac{\mu(\alpha,p)^2}{n\lambda_{\rm k}(\widetilde{H}_{\pi})},\quad \text{for } k=1,2,...
	\end{equation}  
	We will report the first few $\epsilon_k$ values in our illustrations --- taking $\alpha=5\%$ and $p=80\%$ --- as a complement to context-specific interpretations of orders of magnitude.\footnote{In Appendix \ref{App_compute}, we describe how to compute $\epsilon_k$ in semi-parametric mixture models using a simulation-based approach.}

	\subsection{Reporting results}
	
	By providing intuition about $\epsilon$, either through an interpretation of magnitudes in the context under study (see Subsection \ref{sec41}), and/or through a generic approach based on statistical testing (see Subsection \ref{sec42}), the researcher selects a range of possible values for $\epsilon$. We then recommend plotting the minimum-MSE estimator, and its associated 95\% confidence interval, as a function of $\epsilon$ on this range. In our illustrations, we will use this device to report results, and we will indicate particular values of $\epsilon$ on the x-axis to facilitate interpretation.   
	
By reporting those minimum-MSE estimates and bias-adjusted confidence intervals, we learn about the
fragility of the estimation results under the reference model parameterized by $\gamma$, relative to the larger model parameterized by $\pi$.
Exploring the sensitivity to model misspecification in that way is in line with the traditional suggestion of comparing estimation 
results obtained from different model specifications (see, e.g., Leamer, 1983, 1985). However, our local approach does not require the researcher to estimate the larger model,
which is particularly relevant in situations where the latter is partially or irregularly identified, or computationally hard to estimate.

\subsection{Remark: shape of neighborhoods and choice of norm}

Implementing our approach requires choosing a norm on $\Pi$, which governs the shape of $\Gamma_{\epsilon}(\gamma_*)$. In parametric models, the researcher may have a preferred weight matrix $\Omega$, thus putting more weight on certain elements of the vector $\pi$. An automatic weighting scheme is to set $\Omega$ to be equal to the diagonal of the projected Hessian matrix $ \widetilde{\mathbb{H}}_\pi$. This choice can be motivated using a statistical testing logic as in Subsection \ref{sec42}, focusing on component-wise directions in the canonical basis of $\mathbb{R}^{\limfunc{dim}\pi}$. Taking the diagonal, instead of the entire matrix $ \widetilde{\mathbb{H}}_\pi$, as a weight is in line with our aim to cover models where the parameter of interest may not be regularly estimable.\footnote{In applications, other norms may have particular appeal. For example, measuring deviations according to the supremum norm will lead to an $\ell^1$ dual norm in (\ref{MMSEproblem}), in the spirit of Armstrong and Koles\'ar (2021). While our estimators and confidence intervals remain well-defined in this case, that setting is not locally quadratic.} 

In semi-parametric mixture models, where $\Pi$ is a set of densities, we rely on the Kullback-Leibler divergence for computational convenience. KL is locally quadratic, and this choice allows us to obtain the explicit characterizations of Lemma \ref{lem_locquad}, and to compute minimum-MSE estimators by solving linear systems. Note, however, that the KL divergence does not impose shape or smoothness restrictions on the densities inside the neighborhood.

\section{Empirical illustration: conditional cash transfers in Mexico\label{App_TW}}

The goal of this section is to predict program impacts in the context of the PROGRESA conditional cash transfer program, building on the structural evaluation of the program in Todd and Wolpin (2006, TW hereafter) and Attanasio \textit{et al.} (2012, AMS). We estimate a simple model in the spirit of TW, and adjust its predictions against a specific form of misspecification under which the program may have a ``stigma'' effect on preferences.

\subsection{Setup}

Following TW and AMS, we focus on PROGRESA's education component, which consists of cash transfers to families conditional on children attending school. Those represent substantial amounts as a share of total household income. The implementation of the policy was preceded by a village-level randomized evaluation in 1997-1998. As TW and AMS point out, the randomized control trial is silent about the effect that other, related policies could have, such as higher subsidies or unconditional income transfers, which motivates the use of structural methods.

To analyze this question, we consider a simplified version of TW's model described in Wolpin (2013), which is a static, one-child model with no fertility decision. To describe this model, let $U(C,S,\tau,v)$ denote the utility of a unitary household, where $C$ is consumption, $S\in\{0,1\}$ denotes the schooling attendance of the child, $\tau$ is the level of the PROGRESA subsidy, and $v$ are taste shocks. Utility may also depend on characteristics $X$, which we abstract from for conciseness in the presentation.\footnote{Empirically, we include as covariates the age of the child and her parents, distance to the nearest school, eligibility and year indicators, and the highest grade obtained. We perform estimation separately by gender.} Note the direct presence of the subsidy $\tau$ in the utility function, which may reflect a stigma effect. This direct effect plays a key role in the analysis. The budget constraint is: $C=Y+W(1-S)+\tau S$, where $Y$ is household income and $W$ is the child's wage. This is equivalent to: $C=Y+\tau+(W-\tau)(1-S)$. Hence, {in the absence of a direct effect on utility}, the program's impact is equivalent to an increase in income and a decrease in the child's wage. 

Following Wolpin (2013) we parameterize the utility function as
$$U(C,S,\tau,v)=aC+bS+dCS+\lambda \tau S+Sv,$$
where $\lambda$ denotes the direct (stigma) effect of the program. The schooling decision is then
$$S=\boldsymbol{1}\{U(Y+\tau,1,\tau,v)>U(Y+W,0,0,v)\}=\boldsymbol{1}\{v>a(Y+W)-(a+d)(Y+\tau)-\lambda \tau-b\}.$$
Assuming that $v$ is standard normal, independent of wages, income, and program status (that is, of the subsidy $\tau$), we obtain
$$\Pr(S=1\,|\, y,w,\tau)=1-\Phi\left[a(y+w)-(a+d)(y+\tau)-\lambda \tau-b\right],$$
where $\Phi$ is the standard normal cdf.

We use the specification with $\lambda= 0$ as our reference model, and estimate it on control villages. When $\lambda=0$, the average effect of the subsidy on school attendance is 
\begin{align*}
&\mathbb{E}\left[\Pr(S=1\,|\, Y,W,\tau=\tau^{\rm treat})-\Pr(S=1\,|\, Y,W,\tau=0)\right]\\
&=\mathbb{E}\left(\Phi\left[a(Y+W)-(a+d)(Y+\tau^{\rm treat})-b\right]-\Phi\left[a(Y+W)-(a+d)Y-b\right]\right).
\end{align*}
As Wolpin (2013) notes, data under the subsidy regime ($\tau=\tau^{\rm treat}$) is {not} needed to construct an empirical counterpart to this quantity, since treatment status is independent of $Y,W$.\footnote{AMS make a related point (albeit in a different model), and use both control and treated villages to estimate their structural model. AMS also document the presence of general equilibrium effects of the program on wages. We abstract from such effects in our analysis.}

We contrast two strategies to predict the effect of the program and other counterfactual policies, while accounting for misspecification of the reference model due to the presence of stigma effects. The first strategy --- which we refer to as \textit{ex-ante} policy prediction --- is only based on data from control villages, whereas the second strategy --- \textit{ex-post} prediction --- combines both control and treated villages. In both cases, we allow for $\lambda\neq 0$ in the larger model. While in the present simple static context one could easily estimate a version of the larger model, in dynamic structural models such as the one estimated by TW, estimating a different model in order to assess the impact of any given form of misspecification may be computationally prohibitive. This highlights an advantage of our approach, which does not require the researcher to estimate the parameters under a new model.

To cast this setting into our framework, let $\beta=(a,b,d)$, $\pi=\lambda$, and $$\delta_{\beta,\pi}=\mathbb{E}\left(\Phi\left[a(Y+W)-(a+d)(Y+\tau^{\rm treat})-\lambda \tau^{\rm treat}-b\right]-\Phi\left[a(Y+W)-(a+d)Y-b\right]\right).$$ We focus on the effect on eligible (i.e., poorer) households. We add covariates to gender-specific school attendance equations, which include the age of the child and her parents, year indicators, distance to school, and an eligibility indicator. We report estimates of $\delta_{\beta_0,\pi_0}$ as well as confidence intervals.

\subsection{Empirical results}

We use the sample from TW. We drop observations with missing household income, and focus on 1219 boys and 1089 girls aged 12 to 15.\footnote{Children's wages are only observed for those who work. We {impute potential wages} to all children based on a linear regression that in particular exploits province-level variation and variation in distance to the nearest city, similar to AMS.} Descriptive statistics on the sample show that average weekly household income is 242 pesos, the average weekly wage is 132 pesos, and the PROGRESA subsidy ranges between 31 and 59 pesos per week depending on grade and gender. Average school attendance drops from 90\% at age 12 to between 40\% and 50\% at age 15.

\begin{figure}[h!]
	\caption{Effect of the PROGRESA subsidy as a function of neighborhood size $\epsilon$}
	\label{Fig_prog}
	\begin{center}
		\begin{tabular}{cc}
			\multicolumn{2}{c}{\textit{Ex-ante}}\\	Girls & Boys \\
						\includegraphics[width=70mm, height=50mm]{./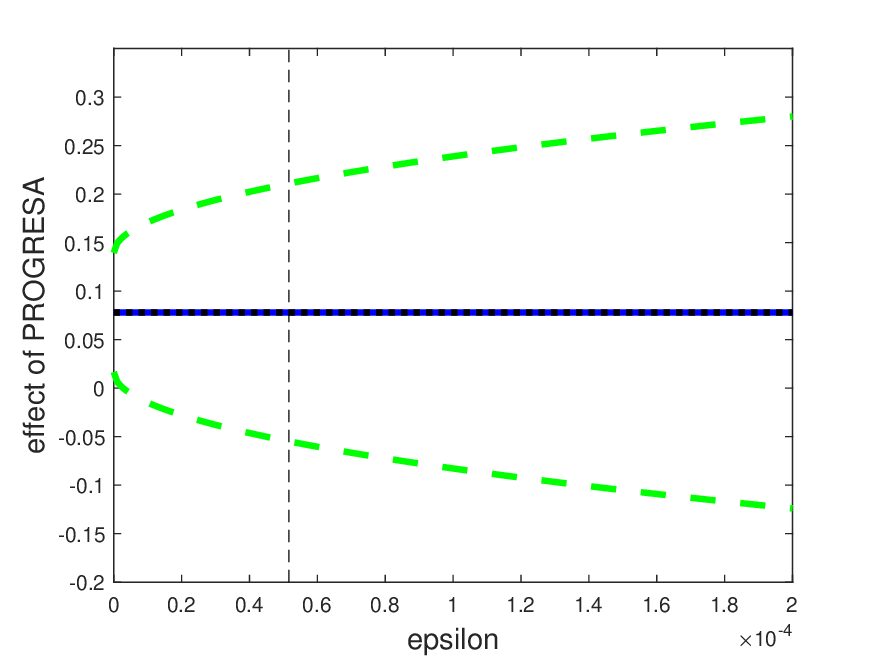}&	\includegraphics[width=70mm, height=50mm]{./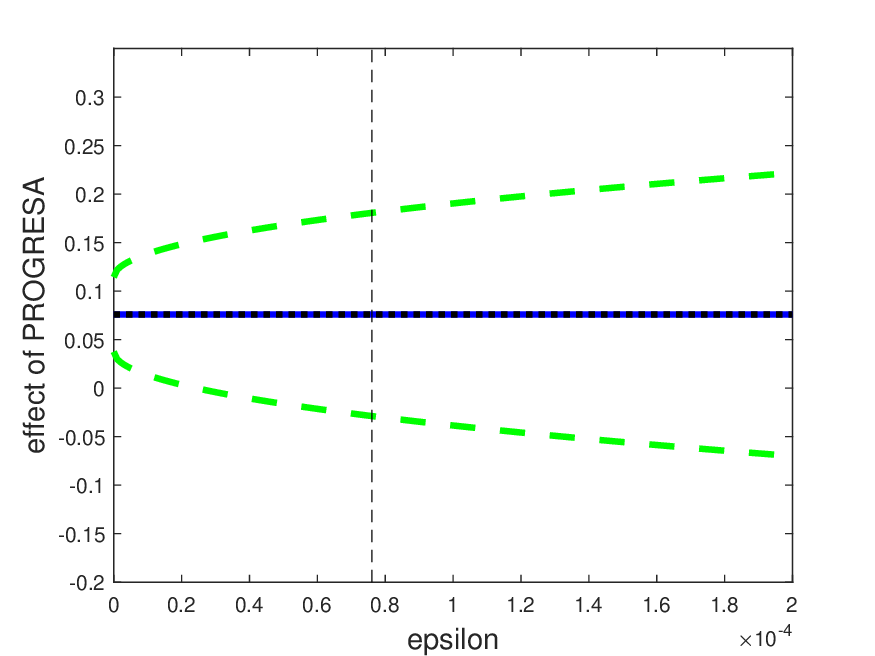}\\
							\multicolumn{2}{c}{}\\
			\multicolumn{2}{c}{\textit{Ex-post}}\\
				Girls & Boys \\
							\includegraphics[width=70mm, height=50mm]{./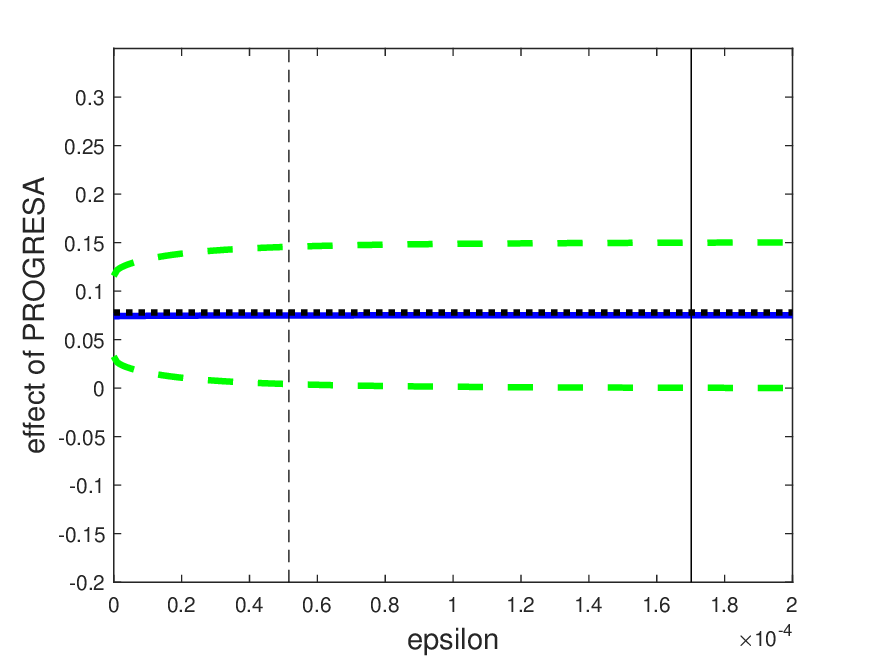}&	\includegraphics[width=70mm, height=50mm]{./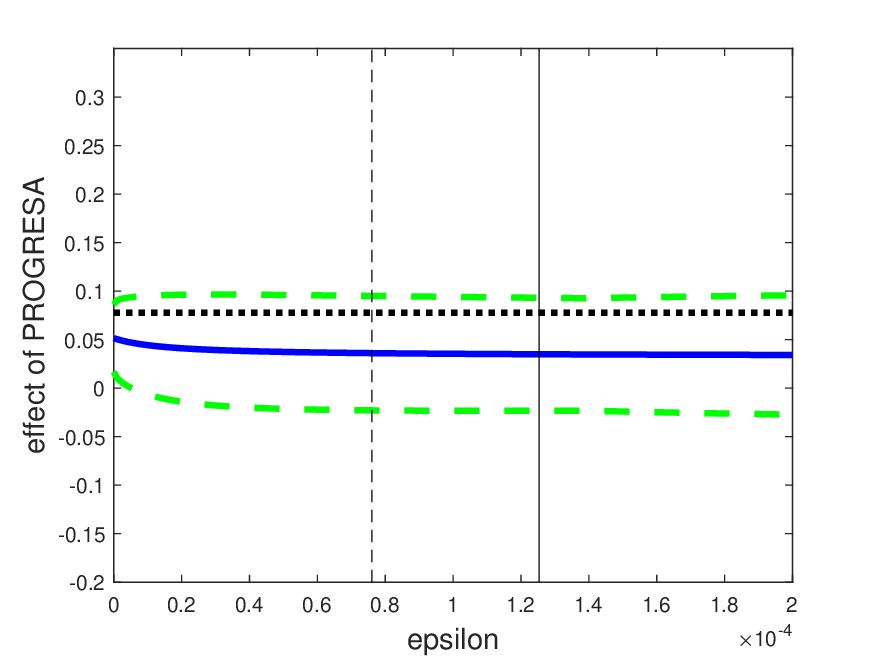}\\\end{tabular}%
	\end{center}
	\par
	\textit{{\small Notes: Sample from Todd and Wolpin (2006). In the top panel, we show estimates based on control villages only, in the bottom panel we show estimates based on both controls and treated. We report $\epsilon$ on the x-axis. The minimum-MSE estimates of the effect of PROGRESA on school attendance are shown in solid. 95\% confidence intervals based on those estimates are in dashed. The dotted line shows the unadjusted prediction based on the reference model. The dashed vertical line indicates $\overline{\epsilon}$ (at which households value consumption and the subsidy equally) and the solid vertical line indicates $\epsilon_1$ (at which a 5\% likelihood ratio specification test has power 80\%). Girls (left) and boys (right).}}
\end{figure}

We start by providing intuition regarding the values of $\epsilon$ in the present context. In the structural model, $\lambda$ is the marginal utility of the subsidy for households sending their child to school. In turn, the marginal utility of consumption is given by $a+d$. Hence, bounding $\lambda^2$ by $\epsilon$ is equivalent to bounding the ratio of marginal utility of the subsidy to marginal utility of consumption by $\sqrt{\epsilon}/(a+d)$. For example, households valuing the subsidy as much as consumption in absolute value --- arguably an upper bound on the stigma effect --- corresponds to $\overline{\epsilon}=(a+d)^2$.

\begin{table}[h!]\caption{Effect of the PROGRESA subsidy and counterfactual reforms\label{Table_TW}}
	\begin{center}
		\begin{tabular}{l||cccc}
			& 	\multicolumn{2}{c}{\textit{Ex-ante}}& 	\multicolumn{2}{c}{\textit{Ex-post}}  \\\hline\hline\\
			& \multicolumn{4}{c}{PROGRESA impacts}\\\hline
			& Girls & Boys & Girls & Boys\\
			Model-based estimate & 0.078 & 0.076 & 0.078 & 0.078 \\
			\hskip .3cm non-robust CI& (0.017,0.140) &(0.038,0.114)&(0.026,0.130)&(0.043,0.112)\\
			\hskip .3cm robust CI& (-0.055,0.211) & (-0.029,0.181) & (0.006,0.150)  & (0.008,0.148)\\
			MMSE estimate & 0.078 & 0.076 & 0.075 & 0.036 \\
			\hskip .3cm robust CI& (-0.055,0.211) & (-0.029,0.181)  & (0.004,0.146)  & (-0.023,0.095)\\
			Experimental &- &- &0.087 &0.050		 \\\hline\\
			& \multicolumn{4}{c}{Counterfactual 1: doubling subsidy}\\\hline
			& Girls & Boys & Girls & Boys\\
			Model-based estimate & 0.142 & 0.133 & 0.142 & 0.136 \\
			\hskip .3cm robust CI& (-0.069,0.353) & (-0.018,0.284) & (0.027,0.258)  & (0.033,0.239)\\
			MMSE estimate & 0.142 & 0.133 & 0.137 & 0.076 \\
			\hskip .3cm robust CI& (-0.069,0.353) & (-0.018,0.284) & (0.022,0.251)  & (-0.017,0.163)\\\hline\\
			& \multicolumn{4}{c}{Counterfactual 2: unconditional transfer}\\\hline	
			& Girls & Boys & Girls & Boys \\
			Model-based estimate & 0.003 & 0.005 & 0.003 & 0.005\\
			\hskip .3cm robust CI& (-0.213,0.218) & (-0.231,0.240) & (-0.208,0.214)  & (-0.239,0.249)\\
			MMSE estimate & 0.003 & 0.005 & 0.009 & -0.071 \\
			\hskip .3cm robust CI& (-0.213,0.218) & (-0.231,0.240) & (-0.232,0.250)  & (-0.279,0.136)\\\hline
		\end{tabular}
	\end{center}
	{\textit{\small Notes: Sample from Todd and Wolpin (2006). In the left two columns we show estimates based on control villages only, in the right two columns we show estimates based on both controls and treated. ``Model-based'' estimates are based on the reference model. $\epsilon =\overline{\epsilon}$, which corresponds to households valuing consumption and the subsidy equally in absolute value. CI are 95\% confidence intervals. The unconditional transfer amounts to 5000 pesos in a year.}}
	
\end{table}

In Figure \ref{Fig_prog}, we show the minimum-MSE estimator of the impact of the PROGRESA subsidy on eligible households, together with 95\% confidence intervals, for a range of values around $\overline{\epsilon}=(a+d)^2$ (which we show in the dashed vertical line). In the horizontal dotted line, we show estimates based on the reference model. In the top panel, we show \textit{ex-ante} prediction results based on control villages only. We see that the minimum-MSE estimator and the one based on the reference model are equal in this case. This is intuitive, since $\pi=\lambda$ is scalar, and control villages provide no information about it.\footnote{This is analogous to the case of a linear regression with endogeneity and no instrument, which we mentioned in footnote \ref{ftnote1}.} However, the confidence intervals --- which account for model misspecification --- are large. When $\epsilon=\overline{\epsilon}$, the 95\% intervals include zero for both genders. This quantifies the uncertainty associated with \textit{ex-ante} prediction when the researcher does not rule out the presence of stigma.

In the bottom panel of Figure \ref{Fig_prog}, we show the results of \textit{ex-post} prediction based on both control and treated villages. In the sample of boys, the minimum-MSE estimator is lower than the one based on the reference model, suggesting that the reference model is misspecified. In contrast, the two estimators are close to each other in the sample of girls.\footnote{Note that, for boys, minimum-MSE estimates at all $\epsilon$ values --- including $\epsilon=0$ --- are lower than the estimate from the reference model {(note that here we estimate the reference model using control villages only, and use both controls and treated to compute the minimum-MSE estimator)}.  This suggests that the functional form of the schooling decision is {not} invariant to treatment status, highlighting that predictions based off control villages are less satisfactory for boys (as also found by TW).} In addition, the 95\% confidence intervals are substantially tighter than when using control villages only. When $\epsilon=\overline{\epsilon}$ (shown in the vertical dashed line), the program estimates on school attendance are positive and marginally significant at the 5\% level for girls, and positive and marginally significant at 10\% for boys. In the vertical solid line, we highlight the value $\epsilon_1$ given by (\ref{seq_eps_gen}) for $k=1$. Since $\pi=\lambda$ is scalar, setting $\epsilon\geq \epsilon_1$ ensures that, for all models outside the neighborhood, a 5\%-likelihood ratio specification test has local power larger than 80\%. Taking $\epsilon= \epsilon_1$ implies that the ratio of marginal utility of the subsidy to marginal utility of consumption is bounded by 1.8 (girls) and 1.3 (boys). While $\epsilon_1$ is larger than $\overline{\epsilon}$, the implied minimum-MSE estimators and confidence intervals are similar.\footnote{Note that $\epsilon_1$ is infinite in the \textit{ex-ante} case (top panel of Figure \ref{Fig_prog}). This is due to control villages not providing any information about $\pi$ in this case. }

In Table \ref{Table_TW}, we report estimates of the program impacts, as well as predictions of counterfactual policies. The left two columns correspond to \textit{ex-ante} prediction based on control villages only, while the right two columns correspond to \textit{ex-post} prediction based on both controls and treated. We show the results for $\epsilon=\overline{\epsilon}$, corresponding to equal marginal utilities of subsidy and consumption in absolute value. In the top panel, we focus on the impact of the PROGRESA subsidy on eligible households. We see that PROGRESA has a positive impact on attendance of both boys and girls. The impacts predicted by the reference model are large, approximately 8 percentage points, and are quite close to the results reported in Todd and Wolpin (2006, 2008).  However, the confidence intervals which account for model misspecification (third row, left two columns) are very large for both genders. 

When adding treated villages to the sample (right two columns in Table \ref{Table_TW}), confidence intervals accounting for misspecification are tighter. Moreover, the minimum-MSE point estimates and those based on the reference model differ in this case. For boys, the minimum-MSE estimate is substantially lower than the one based on the reference model (3.6\% versus 7.8\%), while for girls the effects are similar. Interestingly, for boys the minimum-MSE estimates are closer to the experimental differences in means between treated and control villages. 

Lastly, in the middle and bottom panels of Table \ref{Table_TW}, we show estimates of the effects of two counterfactual policies: doubling the PROGRESA subsidy, and removing the conditioning of the income transfer on school attendance. Unlike for the main PROGRESA impacts, there is no experimental counterpart to such counterfactuals. While \textit{ex-ante} predictions are associated with wide confidence intervals, \textit{ex-post} minimum-MSE estimates based on both control and treated villages predict a substantial effect of doubling the subsidy on girls' attendance, and a more moderate effect on boys. By contrast, we find insignificant effects of an unconditional income transfer.

\section{Numerical illustrations: binary choice models\label{Sec_numeric}}

In this section, we apply our approach to cross-sectional and panel data binary choice models, where we allow for misspecification of the distribution of unobservables. {In both applications there is a substantial amount of misspecification, and we use simulations to assess the behavior of the minimum-MSE estimator --- which is theoretically justified under local misspecification --- in these settings of practical relevance.}

\subsection{Cross-sectional binary choice\label{subsec_bin}}

Consider the binary choice model
\begin{equation}Y=\mathbbm{1}\{X'\beta_0+A\geq 0\},\label{mod_bc_cs}\end{equation}
where $A$ follows a distribution $\pi_0$, independent of $X$. We are interested in estimating the prediction function ${\delta}_{\beta_0,\pi_0}=\mathbb{E}_{\pi_0}[\mathbbm{1}\{x_0'\beta_0+A\geq 0\}]$,
at some $x_0$ not necessarily in the support of $X$. We focus on the reference specification $A\sim {\cal{N}}(0,1)$, independent of $X$. We allow for the possibility that this parametric model is misspecified, while maintaining independence between $A$ and $X$ under $\pi_0$. We observe an i.i.d. sample $(Y_i,X_i)$ for $i=1,...,n$.

\begin{figure}[b!]\caption{Distributions of $X$ and $A$ in the binary choice models\label{Fig_CS_DGP}}
	\begin{center}
		\begin{tabular}{ccc}
			(a) Cross-section: $X$ & (b) Cross-section: $A$ & (c) Panel data: $A$ 	\\
			\includegraphics[width=50mm, height=50mm]{./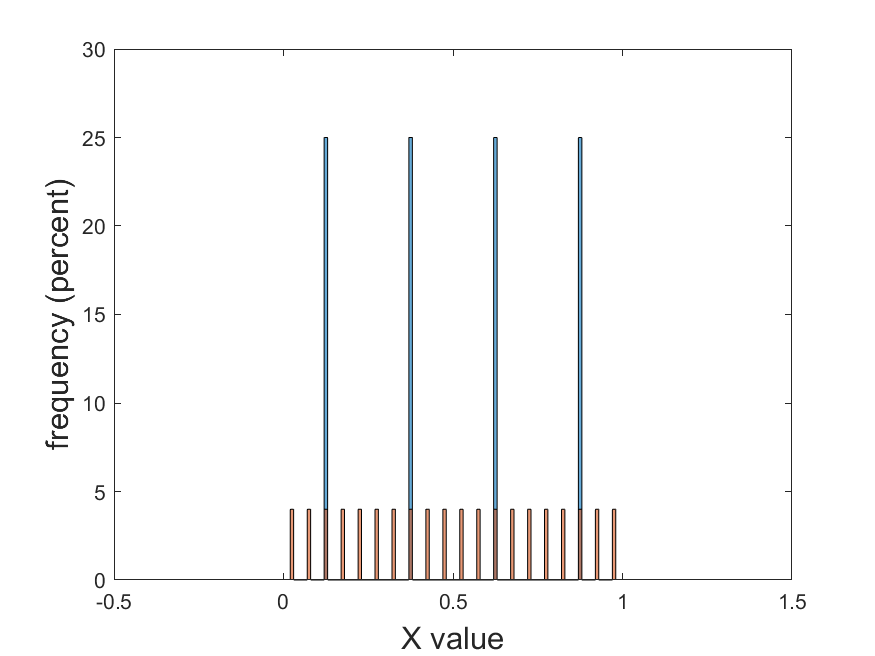}&\includegraphics[width=50mm, height=50mm]{./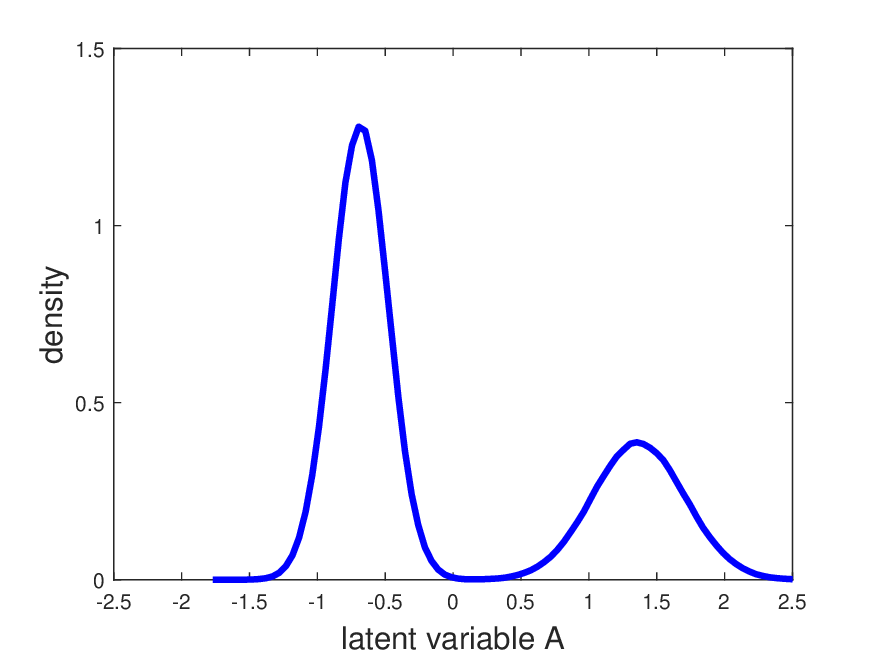}&	\includegraphics[width=50mm, height=50mm]{./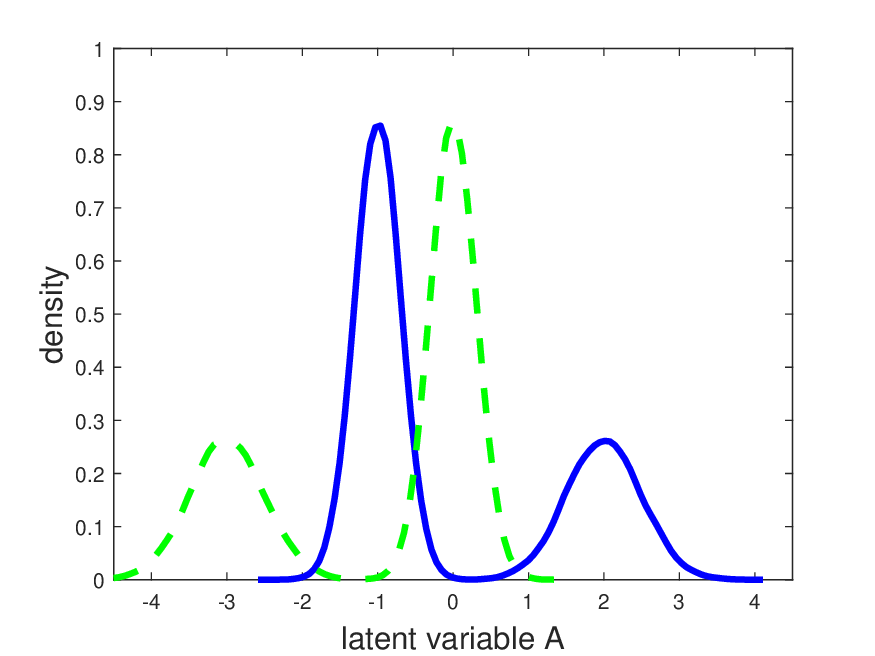}\\
		\end{tabular}
	\end{center}
	\par
	\textit{{\small Notes: In panel (a) we show the frequencies of covariates (i.e., the first component of $X$) in the cross-sectional model, and in (b) we show the true density of $A$ for the same model. In (c) we show the true densities of $A$ in the panel data model for $Y_0=0$ (in solid) and $Y_0=1$ (in dashed).}}
\end{figure}

In neighborhoods that consist of distributions of $A$ independent of $X$, the minimum-MSE influence function is given by Corollary \ref{MMSE_prob}, with $\nabla_{\beta}\delta=x_0\phi(x_0'\beta_0)$ for $\phi$ the standard normal density, $\Delta(a)=\mathbbm{1}\{x_0'\beta_0+a\geq 0\}$, and without $\gamma$ parameter. Given a preliminary estimator $\widehat{\beta}$ (e.g., obtained by probit), an empirical counterpart to $\overline{h}^{\rm MMSE}_{\epsilon}(a)$ is
$$\frac{1}{n}\sum_{i=1}^n \mathbbm{1}\{X_i'\widehat\beta+a\geq 0\}h_{\epsilon}^{\rm MMSE}(1,X_i)+(1-\mathbbm{1}\{X_i'\widehat\beta+a\geq 0\})h_{\epsilon}^{\rm MMSE}(0,X_i).$$
We compute $h_{\epsilon}^{\rm MMSE}(1,X_i)$ and $h_{\epsilon}^{\rm MMSE}(0,X_i)$, for $i=1,...,n$, based on Corollary \ref{MMSE_prob} by solving a linear system. In this model, all conditional expectations are available in closed form.

\begin{figure}[t!]\caption{Bounds on $\mathbb{E}_{\pi_0}[\boldsymbol{1}\{a+A\geq 0\}]$ in $\epsilon$-neighborhoods of the standard normal\label{Fig_CS_eps}}
	\begin{center}
		\begin{tabular}{ccc}
			(a) $\epsilon=0.1$ & (b) $\epsilon=1$\\
			\includegraphics[width=50mm, height=50mm]{./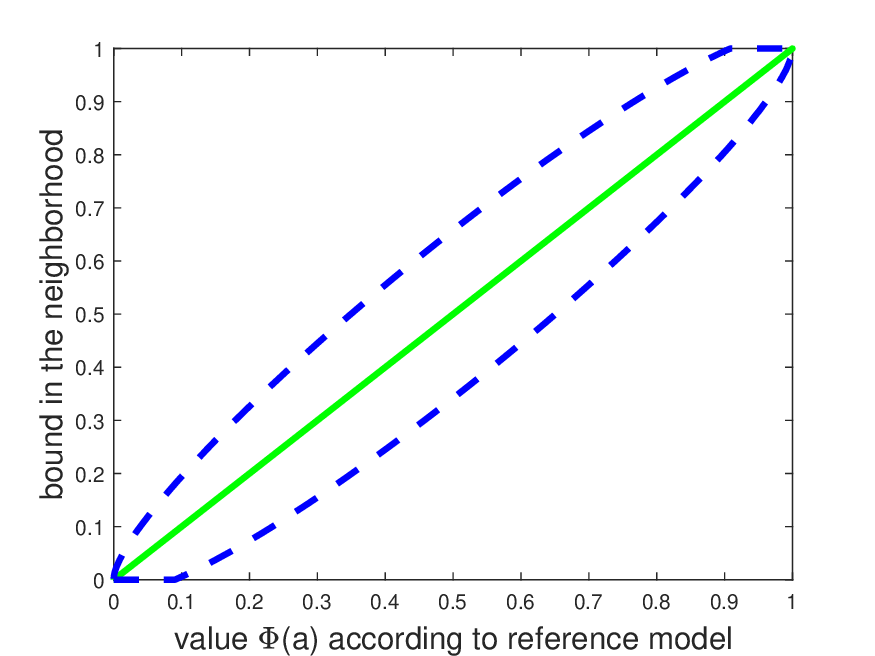}&\includegraphics[width=50mm, height=50mm]{./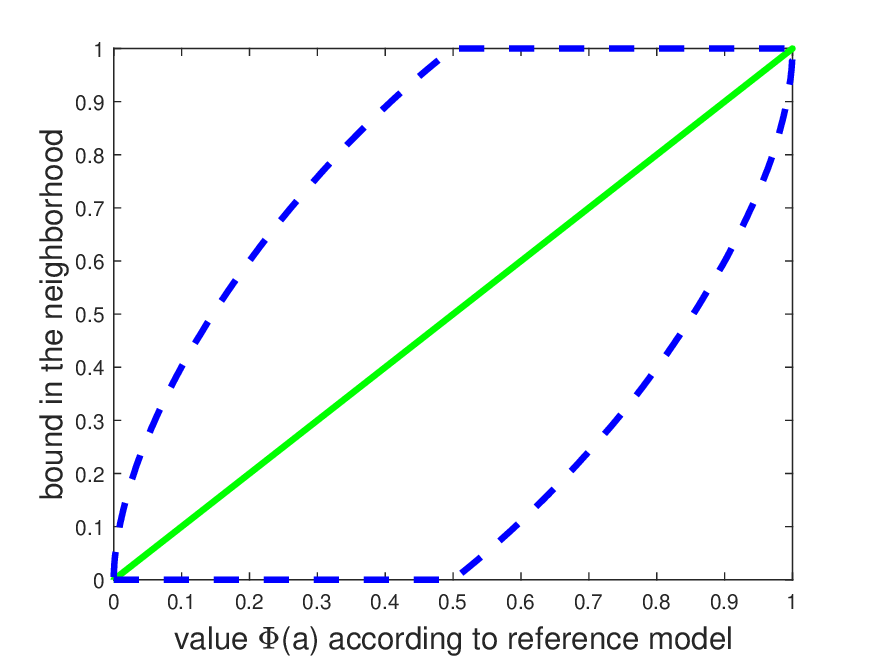}\\
		\end{tabular}
	\end{center}
	\par
	\textit{{\small Notes: The dashed lines show ${\limfunc{sup}}_{\pi_0\in\Gamma_{\epsilon}(\gamma_*)}\mathbb{E}_{\pi_0}[\boldsymbol{1}\{a+A\geq 0\}]$ and ${\limfunc{inf}}_{\pi_0\in\Gamma_{\epsilon}(\gamma_*)}\mathbb{E}_{\pi_0}[\boldsymbol{1}\{a+A\geq 0\}]$ as a function of $\Phi(a)$. The 45-degree line is in solid. Computations are based on small-$\epsilon$ approximations.}}
\end{figure}

In model (\ref{mod_bc_cs}), under independence between $A$ and $X$, $\beta_0$ and $\pi_0$ are point-identified up to scale under sufficiently rich support of $X$ (Manski, 1988). Under such conditions, ${\delta}_{\beta_0,\pi_0}$ is identified. More generally, it is partially identified. We now set up a simulation where the support of $X$ is discrete, and we vary the number of support points and the target $x_0$. In this way, we learn how our estimators and confidence intervals perform in settings where the support of $X$, and hence the size of the identified set, vary. 

We will show estimates in data generating processes (DGPs) with a scalar covariate and an intercept, and $\beta_0=(2,-1)'$, where the second element corresponds to the intercept. We draw $1000$ simulated samples of size $n=500$, where $A$ has mean zero and variance one, and is distributed as a mixture of two normals whose means are approximately two standard deviations apart. Covariates are discrete uniform on $[0,1]$, with either $n_X=4$ or $n_X=20$ points of support. We show the densities of $X$ and $A$ in panels (a) and (b) of Figure \ref{Fig_CS_DGP}. We focus on the predicted values at $x_0=(0.5,1)'$ and $x_0=(-0.5,1)'$, respectively. We refer to the first case as \textit{interpolation}, and to the second one as \textit{extrapolation}.

	We report minimum-MSE estimates and confidence intervals on a range of $\epsilon$ values. To provide intuition about orders of magnitude, one can compute (local approximations to twice) the KL divergence between the standard normal and other common distributions. For example, a scaled student-$t$ distribution with unitary variance and $5$, $3$, or $2.1$ degrees of freedom, respectively, corresponds to a distance of $0.14$, $0.24$, and $1.5$; the true bimodal $\pi_0$ 
	in the DGP corresponds to a distance of $1.6$; and the standardized logistic corresponds to a distance of $0.07$. Restricting $\pi_0$ to belong to an $\epsilon$-neighborhood of the normal also has implications for its functionals. As an example, in Figure \ref{Fig_CS_eps}, we show pointwise bounds on $\mathbb{E}_{\pi_0}[\boldsymbol{1}\{a+A\geq 0\}]$, as a function of $\Phi(a)$, computed using small-$\epsilon$ approximations. We see that taking $\epsilon=0.1$ tightly restricts possible values that the parameter can take. By contrast, when $\epsilon=1$, the \textit{a priori} bounds on the parameter are very wide for $a$ close to $0.5$ --- which is relevant for the \textit{interpolation} case --- but the neighborhood does restrict the parameter value when $a$ in close to zero or one --- which is relevant for the \textit{extrapolation} case. Given this, we will interpret $\epsilon$ values of the order of 0.1 or lower as reflecting ``mild'' misspecification, and values of the order of 1 or larger as corresponding to ``large'' misspecification.

	\begin{figure}[t!]	\caption{Minimum-MSE estimator in the cross-sectional binary choice model\label{Fig_CS}}
		\begin{center}
			\begin{tabular}{cc}
				\multicolumn{2}{c}{A. \textit{Interpolation}: $x_0=(0.5,1)'$}\\
				(a) $n_X=4$ & (b) $n_X=20$ 	\\
				\includegraphics[width=70mm, height=42mm]{./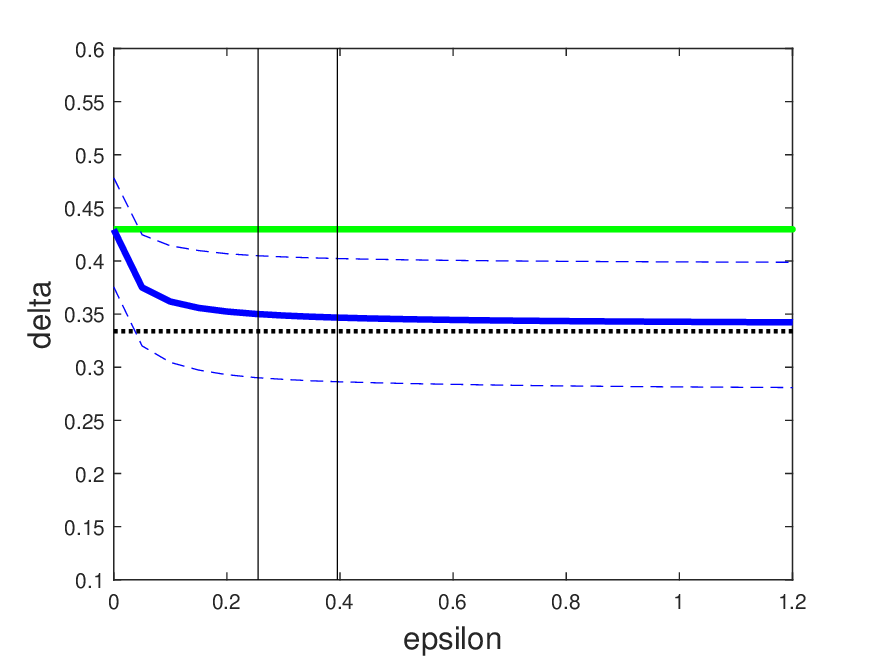}&	\includegraphics[width=70mm, height=42mm]{./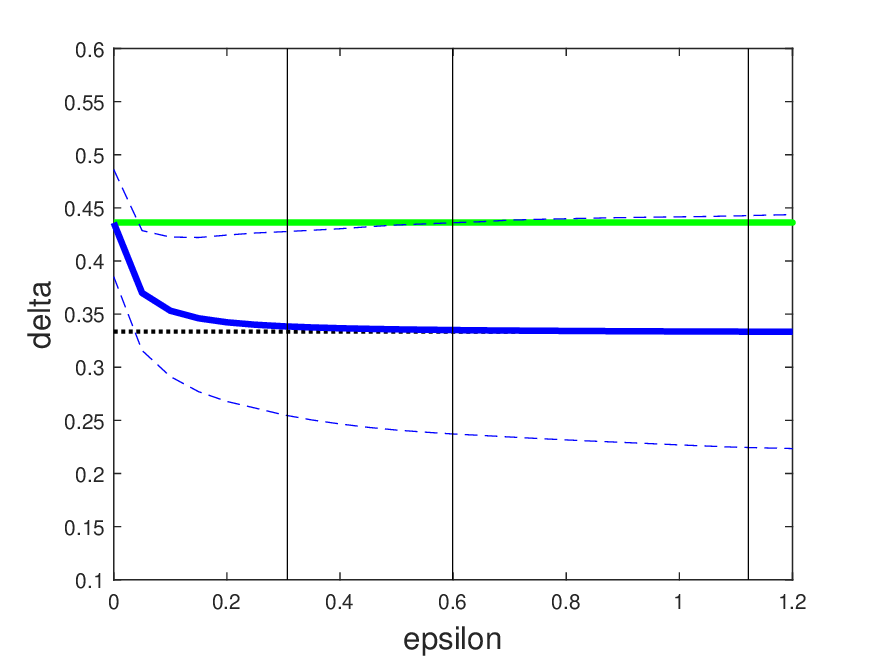}\\
				\multicolumn{2}{c}{B. \textit{Extrapolation}: $x_0=(-0.5,1)'$}\\
				(c) $n_X=4$  & (d) $n_X=20$ 	\\
				\includegraphics[width=70mm, height=42mm]{./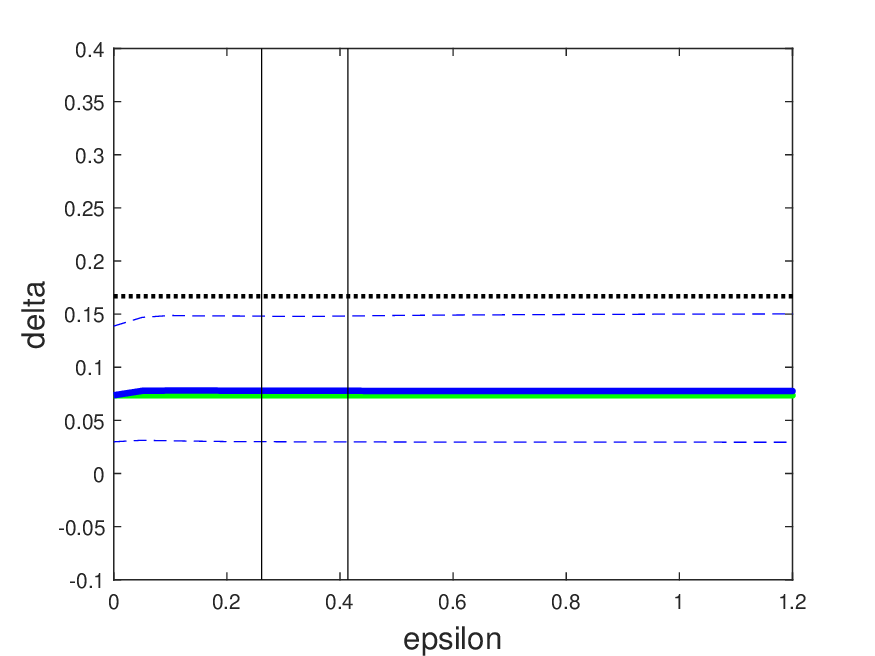}&	\includegraphics[width=70mm, height=42mm]{./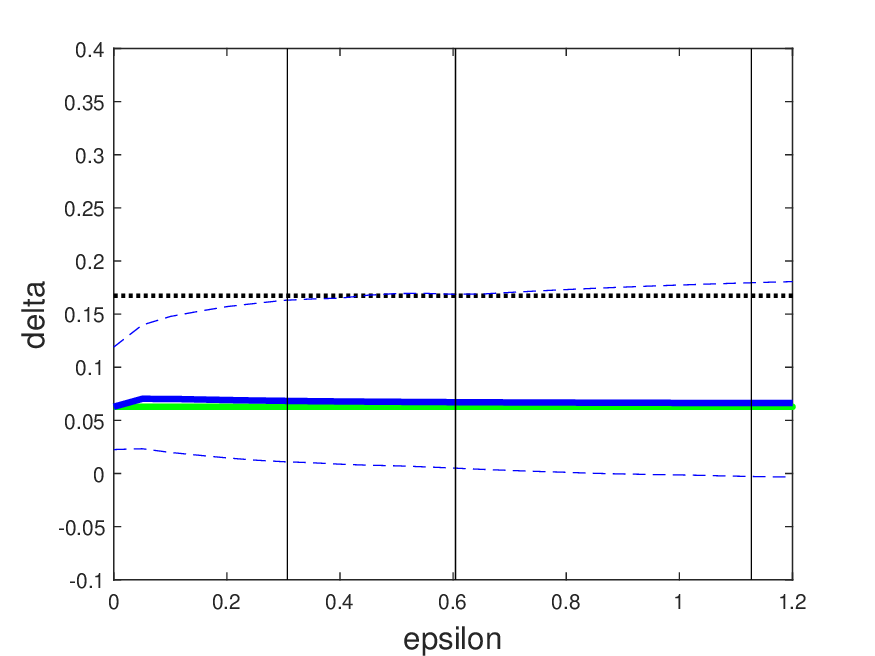}\\			\end{tabular}
		\end{center}
		\par
		\textit{{\small Notes: The solid horizontal line corresponds to the mean probit estimator among 1000 simulations, the solid curve to the mean minimum-MSE estimator (with 2.5\% and 97.5\% percentiles in dashed), and the dotted horizontal line to the truth {(i.e., ${\delta}_{\beta_0,\pi_0}$, for the $\beta_0$ and $\pi_0$ we used in the DGP)}. $\epsilon$ is reported on the x-axis, and the vertical lines indicate $\epsilon_k$ given by (\ref{seq_eps_gen}), for $k\in\{1,2,3\}$. $n_X$ denotes the number of points of support of the first component of $X$. $n=500$.}}
	\end{figure}

In this example, it is also informative to interpret $\epsilon$ by relating it to the power of a specification test, as we described in Subsection \ref{sec42}. When $X$ has 4 support points, $\widetilde{H}_\pi$ has only $n_X-1=3$ non-zero eigenvalues (since the $X'\beta$ partition the real line into $n_X+1$ intervals, and the two elements in $\beta$ are estimated), two of them corresponding to non-constant eigenfunctions. In this case, $\epsilon_1$ and $\epsilon_2$ given by (\ref{seq_eps_gen}) are approximately $0.2$ and $0.4$ on average, where we set size to $\alpha=5\%$ and power to $p=80\%$. When $X$ has 20 support points, $\widetilde{H}_{\pi}$ has $n_X-1=19$ non-zero eigenvalues corresponding to non-constant eigenfunctions. The first three values $\epsilon_1$, $\epsilon_2$, and $\epsilon_3$ are approximately $0.3$, $0.6$, and $1.1$ on average. In contrast with the parametric case of Section \ref{App_TW}, here setting $\epsilon\geq\epsilon_k$ only provides power guarantees along particular directions. In Appendix \ref{App_Fig}, we plot those directions, and we provide additional intuition about the interpretation of $\epsilon$ based on statistical testing in this example.

We show the results of the simulation in Figure \ref{Fig_CS}. Consider first the top panel, where we wish to interpolate the prediction function at $x_0=0.5$. When $X$ has 4 support points, we see that the probit estimator based on the reference model, indicated by the solid horizontal line, is substantially biased. By contrast, the bias of the minimum-MSE estimator is smaller, and it decreases as $\epsilon$ increases. We see that the minimum-MSE estimator is close to unbiased for both $\epsilon_1$ and $\epsilon_2$. Moreover, the dispersion of the estimator is stable as $\epsilon$ increases. In addition, we compute the identified set for $\delta_{\beta_0,\pi_0}$ in the DGP using linear programming and a grid of $\beta_0$ values. We find $[0.334,0.345]$, which shows that the identified set is not wide in this DGP. 

The case where $X$ has 20 support points is overall quite similar, but with several differences. We see that the minimum-MSE estimator is virtually unbiased when $\epsilon\geq \epsilon_1$. In this case, the identified set for $\delta_{\beta_0,\pi_0}$ is essentially a singleton: 
$[0.334,0.335]$. Moreover, we see that the variance of the minimum-MSE estimator increases with $\epsilon$. Such a variance increase, and the associated regularization role of $\epsilon$, also characterize models with continuously distributed covariates and other ill-posed inverse problems.

Lastly, consider the lower panel in Figure \ref{Fig_CS}. This \textit{extrapolation} case is very different from the \textit{interpolation} one. Indeed, the data provide little information about the value of the prediction function at $x_0=-0.5$. To illustrate, the identified set for $\delta_{\beta_0,\pi_0}$ is $[0,0.3219]$ (respectively, $[0,0.2956]$) when $X$ has 4 (resp., 20) points of support. We see that the minimum-MSE estimator has approximately the same bias as the probit estimator in this case. 

We show additional information about the simulation results in Tables \ref{TabProbit_CS_1} and \ref{TabProbit_CS_2} in the appendix. In particular, we report the lengths of our 95\% confidence intervals (CI) for $\delta_{\beta_0,\pi_0}$, which are asymptotically valid under $\epsilon$-misspecification, and the associated coverage probabilities. In all DGPs, we find that, when taking $\epsilon\geq \epsilon_1$, the confidence intervals contain the true value with a probability that exceeds 95\%.\footnote{While this finding is interesting, note that our CI construction has coverage guarantees only when $\pi_0$ belongs to an $\epsilon$-neighborhood of $\pi(\gamma_*)$, which is not the case here since the true distribution of $A$ lies outside the neighborhoods for the range of $\epsilon$ that we consider.}

\subsection{Dynamic panel data binary choice\label{subsec_panel}}

In this subsection, we present simulations in the following dynamic panel data probit model with individual effects
\begin{equation}\label{mod_bc_panel}
Y_{t}=\mathbbm{1}\left\{\beta_0 Y_{t-1}+A+U_{t}\geq 0\right\},\quad t=1,...,T,
\end{equation}
where $U_{1},...,U_{T}$ are i.i.d. standard normal, independent of $A$ and $Y_{0}$. Here $Y_{0}$ is observed, so there are effectively $T+1$ time periods. We focus on the average {state dependence} effect
$\delta_{\beta_0,\pi_0}=\mathbb{E}_{\pi_0}\left[\Phi(\beta_0+A)-\Phi(A)\right]$, and we will also report estimates of the autoregressive parameter $\beta_0$. We assume that the probit conditional likelihood given individual effects and lagged outcomes is correctly specified. However, we do not assume knowledge of $\pi_0$ or its functional form. We specify a normal reference density for $A$ given $Y_{0}$, with mean $\mu_1+\mu_2 Y_{0}$ and variance $\sigma^2$; hence here $\gamma=(\mu_1,\mu_2,\sigma^2)'$. Binary choice panel data models are often partially identified for fixed $T$ (Chamberlain, 2010, Honor\'e and Tamer, 2006), and no semi-parametrically consistent estimators of $\beta_0$ and $\delta_{\beta_0,\pi_0}$ in the dynamic probit model are available in the literature. Here we report simulation results suggesting that minimum-MSE estimators can perform well under sizable misspecification of the reference density. 

\begin{figure}[tb!]	\caption{Minimum-MSE estimator in the dynamic panel binary choice model\label{Fig_probit_mispec_pd}}
	\begin{center}
		\begin{tabular}{ccc}
			\multicolumn{3}{c}{A. Average state dependence $\delta_{\beta_0,\pi_0}$}\\
			(d) $T=5$ & (e) $T=10$ & (f) $T=20$  	\\
			\includegraphics[width=50mm, height=40mm]{./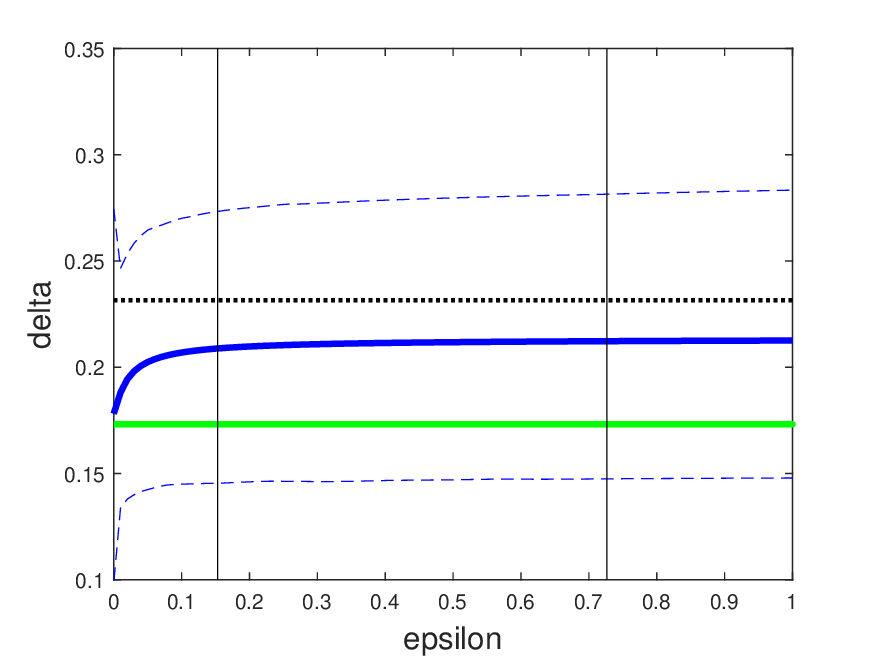}&	\includegraphics[width=50mm, height=40mm]{./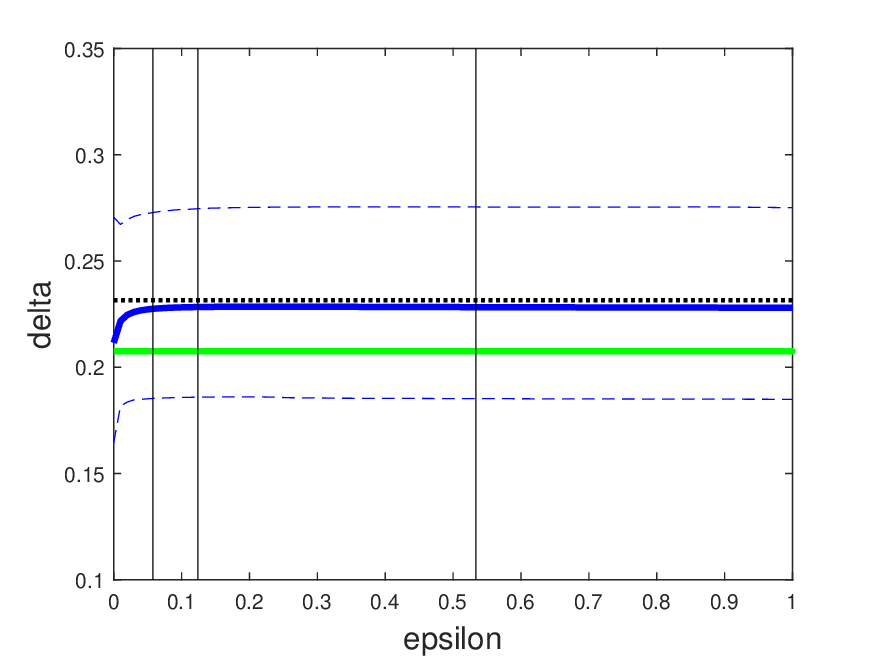}&\includegraphics[width=50mm, height=40mm]{./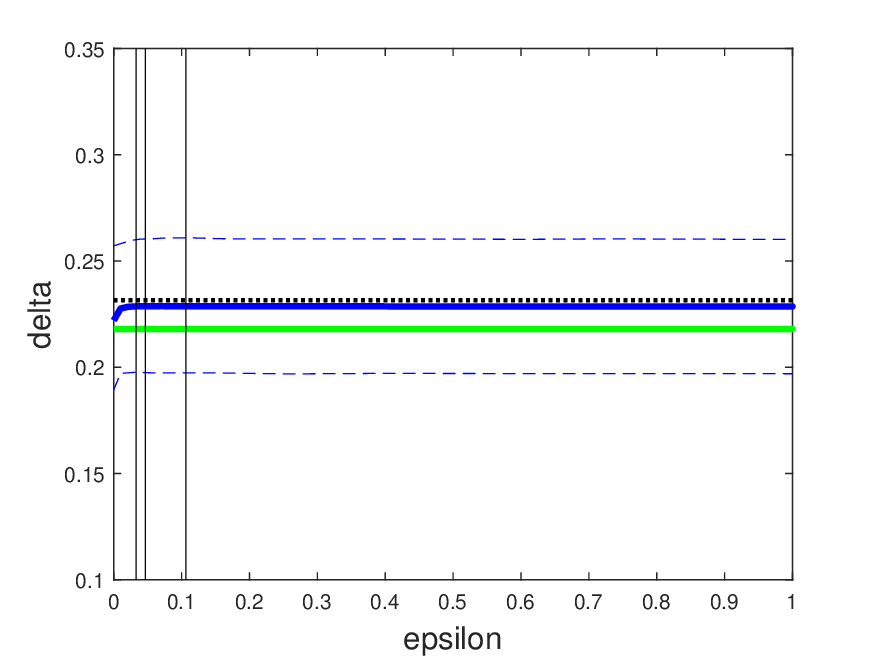}\\\multicolumn{3}{c}{B. Autoregressive parameter $\beta_0$}\\
			(a) $T=5$ & (b) $T=10$ & (c) $T=20$  	\\
			\includegraphics[width=50mm, height=40mm]{./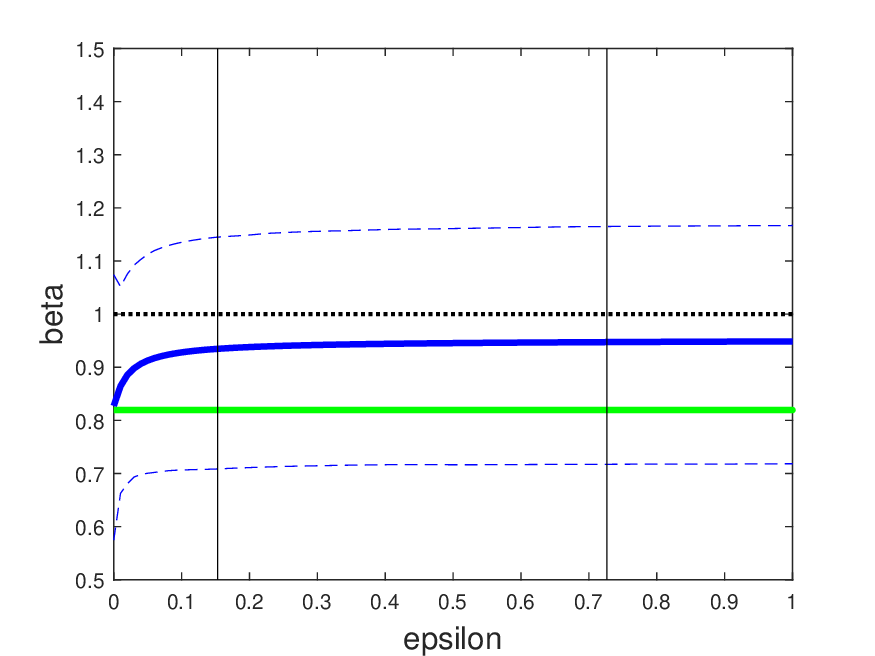}&	\includegraphics[width=50mm, height=40mm]{./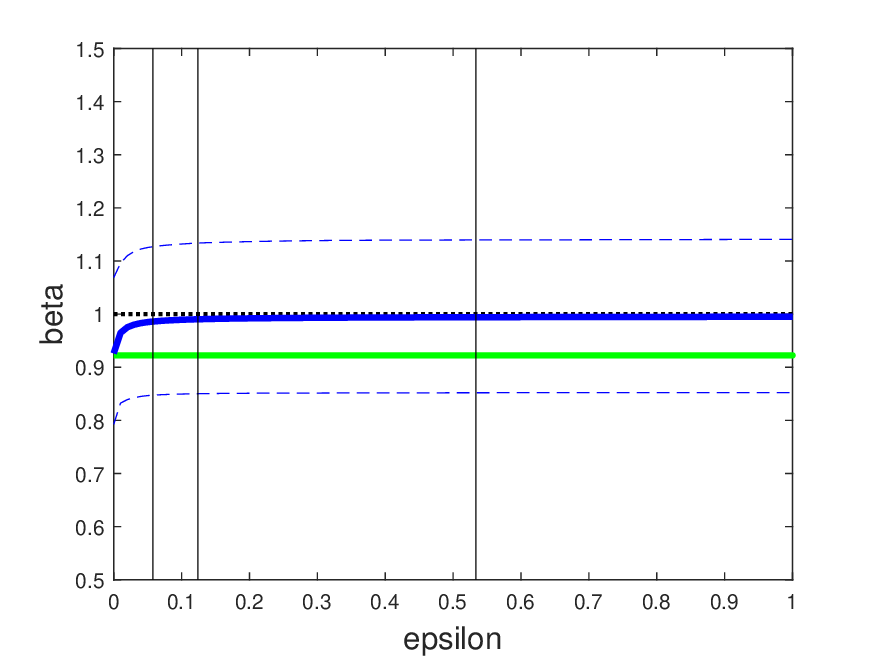}&\includegraphics[width=50mm, height=40mm]{./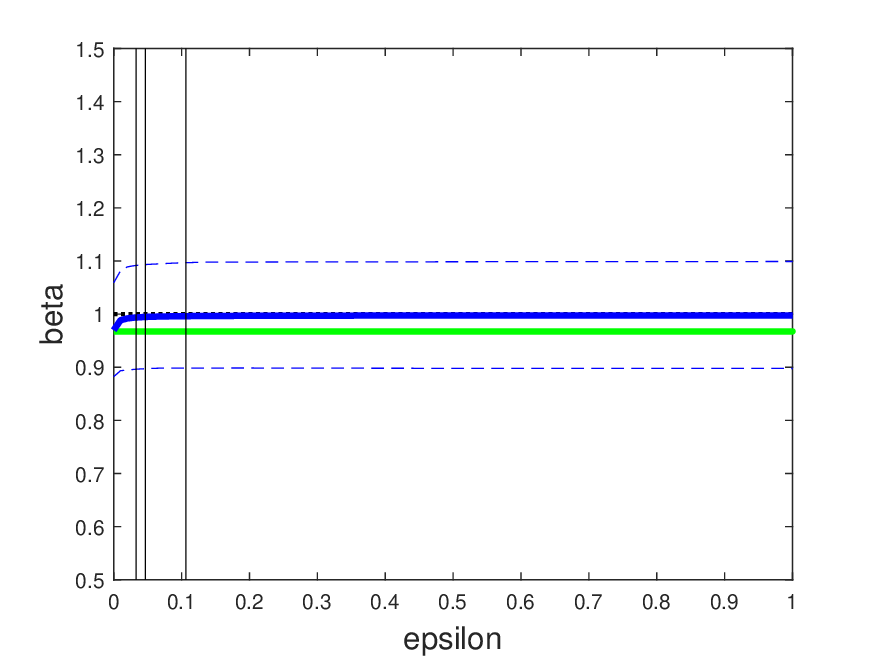}\\
		\end{tabular}
	\end{center}
	\par
	\textit{{\small Notes: The solid horizontal line corresponds to the mean random-effects estimator among 1000 simulations, the solid curve to the mean minimum-MSE estimator (with 2.5\% and 97.5\% percentiles in dashed), and the dotted horizontal line to the truth. $\epsilon$ is reported on the x-axis, and the vertical lines indicate $\epsilon_k$,  $k\in\{1,2\}$ (left column) and $k\in\{1,2,3\}$ (middle and right columns). In the left column, $\epsilon_3$ is too large to be included in the figure. $n=500$.}}
\end{figure}

In the simulation, we set a bimodal distribution that has modes $\{-1,2\}$ when $Y_0=0$ and $\{-3,0\}$ when $Y_0=1$, with some asymmetry between the two modes; see panel (c) of Figure \ref{Fig_CS_DGP}. {We draw $Y_0$ from a Bernoulli distribution with probability $0.5$.} We take $n=500$, and show the results for $T=5$, $10$, and $20$, based on $1000$ simulations. In neighborhoods that consist of unrestricted joint distributions $\pi_0$ of $(A,X)$, the minimum-MSE $h$ function is given by Corollary \ref{MMSE_prob2}, for $X=Y_{0}$, and either $\Delta(a)=\Phi(\beta_0+a)-\Phi(a)$ or $\Delta(a)=\beta_0$, depending on the quantity of interest. We use $S=1000$ simulated draws to compute the minimum-MSE estimators, since no closed-form solution is available in this case {(see Appendix \ref{App_compute})}.

\begin{table}[tbp]
	\caption{Monte Carlo simulation in the dynamic binary choice panel data model: comparison between various estimators\label{TabProbit}}
	\begin{center}
		\begin{tabular}{l||ccc|cccc}
			$T=$  & 5 & 10 & 20  & 5 & 10 & 20 \\\hline\hline 
			&\multicolumn{3}{c}{Bias} & \multicolumn{3}{c}{Root MSE}\\\hline
			&\multicolumn{6}{c}{A. Average state dependence $\delta_{\beta_0,\pi_0}$}\\\hline
			Random-effects  &  -0.0585  & -0.0252 &  -0.0140
			& 0.0633 &    0.0311  &  0.0198\\
			Empirical Bayes & -0.0574  & -0.0215 &  -0.0053 &  0.0622 &   0.0282 &    0.0141\\
			Linear probability  &  -0.2491 &  -0.0976  &  0.0012 &  0.2497  &  0.0990 &   0.0128\\
			Minimum-MSE ($\epsilon_1$) &   -0.0227 &  -0.0057   &-0.0029& 0.0397   & 0.0232 &   0.0154
			\\
			Minimum-MSE ($\epsilon_2$) &   -0.0194 &  -0.0048  & -0.0028& 0.0388  &  0.0233   & 0.0154
			\\
			Minimum-MSE ($\epsilon_3$) & -0.0196 &  -0.0049 &  -0.0027 & 0.0412  &  0.0235 &   0.0155
			\\\hline
			&\multicolumn{6}{c}{B. Autoregressive parameter $\beta_0$}\\\hline
			Maximum likelihood  & -0.1804 &  -0.0817 &  -0.0328
			&  0.2003 &   0.1001  &  0.0506 \\
			Minimum-MSE ($\epsilon_1$)   &  -0.0646 &  -0.0198  & -0.0055
			& 0.1288   & 0.0747 &   0.0479\\
			Minimum-MSE ($\epsilon_2$)   &  -0.0522  & -0.0155  & -0.0045 & 0.1258  &  0.0746 &   0.0481\\
			Minimum-MSE ($\epsilon_3$)   & -0.0432 &  -0.0116 &  -0.0030 &  0.1282  &  0.0747  &  0.0486\\\hline
		\end{tabular}
	\end{center}
	\par
	\textit{{\small Notes: Performance of various estimators in the dynamic panel data binary choice model, for different values of $T$. $n=500$, results for $1000$ simulations. }}
\end{table}

In Figure \ref{Fig_probit_mispec_pd}, we see that the parametric random-effects dynamic probit estimates of $\delta_{\beta_0,\pi_0}$ and  $\beta_0$ are substantially biased for $T=5$ and $T=10$, whereas the bias is smaller when $T=20$. By contrast, the minimum-MSE estimator performs better in terms of bias for both quantities of interest, in particular when taking $\epsilon$ to be one of the first few $\epsilon_k$'s given by (\ref{seq_eps_gen}). In the top panel of Table \ref{TabProbit}, we show the bias and root MSE of various estimators of $\delta_{\beta_0,\pi_0}$: the random-effects estimator based on the normal reference model, an empirical Bayes estimator, the linear probability estimator, and the minimum-MSE estimators based on $\epsilon_1,\epsilon_2,\epsilon_3$.\footnote{The random-effects and empirical Bayes estimators are given by $\frac{1}{n}\sum_{i=1}^n\mathbb{E}_{\pi(\widehat{\gamma})}[\Phi(\widehat{\beta}+A)-\Phi(A)]$ and $\frac{1}{n}\sum_{i=1}^n\mathbb{E}_{\pi(\widehat{\gamma})}[\Phi(\widehat{\beta}+A)-\Phi(A)\,|\, Y=Y_i]$, respectively. In fixed-lengths panels both estimators are consistent under the parametric reference specification, and the random-effects estimator is efficient. However, the two estimators are generally biased under misspecification. { Bonhomme and Weidner (2021) show that the empirical Bayes estimator has minimum local asymptotic worst-case specification error, albeit in neighborhoods of the reference model where the probit conditional likelihood given individual effects and lagged outcomes may be incorrectly specified.}
} We see that the minimum-MSE estimator dominates all other estimators, for these $\epsilon_k$ values, when $T=5$ and $T=10$. In the bottom panel  of Table \ref{TabProbit}, we show the results for the random-effects MLE and minimum-MSE estimators of $\beta_0$. The results are similar to the case of average state dependence. In this DGP, minimum-MSE estimators achieve bias reduction under misspecification, even when $T$ is quite small. Bias reduction comes with some increase in variance, yet, the overall MSE is lower for minimum-MSE estimators compared to the MLE. Lastly, in Tables \ref{TabProbit_PD_1} and \ref{TabProbit_PD_2} in the appendix, we show additional information about the simulation results, for the autoregressive parameter and the average state dependence parameter, respectively.

\section{Conclusion\label{Sec_conclu}}

We propose a framework for estimation and inference in the presence of model misspecification. This allows researchers to perform sensitivity analysis for existing estimators, and to construct improved estimators and confidence intervals that are less sensitive to model assumptions. Our approach is based on a minimax mean squared error rule, which consists of a one-step adjustment of the initial estimate. This adjustment is motivated by both robustness and efficiency, and it remains valid when the identification of the ``large'' model is irregular or point-identification fails. Hence, our approach provides a {complement to partial identification methods}, when the researcher sees her reference model as a plausible, albeit imperfect, approximation to reality. Given a parametric reference model, implementing our estimators and confidence intervals {does not} require estimating a {larger model}. This is an attractive feature in complex models such as dynamic structural models, for which sensitivity analysis methods are needed. {Lastly, while our theory applies quite generally, we have provided explicit expressions and described implementation in two specific classes of problems: parametric models, and semi-parametric likelihood models with a mixture structure. Generalizing the applicability of the approach to other semi-parametric models is an important task for future work.}

\clearpage
 
 \baselineskip12pt

  \clearpage

\appendix

\renewcommand{\theequation}{\thesection \arabic{equation}}

\renewcommand{\thelemma}{\thesection \arabic{lemma}}

\renewcommand{\theproposition}{\thesection \arabic{proposition}}

\renewcommand{\thecorollary}{\thesection \arabic{corollary}}

\renewcommand{\thetheorem}{\thesection \arabic{theorem}}

\renewcommand{\theassumption}{\thesection \arabic{assumption}}

\renewcommand{\thefigure}{\thesection \arabic{figure}}

\renewcommand{\thetable}{\thesection \arabic{table}}

\setcounter{equation}{0}
\setcounter{table}{0}
\setcounter{figure}{0}
\setcounter{assumption}{0}
\setcounter{proposition}{0}
\setcounter{lemma}{0}
\setcounter{corollary}{0}
\setcounter{theorem}{0}

\baselineskip16pt

\begin{center}
	{\small {\LARGE APPENDIX} }
\end{center}

 \section{Main results\label{App_general}}

 In this section of the appendix we provide the proofs for the main results of Section \ref{Sec_outline}. As in the rest of the paper, we always implicitly assume that all functions of $y$
 are measurable, and that all expectations and integrals over $y$
 are well-defined.
 
 \subsection{Proof of Theorem \ref{theo1}}
 \setcounter{lemma}{0}
 
 \subsubsection{Notation and assumptions}
 \label{app:MoreNotation}

 In all our applications $\Pi$ is either a vector space or an affine space.
 Let $\overline {\cal T}$ and ${\cal T}$ be the tangent 
 and cotangent spaces of $\Pi$ at $\pi(\gamma_*)$. Thus, for $\pi_1,\pi_2 \in \Pi$
 we have $(\pi_1 - \pi_2) \in \overline {\cal T}$, and ${\cal T}$ is the set of linear maps 
 $u:\overline {\cal T} \rightarrow \mathbb{R}$.
 For a scalar function $q : \Pi \mapsto \mathbb{R}$,
 we have $\nabla_\pi q_{\pi(\gamma_*)} \in {\cal T}$; that is, the typical element of ${\cal T}$
 is a gradient. Conversely, for a map to $\Pi$, such as $\gamma \mapsto \pi(\gamma)$, we have
 $ \frac{\partial \pi(\gamma_*)} {\partial \gamma_k} \in \overline {\cal T}$.
 
 For $v \in \overline {\cal T}$  and $u \in {\cal T}$  we use the bracket notation
 $\langle v,u \rangle \in \mathbb{R}$ to denote the bilinear mapping. Here we are not assuming a Hilbert space structure, and we only use the bracket notation to combine vectors $v$ and covectors $u$ into a scalar.

 Our squared distance measure $d(\pi_0,\pi(\gamma_*))$ on $\Pi$ induces
 a norm on the tangent space $\overline {\cal T}$, namely for $v \in \overline {\cal T}$,
 \begin{align*}
 \| v \|^2_{{\rm ind},\gamma_*} = \lim_{\epsilon \rightarrow 0} \frac{ d\big( \pi(\gamma_*) + \epsilon^{1/2} v, \pi(\gamma_*) \big) } {\epsilon } .
 \end{align*}

 Throughout we assume that $\dim \beta$ and $\dim \gamma$ are finite. For any finite-dimensional vectors 
 we use the standard Euclidean norm $\| \cdot \|$, and for any finite-dimensional matrices
 we use the spectral matrix norm, which we also denote by $\| \cdot \|$. 
Let ${\cal Y}$ denote the support of $Y$.

 \begin{assumption}
 	\label{ass:Expansion}
 	We assume that $Y_i \sim i.i.d. f_{\beta_0,\pi_0}$.
 	In addition, we impose the following regularity conditions:
 	\begin{itemize}
 		\item[(i)] We consider $n \rightarrow \infty$ and $\epsilon \rightarrow 0$
 		such that $\epsilon  n \rightarrow c$, for some constant $c \in (0,\infty)$.
 		
 		\item[(ii)]
 		$   \sup_{\pi_0 \in \Gamma_{\epsilon}(\gamma_*)} 
 		\left\|  \nabla_{\pi} \delta_{\beta_0,\pi_0}  \right\|_{\gamma_*} = O(1)$, 
 		and \\
 		$\sup_{\pi_0 \in \Gamma_{\epsilon}(\gamma_*)} 
 		\left|  \delta_{\beta_0,\pi_0} - \delta_{\beta_0,\pi(\gamma_*)}
 		-  \left\langle \pi_0 - \pi(\gamma_*) , \nabla_\pi \delta_{\beta_0,\pi(\gamma_*)} \right\rangle  \right| = o(\epsilon^{1/2})$.

 		\item[(iii)] 
 		$ \sup_{\pi_0 \in \Gamma_{\epsilon}(\gamma_*)}   
 		\displaystyle
 		\left\{   \int_{\cal{Y}}    \left[  f^{1/2}_{\beta_0,\pi_0}(y) - f^{1/2}_{\beta_0,\pi(\gamma_*)}(y)  \right]^{2}  dy 
 		\right\}^{1/2} = O(\epsilon^{1/2})$,
 		\\  
 		$ \sup_{\pi_0 \in \Gamma_{\epsilon}(\gamma_*)}    
 		\displaystyle  \int_{\cal{Y}}   \left\| \nabla_\pi \log f_{\beta_0,\pi(\gamma_*)}(y)   \right\|^2_{\gamma_*}   \left[  f^{1/2}_{\beta_0,\pi_0}(y) - f^{1/2}_{\beta_0,\pi(\gamma_*)}(y)
 		\right]^2   dy = o(1)$,
 		\\
 		$  \sup_{\pi_0 \in \Gamma_{\epsilon}(\gamma_*)}    
 		\displaystyle  \int_{\cal{Y}}  \left[  f^{1/2}_{\beta_0,\pi_0}(y) - f^{1/2}_{\beta_0,\pi(\gamma_*)}(y)
 		-   \left\langle  \pi_0 - \pi(\gamma_*) , \nabla_\pi   f^{1/2}_{\beta_0,\pi(\gamma_*)}(y)  \right\rangle    \right]^2   dy = o(\epsilon)$.

 		\item[(iv)] 
 		$\sup_{\pi_0 \in \Gamma_{\epsilon}(\gamma_*)}  
 		\epsilon^{-1/2} \left\| \pi_0 - \pi(\gamma_*) \right\|_{{\rm ind},\gamma_*}  
 		= 1+ o(1)$. 
 		Furthermore, for $u_\epsilon  \in {\cal T}$ with 
 		$\left\| u_\epsilon    \right\|_{\gamma_*} = O(1)$ we have
 		$$\left|   \sup_{\pi_0 \in \Gamma_{\epsilon}(\gamma_*)} 
 		\epsilon^{-1/2}    \left\langle  \pi_0 - \pi(\gamma_*) ,  u_\epsilon \right\rangle
 		-
 		\left\| u_\epsilon 
 		\right\|_{\gamma_*}
 		\right|
 		=   o(1) .$$

 		\item[(v)]  
 		For some $\nu>0$ we have
 		$  \sup_{\pi_0 \in \Gamma_{\epsilon}(\gamma_*)}    \mathbb{E}_{\beta_0,\pi_0}    \left\|   \nabla_{\beta\gamma} \log f_{\beta_0,\pi(\gamma_*)}(Y)  	 \right\|^{2+\nu} = O(1)$, \\
 		and $  \sup_{\pi_0 \in \Gamma_{\epsilon}(\gamma_*)}    \mathbb{E}_{\beta_0,\pi_0}    \left\|     \nabla_{\pi} \log f_{\beta_0,\pi(\gamma_*)}(Y)  	 \right\|_{\gamma_*}^{2+\nu} = O(1)$. \\ Furthermore we assume that
 		$  \left\| \nabla_{\beta\gamma}  \delta_{\beta_0,\pi(\gamma_*)} \right\| = O(1)$, and
 		$ \left\|   H_{\beta\gamma} ^{-1}  \right\| = O(1)$.
 	\end{itemize}
 	
 \end{assumption}
 
 Part (i) of Assumption~\ref{ass:Expansion} describes our asymptotic framework, 
 where the assumption $\epsilon  n \rightarrow c$ is required to ensure that the squared worst-case bias (of order $\epsilon$)
 and the variance (of order $1/n$) of the estimators for $\delta_{\beta_0,\pi_0}$ are asymptotically of the same order, so that the MSE
 provides a meaningful balance between bias and variance asymptotically. Part (ii) requires $\delta_{\beta_0,\pi_0}$ to be 
 sufficiently smooth in $\pi_0$, so that a first-order Taylor expansion provides a good local approximation to $\delta_{\beta_0,\pi_0}$.
 
Part (iii) of Assumption~\ref{ass:Expansion} is a smoothness assumption on $f_{\beta_0,\pi_0}(y)$ in $\pi_0$. Those conditions may not 
look  intuitive, in particular when $\pi_0$ is infinite-dimensional, so we want to discuss that assumption in some more detail here for 
the case of the semi-parametric mixture models introduced in Section~\ref{subsec_semipar},
where$ f_{\beta_0,\pi_0}(y) = \int_{\cal A} \, g_{\beta_0}(y|a) \, \pi_0(a) \, da$.
In that case we have
\begin{align*} 
        \int_{\cal{Y}}    \left[  f^{1/2}_{\beta_0,\pi_0}(y) - f^{1/2}_{\beta_0,\pi(\gamma_*)}(y)  \right]^{2}  dy 
        &= 2 \, H^2(f_{\beta_0,\pi_0} ,  f_{\beta_0,\pi(\gamma_*)})
        \\
    &\leq 2 \, D_{\rm KL}(    f_{\beta_0,\pi_0} || f_{\beta_0,\pi(\gamma_*)})
      \leq 2 \, D_{\rm KL}(    \pi_0  || \pi(\gamma_*) ) ,  
\end{align*}
where the first inequality is the general relation 
$H^2(f_{\beta_0,\pi_0} ,  f_{\beta_0,\pi(\gamma_*)}) \leq D_{\rm KL}(    f_{\beta_0,\pi_0} || f_{\beta_0,\pi(\gamma_*)})$ between the squared Hellinger distance
$H^2$ and the Kullback-Leibler divergence $ D_{\rm KL}$, and the second inequality is sometimes called the 
``chain rule'' for the Kullback-Leibler divergence, which can be derived by an application of Jensen's inequality. 
Since we defined our distance measure $d(\pi_0,\pi(\gamma_*))$ in the semi-parametric mixture case to be twice the 
Kullback-Leibler divergence $2D_{\rm KL}(    \pi_0  || \pi(\gamma_* )) =2 \, \mathbb{E}_{\pi_0}  \log[\pi_0(A) / \pi(A\,|\,\gamma_*)]$
we find that
\begin{align*}
      \sup_{\pi_0 \in \Gamma_{\epsilon}(\gamma_*)}   
 		\displaystyle
 		\left\{   \int_{\cal{Y}}    \left[  f^{1/2}_{\beta_0,\pi_0}(y) - f^{1/2}_{\beta_0,\pi(\gamma_*)}(y)  \right]^{2}  dy 
 		\right\}^{1/2} 
    &\leq   \sup_{\pi_0 \in \Gamma_{\epsilon}(\gamma_*)}  \left\{   d(\pi_0,\pi(\gamma_*)) \right\}^{1/2} = \epsilon^{1/2} .
\end{align*}
Thus, the first condition in Assumption~\ref{ass:Expansion}(iii) is satisfied for those semi-parametric mixture models.

 The second condition in Assumption~\ref{ass:Expansion}(iii)  can be justified by 
imposing that $$\sup_{y \in {\cal Y}}   \left\| \nabla_\pi \log f_{\beta_0,\pi(\gamma_*)}(y)   \right\|^2_{\gamma_*}  = O(1),$$
which for the semi-parametric mixture model can equivalently be written as
\begin{align}
\sup_{y \in {\cal Y}} 
   \frac{  {\rm Var}_{\pi(\gamma_*)} \, \left[ g_{\beta_0}(y\,|\,A) \right]   }
  {\left[ \mathbb{E}_{\pi(\gamma_*)} \, g_{\beta_0} (y\,|\,A) \right]^2} = O(1) .
     \label{BoundSupGradient}
 \end{align} 
 For any standard discrete choice model  (as those discussed in Section~\ref{Sec_numeric}) 
we have that $\sup_{y \in {\cal Y}} {\rm Var}_{\pi(\gamma_*)} \,[ g_{\beta_0}(y\,|\,A) ]< \infty$,
and $\inf_{y \in {\cal Y}}  \mathbb{E}_{\pi(\gamma_*)} \, g_{\beta_0} (y\,|\,A) >0$,
implying that equation \eqref{BoundSupGradient} is satisfied. 
We then have
\begin{align*}
    & \sup_{\pi_0 \in \Gamma_{\epsilon}(\gamma_*)}    
 		   \int_{\cal{Y}}   \left\| \nabla_\pi \log f_{\beta_0,\pi(\gamma_*)}(y)   \right\|^2_{\gamma_*}   \left[  f^{1/2}_{\beta_0,\pi_0}(y) - f^{1/2}_{\beta_0,\pi(\gamma_*)}(y)
 		\right]^2   dy 
\\		
    & \qquad
    \leq	\underbrace{ \left[\sup_{y \in {\cal Y}}   \left\| \nabla_\pi \log f_{\beta_0,\pi(\gamma_*)}(y)   \right\|^2_{\gamma_*} \right]
    }_{=O(1)}
\underbrace{ \left\{
     \sup_{\pi_0 \in \Gamma_{\epsilon}(\gamma_*)}    
 		   \int_{\cal{Y}}    \left[  f^{1/2}_{\beta_0,\pi_0}(y) - f^{1/2}_{\beta_0,\pi(\gamma_*)}(y)
 		\right]^2   dy 	 \right\}
		}_{\leq \, \epsilon \, = o(1)}
		= o(1) .
\end{align*}
Thus, one way to justify the second condition  in Assumption~\ref{ass:Expansion}(iii) is to argue that equation \eqref{BoundSupGradient} holds,
which is the case for our illustrations in Section~\ref{Sec_numeric}.
 The last condition in Assumption~\ref{ass:Expansion}(iii) could be broken down analogously for semi-parametric mixture models,
 but it is actually a standard condition of  differentiability in quadratic mean that is also regularly
 imposed when $\pi$ is infinite-dimensional (see, e.g., equation (5.38) in Van der Vaart, 2007).

 Part (iv) of Assumption~\ref{ass:Expansion} requires that our
 distance measure $d(\pi_0,\pi(\gamma_*))$ converges to the associated norm for small values $\epsilon$ in a smooth fashion. Finally, part (v) requires invertibility of $H_{\beta\gamma}$  (but invertibility of $ H_{\pi}$ or $\widetilde H_{\pi}$ are {\it not} required), 
 uniform boundedness of various derivatives, and of the $(2+\nu)$-th moment of $\nabla_\pi \log f_{\beta_0,\pi(\gamma_*)}(Y) $  --- which again can be justified by equation \eqref{BoundSupGradient}, because we then have
  $\sup_{y \in {\cal Y}}   \left\| \nabla_\pi \log f_{\beta_0,\pi(\gamma_*)}(y)   \right\|^2_{\gamma_*}  = O(1)$.
 
 For many of the proofs we only need the regularity 
 conditions in  Assumption~\ref{ass:Expansion}. However, in order to describe the 
 properties of our minimum-MSE estimator $\widehat \delta\,^{\rm MMSE}_\epsilon 
 = \delta_{\widehat \beta,\pi(\widehat \gamma)}+\frac 1 n \sum_{i=1}^n h_\epsilon^{\rm MMSE}(Y_i,\widehat \beta,\widehat \gamma)$ 
 we also need to account for the fact that $\widehat \beta$ and $\widehat \gamma$ themselves are estimated. It turns our that the leading-order asymptotic properties
 of   $\widehat \delta\,^{\rm MMSE}_\epsilon $ are independent of whether $\beta_0$ and $\gamma_*$ are known or estimated 
 in the construction of $\widehat \delta\,^{\rm MMSE}_\epsilon $ (see, e.g., Lemma~\ref{lemma:EtaEstimatorEffect} below),
 but formally showing this requires some additional assumptions, which we present next.

 \begin{assumption}
 	\label{ass:Expansion2}
 	For some $\chi>2$ we have
 	\begin{itemize}
 		\item[(i)] 
 		$\displaystyle  \sup_{\pi_0 \in \Gamma_{\epsilon}(\gamma_*)} \left( \mathbb{E}_{\beta_0,\pi_0}
 		\left\| {\widehat \beta \choose \widehat \gamma}
 		- {\beta_0 \choose \gamma_*} \right\|^{\chi} \right)^{\frac 1 {\chi}} = O\left( \frac 1 {\sqrt{n}} \right)$.
 		
 		\item[(ii)]
 		$\displaystyle  \sup_{\pi_0 \in \Gamma_{\epsilon}(\gamma_*)}  \mathbb{E}_{\beta_0,\pi_0}  
 		\left\|  \nabla_\eta \, h^{\rm MMSE}_\epsilon(Y,  \beta_0,\gamma_*) \right\|
 		= O(1)$, where $\eta=(\beta', \gamma')'$.
 		
 		\item[(iii)]
 		$\displaystyle     \sup_{\pi_0 \in \Gamma_{\epsilon}(\gamma_*)}   \mathbb{E}_{\beta_0,\pi_0} 
 		\sup_{\beta \in {\cal B}, \gamma \in {\cal G}}
 		\left\|  \frac 1 n \sum_{i=1}^n   \nabla^2_{\eta \eta'} h^{\rm MMSE}_\epsilon(Y_i,  \beta,\gamma)  \right\|  = O(1) $,
 		where $\eta=(\beta', \gamma')'$.
		
		\item[(iv)] 
  $ \displaystyle
 	\sup_{\pi_0 \in \Gamma_{\epsilon}(\gamma_*)}  \mathbb{E}_{\beta_0,\pi_0}  \left[ h^{\rm MMSE}_{\epsilon}(Y,\beta_0,\gamma_*)  \right]^{2+\nu}
 	=  O(1) ,
 	$ for some $\nu>0$.

 	\end{itemize}

 \end{assumption}    
 
 Part (i) of Assumption~\ref{ass:Expansion2} requires $ \widehat \beta$ and $\widehat \gamma$ to converge at 
 $\sqrt{n}$ rate. As discussed in the main text, we assume that preliminary estimators have finite $\chi$-moments where $\chi>2$. Part (ii) of  Assumption~\ref{ass:Expansion2}  requires a uniformly bounded second moment
 for $ \nabla_\eta h^{\rm MMSE}_\epsilon(y,\beta_0,\gamma_*)$. Since equation \eqref{SolutionMMSE} in the main text gives  an explicit expression for 
 $h^{\rm MMSE}_\epsilon(y,\beta_0,\gamma_*)$, we could replace Assumption~\ref{ass:Expansion2}(ii) by appropriate assumptions
 on the model primitives $f_{\beta_0,\pi_0}(y)$ and $\delta_{\beta_0,\pi_0}$, but for the sake of brevity  we state the assumption in terms of $h^{\rm MMSE}_\epsilon(y,\beta_0,\gamma_*)$.
 The same is true for part (iii) of Assumption~\ref{ass:Expansion2}. Notice that this last part of the assumption involves a supremum over $\beta$ and $\gamma$
 inside of an expectation -- in order to verify it, one either
 requires a uniform Lipschitz bound on the dependence of $h^{\rm MMSE}_\epsilon(Y_i,\beta,\gamma) $ on $\beta$
 and $\gamma$, or some empirical process method to control the entropy of that function (e.g., a bracketing argument). But since $\beta$ and $\gamma$ are finite-dimensional parameters these are all standard arguments.

We verified Assumption~\ref{ass:Expansion2}(iv) in the locally quadratic case of Section~\ref{Sec_param}. Formally, we have the following lemma.

  \begin{lemma}
 	\label{lemma:hMMSEmoment}
 	Let Assumption~\ref{ass:Expansion} hold,
	and assume that $h^{\rm MMSE}_{\epsilon}(\cdot,\beta_0, \gamma_*)$ is given by Lemma~\ref{lem_locquad}
	in the main text.
        Then, Assumption~\ref{ass:Expansion2}(iv) holds with the constant $\nu$ specified in Assumption~\ref{ass:Expansion}.
 \end{lemma}

 \subsubsection{Proof of  Theorem~\ref{theo1}}

 For a function $h_\epsilon=h_\epsilon(y,\beta_0,\gamma_*)$ we define
 \begin{align*}
 \widehat \delta(h_\epsilon,\beta_0,\gamma_*)  &:=  \delta_{\beta_0,\pi(\gamma_*)} 
 +    \frac 1 n \sum_{i=1}^n  h_{\epsilon}(Y_i, \beta_0, \gamma_*)  .
 \end{align*}
 It is useful to establish some preliminary lemmas before showing the main result. The proofs for those lemmas are provided in Section~\ref{app:ProofLemmas}. 
 
 \begin{lemma}
 	\label{lemma:MSEapprox}
 	Let Assumption~\ref{ass:Expansion} hold,
 	and let $h_{\epsilon}(\cdot,\beta_0, \gamma_*)$ be
 	a sequence of influence functions  that satisfy the unbiasedness constraint \eqref{Con:Unbiased}
 	as well as $\sup_{\pi_0 \in \Gamma_{\epsilon}(\gamma_*)}  \mathbb{E}_{\beta_0,\pi_0}  \left| h_{\epsilon}(Y, \beta_0, \gamma_*)  \right|^{\kappa} = O(1)$, for some 
 	$\kappa > 2$.
 	Then,
 	\begin{align*}
 	\sup_{\pi_0 \in \Gamma_\epsilon(\gamma_*)}   
 	\mathbb{E}_{\beta_0,\pi_0}  \left[  
 	\widehat \delta(h_\epsilon,\beta_0,\gamma_*) -  \delta_{\beta_0,\pi_0}      \right]^2 
 	&= b_{\epsilon}(h_\epsilon,\beta_0,\gamma_*)^2
 	+  \frac{{\rm Var}_{\beta_0,\pi(\gamma_*)}(h_\epsilon(Y,\beta_0,\gamma_*))  } {n} 
 	+ o(\epsilon) .
 	\end{align*}	
 \end{lemma}
 
 Lemma~\ref{lemma:MSEapprox} provides a formal justification for the worst-case MSE approximation
 introduced in equation \eqref{eqMSE} of the main text.

 Recall that 
 $\widehat \delta\,^{\rm MMSE}_\epsilon =  \widehat \delta(h^{\rm MMSE}_\epsilon, \widehat \beta, \widehat \gamma)$.
 This differs from 
 $ \widehat \delta(h^{\rm MMSE}_\epsilon,\beta_0,\gamma_*) $, because
 $\beta_0$ and $\gamma_*$ have to be estimated.
 The following lemma shows that the fact that $\beta_0$ and $\gamma_*$  are estimated in the construction of $\widehat \delta\,^{\rm MMSE}_\epsilon $
 can be neglected to first order. Notice that this result requires the additional regularity conditions in Assumption~\ref{ass:Expansion2}, 
 which are not required anywhere else in the proof of Theorem~\ref{theo1}.
 
 \begin{lemma}
 	\label{lemma:EtaEstimatorEffect}
 	Let Assumptions~\ref{ass:Expansion} and~\ref{ass:Expansion2} hold. Then,	
 	\begin{align*}	
 	\sup_{\pi_0\in \Gamma_\epsilon(\gamma_*)}   \mathbb{E}_{\beta_0,\pi_0} \left|     \widehat \delta^{\, \rm MMSE}_{\epsilon}  -  \widehat \delta(h^{\rm MMSE}_\epsilon,\beta_0,\gamma_*)  \right|
 	&= O\left( \frac 1 n \right) .
 	\end{align*}
 \end{lemma}
 
 Thus, Lemma~\ref{lemma:EtaEstimatorEffect} guarantees that
 $ \widehat \delta^{\, \rm MMSE}_{\epsilon} =  \widehat \delta(h^{\rm MMSE}_\epsilon,\beta_0,\gamma_*) + O_{P_0}(1/n)$.
 This may be surprising given that the differences 
 $\widehat \beta - \beta_0$ and $\widehat \gamma - \gamma_*$ are themselves of order $1/\sqrt{n}$.
 However, recall that by construction $h^{\rm MMSE}_\epsilon$ satisfies the local robustness
 condition \eqref{CharacterizeInfluenceH}, which is imposed through our constraints \eqref{Con:Unbiased}
 and \eqref{Con:EtaGradient}. Local robustness ensures that $\widehat \beta - \beta_0$ and $\widehat \gamma - \gamma_*$ 
 have no leading-order effect on $  \widehat \delta^{\, \rm MMSE}_{\epsilon}  -  \widehat \delta(h^{\rm MMSE}_\epsilon,\beta_0,\gamma_*) $.
 
 For the next lemma,
 recall the decomposition of $\widehat \delta_{\epsilon}  $ in Theorem~\ref{theo1} in the main text:
 \begin{align}
 \widehat \delta_{\epsilon}  &=  \delta_{\beta_0,\pi(\gamma_*)} 
 +    \frac 1 n \sum_{i=1}^n  h_{\epsilon}(Y_i, \beta_0, \gamma_*)  
 +  n^{-1/2} \,   R_n 
 \nonumber   \\
 &=   \widehat \delta(h_\epsilon,\beta_0,\gamma_*)  +  n^{-1/2} \,   R_n      .
 \label{DefRemainder1app}          
 \end{align}	
 Here, $  \widehat \delta(h_\epsilon,\beta_0,\gamma_*)  $  is the well-behaved leading-order contribution to $\widehat \delta_{\epsilon} $,
 whereas $ R_n$ is an asymptotically vanishing remainder term that may, however, have heavy tails (it only satisfies a trimmed second moment condition).
 The following lemma shows that the worst-case trimmed MSE of $   \widehat \delta_{\epsilon}  $ is bounded from below by the
 MSE of the leading-order term $  \widehat \delta(h_\epsilon,\beta_0,\gamma_*)  $.
 
 \begin{lemma}
 	\label{lemma:trimmedMSEtrunc}
 	Let Assumption~\ref{ass:Expansion} hold,
 	and let $h_{\epsilon}(\cdot,\beta_0, \gamma_*)$ be
 	a sequence of influence functions  that satisfy the unbiasedness constraint \eqref{Con:Unbiased}
 	as well as $\sup_{\pi_0 \in \Gamma_{\epsilon}(\gamma_*)}  \mathbb{E}_{\beta_0,\pi_0}  \left| h_{\epsilon}(Y, \beta_0, \gamma_*)  \right|^{\kappa} = O(1)$, for some 
 	$\kappa > 2$.
 	Assume that \eqref{DefRemainder1app} holds, and
 	let $m_n>0$ be a sequence such that $m_n \, n^{1/2}   \,  [\log(n)]^{-1}    \rightarrow \infty$.
 	Furthermore, assume that
 	\begin{itemize}

 		\item[(i)]
 		$\displaystyle  \sup_{\pi_0\in \Gamma_\epsilon(\gamma_*)}
 		{\rm P}_{\beta_0,\pi_0} \left(     \left|     R_n \right| >  \log(n)  \right) = o(1)$,
 		
 		\item[(ii)]
 		$\displaystyle  \sup_{\pi_0 \in \Gamma_{\epsilon}(\gamma_*)} \mathbb{E}_{\beta_0,\pi_0}
 		\left[    R_n^2 \, \mathbbm{1}\left(  |R_n| \leq 2\,  \log(n)  \right)  \right] = o( 1)$.

 	\end{itemize}
 	Then we have
 	\begin{align}
 	& 
 	\sup_{\pi_0\in \Gamma_\epsilon(\gamma_*)}
 	\mathbb{E}_{\beta_0,\pi_0}  \left[ \left(    \widehat \delta(h_\epsilon,\beta_0,\gamma_*)    - \delta_{\beta_0,\pi_0}
 	\right)^2  \right]
 	\nonumber \\		
 	& \qquad \leq	               	               
 	\sup_{\pi_0\in \Gamma_\epsilon(\gamma_*)}  \mathbb{E}_{\beta_0,\pi_0}  \left[ \left( \widehat \delta_{\epsilon}  - \delta_{\beta_0,\pi_0} \right)^2 
 	\mathbbm{1}\left( \left| \widehat \delta_{\epsilon}   - \delta_{\beta_0,\pi_0} \right| \leq 
 	m_n \right)
 	\right]  + o(\epsilon).
 	\label{StatementLemmaMSEtrunc}	
 	\end{align}
 	
 \end{lemma}
 
 We now have all the preliminary results required to show the main theorem.

 \begin{proof}[\bf Proof of Theorem~\ref{theo1}]
 	Define
 	\begin{align*}
 	r_\epsilon :=   \widehat \delta^{\, \rm MMSE}_{\epsilon}  -  \widehat \delta(h^{\rm MMSE}_\epsilon,\beta_0,\gamma_*) .
 	\end{align*}
 	We then have
 	\begin{align*}
 	&
 	\mathbb{E}_{\beta_0,\pi_0}  \left[ 
 	\left( \widehat \delta^{\, \rm MMSE}_{\epsilon}  - \delta_{\beta_0,\pi_0} \right)^2 
 	\mathbbm{1}\left( \left| \widehat \delta^{\, \rm MMSE}_{\epsilon}  - \delta_{\beta_0,\pi_0} \right|
 	\leq  m_n
 	\right)		
 	\right]
 	\\ 
 	&=  
 	\mathbb{E}_{\beta_0,\pi_0}  \left[ 
 	\left( \widehat \delta(h^{\rm MMSE}_\epsilon,\beta_0,\gamma_*)  - \delta_{\beta_0,\pi_0} + r_\epsilon \right)^2 
 	\mathbbm{1}\left( \left|  \widehat \delta^{\, \rm MMSE}_{\epsilon}  - \delta_{\beta_0,\pi_0}   \right|
 	\leq  m_n
 	\right)		
 	\right]
 	\\ 
 	&=  
 	\mathbb{E}_{\beta_0,\pi_0} \bigg[
 	\left( \widehat \delta(h^{\rm MMSE}_\epsilon,\beta_0,\gamma_*)  - \delta_{\beta_0,\pi_0}   \right)^2 
 	\underbrace{	
 		\mathbbm{1}\left( \left| \widehat \delta^{\, \rm MMSE}_{\epsilon}  - \delta_{\beta_0,\pi_0}   \right|
 		\leq  m_n
 		\right)
 	}_{\leq 1}		
 	\bigg]
 	\\ & \quad 
 	+ 2 \,   \mathbb{E}_{\beta_0,\pi_0}  \left[ 
 	r_\epsilon
 	\left( \widehat \delta(h^{\rm MMSE}_\epsilon,\beta_0,\gamma_*)  - \delta_{\beta_0,\pi_0}  +   r_\epsilon  \right)
 	\mathbbm{1}\left( \left|  \widehat \delta^{\, \rm MMSE}_{\epsilon}  - \delta_{\beta_0,\pi_0}    \right|
 	\leq  m_n
 	\right)		
 	\right]		 
 	\\ & \quad 
 	\underbrace{  -   
 		\mathbb{E}_{\beta_0,\pi_0}  \left[ 
 		r_\epsilon^2 \;
 		\mathbbm{1}\left( \left|  \widehat \delta^{\, \rm MMSE}_{\epsilon}   - \delta_{\beta_0,\pi_0}    \right|
 		\leq  m_n
 		\right)		
 		\right]
 	}_{\leq 0}	 
 	\\ 
 	&\leq 	  \mathbb{E}_{\beta_0,\pi_0} \left[
 	\left( \widehat \delta(h^{\rm MMSE}_\epsilon,\beta_0,\gamma_*)  - \delta_{\beta_0,\pi_0}   \right)^2 
 	\right]
 	\\ & \quad 
 	+ 2 \,   \mathbb{E}_{\beta_0,\pi_0}  \bigg[
 	\underbrace{
 		r_\epsilon
 		\left(  \widehat \delta^{\, \rm MMSE}_{\epsilon}  - \delta_{\beta_0,\pi_0}    \right)
 		\mathbbm{1}\left( \left|  \widehat \delta^{\, \rm MMSE}_{\epsilon}  - \delta_{\beta_0,\pi_0}   \right|
 		\leq  m_n
 		\right)		
 	}_{\leq |  r_\epsilon | \, m_n }
 	\bigg]			
 	\\ 
 	&\leq 	  \mathbb{E}_{\beta_0,\pi_0} \left[
 	\left( \widehat \delta(h^{\rm MMSE}_\epsilon,\beta_0,\gamma_*)  - \delta_{\beta_0,\pi_0}   \right)^2 
 	\right] + 2  \, m_n \,  \mathbb{E}_{\beta_0,\pi_0} \left|    r_\epsilon \right|.
 	\end{align*} 
 	According to Lemma~\ref{lemma:EtaEstimatorEffect} 
 	we have $\sup_{\pi_0\in \Gamma_\epsilon(\gamma_*)}   \mathbb{E}_{\beta_0,\pi_0} \left|    r_\epsilon \right|
 	= O(1/n) = O(\epsilon)$,
 	and the assumptions of the theorem guarantee that $m_n = o(1)$.
 	We thus obtain
 	\begin{align}
 	\sup_{\pi_0\in \Gamma_\epsilon(\gamma_*)}  &
 	\mathbb{E}_{\beta_0,\pi_0}  \left[ 
 	\left( \widehat \delta^{\, \rm MMSE}_{\epsilon}  - \delta_{\beta_0,\pi_0} \right)^2 
 	\mathbbm{1}\left( \left| \widehat \delta^{\, \rm MMSE}_{\epsilon}  - \delta_{\beta_0,\pi_0} \right|
 	\leq  m_n
 	\right)		
 	\right]
 	\nonumber \\	 	 
 	& \qquad \leq
 	\sup_{\pi_0\in \Gamma_\epsilon(\gamma_*)}  \mathbb{E}_{\beta_0,\pi_0}  \left[ \left( \widehat \delta(h^{\rm MMSE}_\epsilon,\beta_0,\gamma_*)  - \delta_{\beta_0,\pi_0}   \right)^2 
 	\right] + o(\epsilon)  .
 	\label{ProofTheo1step1}	
 	\end{align} 
 	By definition $h^{\rm MMSE}_{\epsilon}$ also satisfies 
 	the unbiasedness constraint \eqref{Con:Unbiased}. Together with Assumption~\ref{ass:Expansion2}(iv)
 	this implies that $h^{\rm MMSE}_{\epsilon}$ satisfies the conditions on $h_{\epsilon}$ in Lemma~\ref{lemma:MSEapprox}
 	with $\kappa=2+\nu$. Thus, we can apply Lemma~\ref{lemma:MSEapprox} with $h_{\epsilon}=h^{\rm MMSE}_{\epsilon}$
 	to find that 
 	\begin{align}
 	& \sup_{\pi_0\in \Gamma_\epsilon(\gamma_*)}  \mathbb{E}_{\beta_0,\pi_0}  \left[ \left( \widehat \delta(h^{\rm MMSE}_\epsilon,\beta_0,\gamma_*)  - \delta_{\beta_0,\pi_0}   \right)^2 
 	\right] 
 	\nonumber    \\
 	& \qquad = b_{\epsilon}(h^{\rm MMSE}_\epsilon,\beta_0,\gamma_*)^2
 	+  \frac{{\rm Var}_{\beta_0,\pi(\gamma_*)}(h^{\rm MMSE}_\epsilon(Y,\beta_0,\gamma_*))  } {n} 
 	+ o(\epsilon) .
 	\label{ProofTheo1step2}	
 	\end{align}	
 	 The function $h^{\rm MMSE}_\epsilon(\cdot,\beta_0,\gamma_*)$
 	is defined by the  minimization problem \eqref{MMSEproblem} in the main text.
	In other words, $h^{\rm MMSE}_\epsilon(\cdot,\beta_0,\gamma_*)$  
	minimizes the objective function  $b_{\epsilon}(h,\beta_0,\gamma_*)^2 +  n^{-1} {\rm Var}_{\beta_0,\pi(\gamma_*)}(h(Y,\beta_0,\gamma_*))   $,
	subject to the constraints  \eqref{Con:Unbiased} and \eqref{Con:EtaGradient}.
	Theorem~\ref{theo1} assumes that   $h_\epsilon = h_\epsilon(\cdot,\beta_0,\gamma_*)$  satisfies those constraints,
	and the definition of $h^{\rm MMSE}_\epsilon(\cdot,\beta_0,\gamma_*)$   therefore implies that
 	\begin{align}
 	& b_{\epsilon}(h^{\rm MMSE}_\epsilon,\beta_0,\gamma_*)^2 +  \frac{{\rm Var}_{\beta_0,\pi(\gamma_*)}(h^{\rm MMSE}_\epsilon(Y,\beta_0,\gamma_*))  } {n}
 	\nonumber \\ & \qquad \qquad \qquad \qquad \qquad \qquad
 	\leq
 	b_{\epsilon}(h_\epsilon,\beta_0,\gamma_*)^2 +  \frac{{\rm Var}_{\beta_0,\pi(\gamma_*)}(h_\epsilon(Y,\beta_0,\gamma_*))  } {n}.
 	\label{ProofTheo1step3}	      
 	\end{align}
 	Theorem~\ref{theo1} also imposes all the assumptions on $h_{\epsilon}$ in Lemma~\ref{lemma:MSEapprox} .
 	By applying that lemma we thus have
 	\begin{align}
 	& b_{\epsilon}(h_\epsilon,\beta_0,\gamma_*)^2
 	+  \frac{{\rm Var}_{\beta_0,\pi(\gamma_*)}(h_\epsilon(Y,\beta_0,\gamma_*))  } {n} 	
 	\nonumber    \\
 	& \qquad = \sup_{\pi_0\in \Gamma_\epsilon(\gamma_*)}  \mathbb{E}_{\beta_0,\pi_0}  \left[ \left( \widehat \delta(h_\epsilon,\beta_0,\gamma_*)  - \delta_{\beta_0,\pi_0}   \right)^2 
 	\right] 
 	+ o(\epsilon) .
 	\label{ProofTheo1step4}	
 	\end{align}	
 	Finally, Theorem~\ref{theo1} also guarantees all the assumptions  of Lemma~\ref{lemma:trimmedMSEtrunc}, implying that the inequality
 	\eqref{StatementLemmaMSEtrunc} holds.
 	Now,  combining \eqref{ProofTheo1step1}, \eqref{ProofTheo1step2}, \eqref{ProofTheo1step3}, \eqref{ProofTheo1step4} and \eqref{StatementLemmaMSEtrunc}
 	gives
 	\begin{align}
 	\sup_{\pi_0\in \Gamma_\epsilon(\gamma_*)} 
 	& \mathbb{E}_{\beta_0,\pi_0}  \left[ 
 	\left( \widehat \delta^{\, \rm MMSE}_{\epsilon}  - \delta_{\beta_0,\pi_0} \right)^2 
 	\mathbbm{1}\left( \left| \widehat \delta^{\, \rm MMSE}_{\epsilon}  - \delta_{\beta_0,\pi_0} \right|
 	\leq  m_n
 	\right)		
 	\right]
 	\nonumber \\ & \quad 	\leq 
 	\sup_{\pi_0\in \Gamma_\epsilon(\gamma_*)}  \mathbb{E}_{\beta_0,\pi_0}  \left[ \left( \widehat \delta_{\epsilon}  - \delta_{\beta_0,\pi_0} \right)^2 
 	\mathbbm{1}\left( \left| \widehat \delta_{\epsilon}   - \delta_{\beta_0,\pi_0} \right| \leq 
 	m_n \right)
 	\right]   
 	+ o(\epsilon)  ,
 	\end{align} 
 	which is what we wanted to show.
 \end{proof}

 \subsection{Proof of Theorem \ref{theo_CI}\label{App_CI}}

 \begin{assumption}$\quad$\label{ass_CI}
 	
 	\begin{enumerate}[(i)]

 		\item\label{ass_CI1} %
 		$ \widehat{\delta}-\delta_{\beta_0,\pi(\gamma_*)}-\frac{1}{n}\sum_{i=1}^n h(Y_i,\beta_0,\gamma_*) =o_{P_{\beta_0,\pi_0}}(n^{-\frac{1}{2}})$,
 		uniformly in $\pi_0 \in\Gamma_{\epsilon}(\gamma_*)$.
 		
 		\item\label{ass_CI2} Let $\sigma_h^2(\beta_0,\pi_0,\gamma_*)=\limfunc{Var}_{\beta_0,\pi_0} h(Y,\beta_0,\gamma_*)$. We assume that there exists a constant $c$, independent of $\epsilon$, such that
 		${\limfunc{inf}}_{\pi_0 \in\Gamma_\epsilon(\gamma_*) }\,\sigma_h(\beta_0,\pi_0,\gamma_*) \geq c >0$. 
 		Furthermore, for all sequences $a_n = c_{1-\alpha/2} + o(1)$ we have
 		$$
 		{\limfunc{inf}}_{\pi_0 \in\Gamma_\epsilon(\gamma_*) }\,{\limfunc{Pr}}_{\beta_0,\pi_0} \left[\left|\frac{1}{\sqrt{n}}\sum_{i=1}^n \frac{h(Y_i,\beta_0,\gamma_*)-\mathbb{E}_{\beta_0,\pi_0}h(Y,\beta_0,\gamma_*)}{\sigma_h(\beta_0,\pi_0,\gamma_*)}\right|\leq a_n \right]\geq 
 		1- \alpha + o(1).
 		$$%
 		\item ${\limfunc{sup}}_{\pi_0 \in\Gamma_\epsilon(\gamma_*) }\,{\mathbb{E}}_{\beta_0,\pi_0} \|\widehat{\beta}-\beta_0\|^2=o(1)$, \; 
		 ${\limfunc{sup}}_{\pi_0 \in\Gamma_\epsilon(\gamma_*) }\,{\mathbb{E}}_{\beta_0,\pi_0} \|\widehat{\gamma}-\gamma_*\|^2=o(1)$, \\ 
		${\limfunc{sup}}_{\pi_0 \in\Gamma_{\epsilon}(\gamma_*) }\,{\mathbb{E}}_{\beta_0,\pi_0} [\widehat{\sigma}_h-\sigma_h(\beta_0,\pi_0,\gamma_*)]^2=o(1)$.
 		
 		\item\label{ass_CI4} $ \|\nabla_{\beta\gamma}b_{\epsilon}(h,\beta,\gamma)\|=O(\epsilon^{\frac{1}{2}})$,
		uniformly in some neighborhood around $\beta_0$, $\gamma_*$.
 	\end{enumerate}
 	
 \end{assumption}
 
 Part (\ref{ass_CI1}) is weaker than the local regularity of the estimator $\widehat{\delta}$ that we assumed when analyzing the minimum-MSE estimator; see equation (\ref{DefRemainder1}). In turn, related to but differently from the conditions we used for Theorem \ref{theo1}, part (\ref{ass_CI2}) requires a form of local asymptotic normality of the estimator.  
 
 \begin{proof}[\bf Proof of Theorem \ref{theo_CI}.]
 	Let $\widehat{\delta}$ be an estimator and $h(y,\beta_0,\gamma_*)$ be the corresponding influence function
 	such that part (\ref{ass_CI1}) in Assumption \ref{ass_CI} holds. Define $\widehat{R}_{\beta_0,\gamma_*}:=\widehat{\delta}-\delta_{\beta_0,\pi(\gamma_*)}-\frac{1}{n}\sum_{i=1}^n h(Y_i,\beta_0,\gamma_*)$. 
 	We then have
 	\begin{align*}
 	&\widehat{\delta}-\delta_{\beta_0,\pi_0}  =\frac{1}{n}\sum_{i=1}^nh(Y_i,\beta_0,\gamma_*) + \delta_{\beta_0,\pi(\gamma_*)} -\delta_{\beta_0,\pi_0}   + \widehat{R}_{\beta_0,\gamma_*}
 	\\
 	& =  \frac{1}{n}\sum_{i=1}^n \left[ h(Y_i,\beta_0,\gamma_*) - \mathbb{E}_{\beta_0,\pi_0}h(Y,\beta_0,\gamma_*) \right] 
 	- \left[ \delta_{\beta_0,\pi_0}  - \delta_{\beta_0,\pi(\gamma_*)}  - \mathbb{E}_{\beta_0,\pi_0}h(Y,\beta_0,\gamma_*)  \right]   + \widehat{R}_{\beta_0,\gamma_*} ,
 	\end{align*}
 	and therefore
 	\begin{align}
 &	\underbrace{ 
 		\frac{|\widehat{\delta}-\delta_{\beta_0,\pi_0}|
 			- b_{\epsilon}(h,\widehat \beta,\widehat \gamma)-  \widehat{\sigma}_h \, c_{1-\alpha/2} / \sqrt{n}  
 		}
 		{\sigma_h(\beta_0,\pi_0,\gamma_*) / \sqrt{n}}
 	}_{= {\rm lhs}}
	\nonumber \\
 	&\qquad  \qquad \leq 
 	\underbrace{
 		\left|\frac{1}{\sqrt{n}}\sum_{i=1}^n \frac{h(Y_i,\beta_0,\gamma_*)-\mathbb{E}_{\beta_0,\pi_0}h(Y,\beta_0,\gamma_*)}{\sigma_h(\beta_0,\pi_0,\gamma_*)}\right|
 		- c_{1-\alpha/2} + \widehat r_{\beta_0,\pi_0,\gamma_*} 
 	}_{= {\rm rhs}},
 	\label{BoundCIderive}  
 	\end{align}
 	where
 	\begin{align*}
 	& \widehat r_{\beta_0,\pi_0,\gamma_*}
	\\
	 &:=  c_{1-\alpha/2}  +
 	\frac{ \left| \delta_{\beta_0,\pi_0}  - \delta_{\beta_0,\pi(\gamma_*)}  - \mathbb{E}_{\beta_0,\pi_0}h(Y,\beta_0,\gamma_*)  \right|+ \left| \widehat{R}_{\beta_0,\gamma_*} \right| 
 		- b_{\epsilon}(h,\widehat \beta,\widehat \gamma)-  \widehat{\sigma}_h \, c_{1-\alpha/2} / \sqrt{n}  }
 	{\sigma_h(\beta_0,\pi_0,\gamma_*) / \sqrt{n}} 
 	\\
 	&=
 	\frac{\sqrt{n}}{\sigma_h(\beta_0,\pi_0,\gamma_*)}
	\Bigg\{ |\delta_{\beta_0,\pi_0}-\delta_{\beta_0,\pi(\gamma_*)} -\mathbb{E}_{\beta_0,\pi_0}h(Y,\beta_0,\gamma_*)| 
 	+ |\widehat{R}_{\beta_0,\gamma_*}|
\\ & \qquad \qquad \qquad	 \qquad \qquad \qquad	\qquad \qquad \qquad	 \qquad  
 	- b_{\epsilon}(h,\widehat \beta,\widehat \gamma)
 	-\frac{\widehat{\sigma}_h-\sigma_h(\beta_0,\pi_0,\gamma_*)}{\sqrt{n}}c_{1-\alpha/2}
	 \Bigg\}        .
 	\end{align*}     
 	From \eqref{BoundCIderive}, we conclude that the event ${\rm rhs} \leq 0$
 	implies the event ${\rm lhs} \leq 0$, and therefore
 	${\limfunc{Pr}}_{\beta_0,\pi_0}({\rm lhs} \leq 0) \geq {\limfunc{Pr}}_{\beta_0,\pi_0}({\rm rhs} \leq 0)$, which we can also write as
 	\begin{align}
 	&{\limfunc{Pr}}_{\beta_0,\pi_0} \left[|\widehat{\delta}-\delta_{\beta_0,\pi_0}|\leq b_{\epsilon}(h,\widehat \beta,\widehat \gamma)+\frac{\widehat{\sigma}_h}{\sqrt{n}}c_{1-\alpha/2}\right]
 	\nonumber \\ & \qquad \qquad\qquad
 	\geq {\limfunc{Pr}}_{\beta_0,\pi_0} \left[\left|\frac{1}{\sqrt{n}}\sum_{i=1}^n \frac{h(Y_i,\beta_0,\gamma_*)-\mathbb{E}_{\beta_0,\pi_0}h(Y,\beta_0,\gamma_*)}{\sigma_h(\beta_0,\pi_0,\gamma_*)}\right|\leq c_{1-\alpha/2}-  \widehat r_{\beta_0,\pi_0,\gamma_*} \right].
 	\label{CIprobBound}
 	\end{align}                   
 	By part (\ref{ass_CI4}) in Assumption \ref{ass_CI} there exists a constant $C>0$ such that $ \|\nabla_{\beta\gamma}b_{\epsilon}(h,\beta,\gamma)\|\leq C \, \epsilon^{\frac{1}{2}}$, uniformly in a neighborhood of $(\beta_0,\gamma_*)$, and therefore
 	$$
     \left| b_{\epsilon}(h,\widehat \beta,\widehat \gamma) -   b_{\epsilon}(h, \beta_0, \gamma_* ) \right|
 	\leq  C \, \epsilon^{\frac{1}{2}} \,  \left\| { \widehat{\beta} -\beta_0 \choose \widehat \gamma - \gamma_*}    \right\|  .
 	$$
 	Using this we find that
 	\begin{align*}
 	\left| \widehat r_{\beta_0,\pi_0,\gamma_*} \right|
 	& \leq    \frac{\sqrt{n}}{\sigma_h(\beta_0,\pi_0,\gamma_*)}
 	\Bigg\{
 	\bigg|  \left|\delta_{\beta_0,\pi_0}-\delta_{\beta_0,\pi(\gamma_*)} -\mathbb{E}_{\beta_0,\pi_0}h(Y,\beta_0,\gamma_*) \right| 
 	- b_{\epsilon}(h,\beta_0,\gamma_*)    \bigg|    
 	\\ & \qquad  \qquad \qquad
 	+  \frac{\left| \widehat{\sigma}_h-\sigma_h(\beta_0,\pi_0,\gamma_*) \right| }{\sqrt{n}}c_{1-\alpha/2}     
 	+  C \, \epsilon^{\frac{1}{2}} \, \left\| { \widehat{\beta} -\beta_0 \choose \widehat \gamma - \gamma_*}    \right\| 
 	+ |\widehat{R}_{\beta_0,\gamma_*}|
 	\Bigg\}  .
 	\end{align*}
 	Parts (\ref{ass_CI1}) and (\ref{ass_CI2}) of Assumption \ref{ass_CI} imply that, uniformly in $\pi_0 \in\Gamma_{\epsilon}(\gamma_*)$,
 	we have
 	\begin{align*}
 	\frac{\sqrt{n}}{\sigma_h(\beta_0,\pi_0,\gamma_*)}  \widehat{R}_{\beta_0,\gamma_*} = o_{P_{\beta_0,\pi_0}}(1),
 	\end{align*}
 	and analogously we find from the conditions  in Assumption \ref{ass_CI} that
 	\begin{align*}
 	\frac{  \widehat{\sigma}_h-\sigma_h(\beta_0,\pi_0,\gamma_*)  }{\sigma_h(\beta_0,\pi_0,\gamma_*)}   &= o_{P_{\beta_0,\pi_0}}(1),
 	&
 	\frac{\sqrt{n}}{\sigma_h(\beta_0,\pi_0,\gamma_*)}
 	\epsilon^{\frac{1}{2}} \,  \left\| { \widehat{\beta} -\beta_0 \choose \widehat \gamma - \gamma_*}    \right\|   &= o_{P_{\beta_0,\pi_0}}(1),
 	\end{align*}
 	uniformly in $\pi_0 \in\Gamma_{\epsilon}(\gamma_*)$.
 	Finally, since we also impose  Assumption \ref{ass:Expansion}
 	and $ \sup_{\pi_0 \in \Gamma_{\epsilon}}  \allowbreak   \mathbb{E}_{\beta_0,\pi_0}   \allowbreak  h^2(Y,\beta_0,\gamma_*) = O(1)$ we obtain, analogously to
 	the proof of Lemma~\ref{lemma:Expansions}(iii) in Section \ref{App_comp1}, that 
 	\begin{align*}
 	\sup_{\pi_0 \in\Gamma_{\epsilon}(\gamma_*)}
 	\frac{\sqrt{n}}{\sigma_h(\beta_0,\pi_0,\gamma_*)}
 	\bigg|  \left|\delta_{\beta_0,\pi_0}-\delta_{\beta_0,\pi(\gamma_*)} -\mathbb{E}_{\beta_0,\pi_0}h(Y,\beta_0,\gamma_*) \right| 
 	- b_{\epsilon}(h,\beta_0,\gamma_*)    \bigg|    
 	&= o(1).
 	\end{align*}%
 	We thus conclude that $ \widehat r_{\beta_0,\pi_0,\gamma_*}   = o_{P_{\beta_0,\pi_0}}(1)$, uniformly in $\pi_0 \in\Gamma_{\epsilon}(\gamma_*)$.
 	Together with \eqref{CIprobBound} and  
 	part (\ref{ass_CI2}) in Assumption \ref{ass_CI} this implies (\ref{res_theo_CI}), hence Theorem \ref{theo_CI}.    	
 \end{proof}

 \clearpage
 
 \baselineskip15pt
 
 \begin{center}
 	\textbf{\LARGE SUPPLEMENTARY APPENDIX}
 \end{center}

 \begin{center}
	\textbf{\large ``Minimizing Sensitivity to Model Misspecification''}
\end{center}

\begin{center}
{\large St\'ephane Bonhomme and Martin Weidner}
\end{center}

\setcounter{page}{1}

\setcounter{section}{0}\renewcommand{\thesection}{S\arabic{section}}

\setcounter{figure}{0}\renewcommand{\thefigure}{S\arabic{figure}}

\setcounter{table}{0}\renewcommand{\thetable}{S\arabic{table}}

\setcounter{footnote}{0}\renewcommand{\thefootnote}{\arabic{footnote}}

\setcounter{theorem}{0}\renewcommand{\thetheorem}{S\arabic{theorem}}

\setcounter{assumption}{0}\renewcommand{\theassumption}{S\arabic{assumption}}

\setcounter{equation}{0}\renewcommand{\theequation}{S\arabic{equation}}

\setcounter{lemma}{0}\renewcommand{\thelemma}{S\arabic{lemma}}

\setcounter{proposition}{0}\renewcommand{\theproposition}{S\arabic{proposition}}

\setcounter{corollary}{0}\renewcommand{\thecorollary}{S\arabic{corollary}}

 In Sections \ref{App_comp1} and \ref{App_der_sec3}, we provide details about the proofs in the paper. In Section \ref{App_compute}, we describe our computational approach. In Section \ref{App_GMM}, we outline how to extend our approach to models defined by moment restrictions. Lastly, we report additional simulation and estimation results in Section \ref{App_Fig}.

 \section{Complements to main results of Section \ref{Sec_outline}\label{App_comp1}}
  
   \subsection{Proof of  intermediate lemmas for  Theorem~\ref{theo1}}
   \label{app:ProofLemmas}
   
   The proofs of the Lemmas~\ref{lemma:MSEapprox},
   \ref{lemma:EtaEstimatorEffect} and \ref{lemma:trimmedMSEtrunc} are provided in this subsection.
   Before those proofs it is useful to first establish one additional lemma.
   
   \begin{lemma} 
   	\label{lemma:Expansions}
   	Let Assumption~\ref{ass:Expansion} hold.
   	Let
   	$q_\epsilon(y)$
   	and	
   	$h_{\epsilon}(y,\beta_0, \gamma_*)$ be
   	sequences of  functions with
   	$\sup_{\pi_0 \in \Gamma_{\epsilon}(\gamma_*)}  \mathbb{E}_{\beta_0,\pi_0}  \left| q(Y)  \right|^{\zeta} = O(1)$, for some 
   	$\zeta > 1$,
   	and  $\sup_{\pi_0 \in \Gamma_{\epsilon}(\gamma_*)} \allowbreak \mathbb{E}_{\beta_0,\pi_0}   \left| h_{\epsilon}(Y, \beta_0, \gamma_*)  \right|^2 = O(1)$.	
   	Then we have
   	\begin{itemize}
   		\item[(i)]   $\displaystyle  \sup_{\pi_0\in \Gamma_\epsilon(\gamma_*)} 
   		\left|  \delta_{\beta_0,\pi(\gamma_*)}  - \delta_{\beta_0,\pi_0}
   		\right| = O(\epsilon^{1/2}) ,
   		$

   		\item[(ii)]
   		$ \displaystyle
   		\sup_{\pi_0 \in \Gamma_{\epsilon}(\gamma_*)}   
   		\left|  \mathbb{E}_{\beta_0,\pi_0}   q_\epsilon(Y) - \mathbb{E}_{\beta_0,\pi(\gamma_*)}   q_\epsilon(Y)  \right|
   		=   O(\epsilon^{1/2}) ,
   		$
   		
   		\item[(iii)]
   		$\displaystyle
   		\sup_{\pi_0 \in \Gamma_{\epsilon}(\gamma_*)}  
   		\bigg|  \mathbb{E}_{\beta_0,\pi_0}   h_\epsilon(Y,\beta_0,\gamma_*) - \mathbb{E}_{\beta_0,\pi(\gamma_*)}   h_\epsilon(Y,\beta_0,\gamma_*)
   		\\ $\phantom{a}$ \qquad \qquad\quad 
   		-  \left\langle  \pi_0 - \pi(\gamma_*) ,  \mathbb{E}_{\beta_0,\pi(\gamma_*)}   h_\epsilon(Y,\beta_0,\gamma_*)   \nabla_\pi \log f_{\beta_0,\pi(\gamma_*)}(Y)  \right\rangle   \bigg|
   		=    o(\epsilon^{1/2}) .
   		$

   	\end{itemize}       
   	
   \end{lemma}

   \begin{proof}[\bf Proof of Lemma~\ref{lemma:Expansions}]
   	\# \underline{Part (i):}
   	By a mean-value expansion around $\pi(\gamma_*)$ we find
   	\begin{align*}
   	\left|  \delta_{\beta_0,\pi_0} - \delta_{\beta_0,\pi(\gamma_*)} \right|
   	&= \left|  \left\langle \pi_0 - \pi(\gamma_*) , \nabla_\pi \delta_{\beta_0,\widetilde \pi} \right\rangle 
   	\right|
   	\leq   \left\|  \pi_0 - \pi(\gamma_*) \right\|_{{\rm ind},\gamma_*} 
   	\left\| \nabla_\pi \delta_{\beta_0,\widetilde \pi} \right\|_{\gamma_*} ,
   	\end{align*}
   	where $\widetilde \pi$ is between $\pi(\gamma_*)$ and $\pi_0$. Therefore
   	\begin{align*}
   	\sup_{\pi_0 \in \Gamma_{\epsilon}(\gamma_*)}   \left|  \delta_{\beta_0,\pi_0} - \delta_{\beta_0,\pi(\gamma_*)} \right|
   	& \leq     \sup_{\pi_0 \in \Gamma_{\epsilon}(\gamma_*)}    \left\|  \pi_0 - \pi(\gamma_*) \right\|_{{\rm ind},\gamma_*} 
   	\sup_{\pi_0 \in \Gamma_{\epsilon}(\gamma_*)}      \left\| \nabla_\pi \delta_{\beta_0,\pi_0} \right\|_{\gamma_*}
   	\\
   	&= O(\epsilon^{1/2}) \, O(1) = O(\epsilon^{1/2}) .
   	\end{align*}

   	\# \underline{Part (ii):}
   	Without loss of generality we assume that $\zeta \leq 2$.
   	Let $\xi :=   \zeta / (\zeta-1) \geq 2$. We then have
   	\begin{align*}
   	\int_{\cal{Y}}    \left|  f^{1/\xi}_{\beta_0,\pi_0}(y) - f^{1/\xi}_{\beta_0,\pi(\gamma_*)}(y)  \right|^{\xi}  dy 
   	&\leq 
   	\int_{\cal{Y}}    \left[  f^{1/2}_{\beta_0,\pi_0}(y) - f^{1/2}_{\beta_0,\pi(\gamma_*)}(y)  \right]^{2}  dy  ,
   	\end{align*}
   	where we used that $|a - b| \leq |a^c - b^c|^{1/c}$, for any $a,b\geq 0$ and $c\geq 1$, and
   	plugged in $a= f^{1/\xi}_{\beta_0,\pi_0}(y) $, $b= f^{1/\xi}_{\beta_0,\pi(\gamma_*)}(y) $, and $c=\xi/2$. Thus, the first part of
   	Assumption~\ref{ass:Expansion}(iii) also implies
   	\begin{align}
   	\sup_{\pi_0 \in \Gamma_{\epsilon}(\gamma_*)}   
   	\displaystyle 
   	\left\{
   	\int_{\cal{Y}}    \left|  f^{1/\xi}_{\beta_0,\pi_0}(y) - f^{1/\xi}_{\beta_0,\pi(\gamma_*)}(y)  \right|^{\xi}  dy
   	\right\}^{\frac 1 \xi} =  O(\epsilon^{1/2})  .
   	\label{ass:Fxi}       
   	\end{align}           
   	Next, we find
   	\begin{align*}
   	&\sup_{\pi_0 \in \Gamma_{\epsilon}(\gamma_*)}  
   	\left|  \mathbb{E}_{\beta_0,\pi_0}   q_\epsilon(Y) - \mathbb{E}_{\beta_0,\pi(\gamma_*)}   q_\epsilon(Y)  \right|
   	\nonumber \\
   	&= \sup_{\pi_0 \in \Gamma_{\epsilon}(\gamma_*)} 
   	\left|   \int_{\cal{Y}} q_\epsilon(Y) 
   	\frac{  f_{\beta_0,\pi_0}(y) - f_{\beta_0,\pi(\gamma_*)}(y) } 
   	{ f^{1/\xi}_{\beta_0,\pi_0}(y) - f^{1/\xi}_{\beta_0,\pi(\gamma_*)}(y) } 
   	\left[  f^{1/\xi}_{\beta_0,\pi_0}(y) - f^{1/\xi}_{\beta_0,\pi(\gamma_*)}(y)  \right]
   	dy \right|
   	\nonumber  \\
   	&\leq  
   	\left\{ \sup_{\pi_0 \in \Gamma_{\epsilon}(\gamma_*)}  \int_{\cal{Y}} \left| q_\epsilon(Y) \right|^{\frac{\xi} {\xi-1}} 
   	\left|   \frac{  f_{\beta_0,\pi_0}(y) - f_{\beta_0,\pi(\gamma_*)}(y) } 
   	{ f^{1/\xi}_{\beta_0,\pi_0}(y) - f^{1/\xi}_{\beta_0,\pi(\gamma_*)}(y) }   \right|^{ \frac{\xi} {\xi-1}} dy \right\}^{\frac{\xi-1}\xi}
   	\\ & \qquad \qquad \qquad \qquad \qquad \qquad \qquad \qquad
   	\times \left\{ \sup_{\pi_0 \in \Gamma_{\epsilon}(\gamma_*)}    \int_{\cal{Y}}    \left|  f^{1/\xi}_{\beta_0,\pi_0}(y) - f^{1/\xi}_{\beta_0,\pi(\gamma_*)}(y)  \right|^\xi  dy     \right\}^{\frac 1 \xi}
   	\nonumber  \\
   	&\leq  
   	\xi \left\{ \sup_{\pi_0 \in \Gamma_{\epsilon}(\gamma_*)}  \int_{\cal{Y}} \left| q_\epsilon(Y) \right|^{\frac{\xi} {\xi-1}}
   	\left|     f_{\beta_0,\pi_0}(y) + f_{\beta_0,\pi(\gamma_*)}(y)    \right|  dy \right\}^{\frac{\xi-1}\xi}
   	\\ & \qquad \qquad \qquad \qquad \qquad \qquad \qquad \qquad
   	\times
   	\left\{ \sup_{\pi_0 \in \Gamma_{\epsilon}(\gamma_*)}    \int_{\cal{Y}}    \left|  f^{1/\xi}_{\beta_0,\pi_0}(y) - f^{1/\xi}_{\beta_0,\pi(\gamma_*)}(y)  \right|^{\xi}  dy     \right\}^{\frac 1 \xi}
   	\nonumber  \\
   	&\leq  
   	\xi \left\{2   \sup_{\pi_0 \in \Gamma_{\epsilon}(\gamma_*)}     \mathbb{E}_{\beta_0,\pi_0}  \left| q_\epsilon(Y) \right|^{\zeta}  \right\}^{\frac{\xi-1}\xi}
   	\left\{ \sup_{\pi_0 \in \Gamma_{\epsilon}(\gamma_*)}   \int_{\cal{Y}}    \left|  f^{1/\xi}_{\beta_0,\pi_0}(y) - f^{1/\xi}_{\beta_0,\pi(\gamma_*)}(y)  \right|^{\xi}  dy     \right\}^{\frac 1 \xi}
   	\\
   	&= o(1) ,
   	\end{align*}
   	where 
   	the first inequality is an application of  H\"older's inequality,
   	the second inequality uses that
   	$\left| \frac{  f_{\beta_0,\pi_0}(y) - f_{\beta_0,\pi(\gamma_*)}(y) } 
   	{ f^{1/\xi}_{\beta_0,\pi_0}(y) - f^{1/\xi}_{\beta_0,\pi(\gamma_*)}(y) }   \right|^{\xi / (\xi-1)}
   	\leq  \xi^{\xi / (\xi-1)}   \left[ f_{\beta_0,\pi_0}(y) + f_{\beta_0,\pi(\gamma_*)}(y) \right]$,\footnote{%
   		For $a,b \geq 0$  there exists $c \in [a,b]$ such that by the mean value theorem we have
   		$(a^\xi - b^\xi) / (a-b) = \xi c^{\xi-1} \leq \xi \max(a^{\xi-1},b^{\xi-1})$, and therefore 
   		$[(a^\xi - b^\xi) / (a-b)]^{\xi / (\xi-1)} \leq \xi^{\xi / (\xi-1)} \max(a^\xi,b^\xi) \leq \xi^{\xi / (\xi-1)} (a^\xi + b^\xi)$,
   		which we apply here with $a= f^{1/\xi}_{\beta_0,\pi_0}(y)$ and $b=f^{1/\xi}_{\beta_0,\pi(\gamma_*)}(y) $.
   	}
   	the last line uses that $\kappa =  \xi / (\xi-1)$, and the final conclusion follows
   	from our assumptions and \eqref{ass:Fxi}.

   	\# \underline{Part (iii):}
   	We have
   	\begin{align*}
   	&   \mathbb{E}_{\beta_0,\pi_0}   h_\epsilon(Y,\beta_0,\gamma_*) - \mathbb{E}_{\beta_0,\pi(\gamma_*)}   h_\epsilon(Y,\beta_0,\gamma_*)
   	\\ & \qquad \qquad \qquad \qquad \qquad \qquad
   	-  \left\langle  \pi_0 - \pi(\gamma_*) ,  \mathbb{E}_{\beta_0,\pi(\gamma_*)}   h_\epsilon(Y,\beta_0,\gamma_*)   \nabla_\pi \log f_{\beta_0,\pi(\gamma_*)}(Y)  \right\rangle  
   	\\
   	&=
   	\int_{\cal{Y}}  h_\epsilon(y,\beta_0,\gamma_*) \left[  f_{\beta_0,\pi_0}(y) - f_{\beta_0,\pi(\gamma_*)}(y)
   	-  \left\langle  \pi_0 - \pi(\gamma_*) , \nabla_\pi \log f_{\beta_0,\pi(\gamma_*)}(y)  \right\rangle  f_{\beta_0,\pi(\gamma_*)}(y)  \right]   dy 
   	\\
   	&=  
   	\int_{\cal{Y}}  h_\epsilon(y,\beta_0,\gamma_*)
   	\left[  f^{1/2}_{\beta_0,\pi_0}(y) + f^{1/2}_{\beta_0,\pi(\gamma_*)}(y)
   	\right]
   	\\ & \quad
   	\underbrace{ \qquad \qquad \qquad
   		\times \left[  f^{1/2}_{\beta_0,\pi_0}(y) - f^{1/2}_{\beta_0,\pi(\gamma_*)}(y)
   		- \frac 1 2  \left\langle  \pi_0 - \pi(\gamma_*) , \nabla_\pi \log f_{\beta_0,\pi(\gamma_*)}(y)  \right\rangle   f^{1/2}_{\beta_0,\pi(\gamma_*)}(y)  \right]   dy
   	}_{=: a^{(1)}_{\beta_0,\gamma_*,\pi_0}}
   	\\ & \quad  
   	+  
   	\frac 1 2   
   	\underbrace{   \int_{\cal{Y}}  h_\epsilon(y,\beta_0,\gamma_*)  
   		f^{1/2}_{\beta_0,\pi(\gamma_*)}(y)
   		\left\langle  \pi_0 - \pi(\gamma_*) , \nabla_\pi \log f_{\beta_0,\pi(\gamma_*)}(y)  \right\rangle
   		\left[   f^{1/2}_{\beta_0,\pi_0}(y) - f^{1/2}_{\beta_0,\pi(\gamma_*)}(y)  \right]     dy
   	}_{=: a^{(2)}_{\beta_0,\gamma_*,\pi_0}}   .
   	\end{align*}
   	Applying the Cauchy-Schwarz inequality and our assumptions we find that
   	\begin{align*}
   	&  \sup_{\pi_0 \in \Gamma_{\epsilon}(\gamma_*)}    \left|  a^{(1)}_{\beta_0,\gamma_*,\pi_0} \right|^2
   	\\ 
   	&\leq 4  \left\{   \sup_{\pi_0 \in \Gamma_{\epsilon}(\gamma_*)}     \mathbb{E}_{\beta_0,\pi_0}  h_\epsilon^2(Y,\beta_0,\gamma_*)  \right\}
   	\\ & \qquad\qquad\qquad 
   	\times \left\{   \sup_{\pi_0 \in \Gamma_{\epsilon}(\gamma_*)}    
   	\int_{\cal{Y}}  \left[  f^{1/2}_{\beta_0,\pi_0}(y) - f^{1/2}_{\beta_0,\pi(\gamma_*)}(y)
   	-   \left\langle  \pi_0 - \pi(\gamma_*) , \nabla_\pi   f^{1/2}_{\beta_0,\pi(\gamma_*)}(y)  \right\rangle    \right]^2   dy
   	\right\}
   	\\     
   	&= O(\epsilon^{1/2}) ,
   	\end{align*}
   	and
   	\begin{align*}
   	&  \sup_{\pi_0 \in \Gamma_{\epsilon}(\gamma_*)}    \left|  a^{(2)}_{\beta_0,\gamma_*,\pi_0} \right|^2
   	\\
   	&\leq  \left\{    \mathbb{E}_{\beta_0,\pi(\gamma_*)}  h^2_\epsilon(Y,\beta_0,\gamma_*)  \right\}
   	\\ & \qquad \quad \times 
   	\left\{   \sup_{\pi_0 \in \Gamma_{\epsilon}(\gamma_*)}    
   	\left\| \pi_0 - \pi(\gamma_*) \right\|_{{\rm ind},\gamma_*}^2 
   	\int_{\cal{Y}}   \left\| \nabla_\pi \log f_{\beta_0,\pi(\gamma_*)}(y)   \right\|^2_{\gamma_*}   \left[  f^{1/2}_{\beta_0,\pi_0}(y) - f^{1/2}_{\beta_0,\pi(\gamma_*)}(y)
   	\right]^2   dy
   	\right\} 
   	\\
   	&= o(\epsilon).  
   	\end{align*}
   	Combining this gives the statement in the lemma.
   \end{proof}

     \begin{proof}[\bf Proof of Lemma~\ref{lemma:MSEapprox}]
   	Applying part   (ii) of Lemma~\ref{lemma:Expansions}
   	with $q_\epsilon(y)=  h_{\epsilon}(y, \beta_0, \gamma_*) $
   	and using the unbiasedness constraint \eqref{Con:Unbiased}
   	we find that $ \mathbb{E}_{\beta_0,\pi_0}   h_\epsilon(Y,\beta_0,\gamma_*)  =   o(1)$,
   	uniformly in $\pi_0 \in \Gamma_{\epsilon}(\gamma_*)$.
   	Part (i) of Lemma~\ref{lemma:Expansions}  guarantees that 
   	$ \left| \delta_{\beta_0,\pi_0} - \delta_{\beta_0,\pi(\gamma_*)}  \right| = o(1)$,  uniformly in $\pi_0 \in \Gamma_{\epsilon}(\gamma_*)$. 
   	We therefore have	
   	\begin{align*}
   	&\mathbb{E}_{\beta_0,\pi_0}  \left[ 
   	h_\epsilon(Y,\beta_0,\gamma_*)   + \delta_{\beta_0,\pi(\gamma_*)} -  \delta_{\beta_0,\pi_0}   \right]^2 
   	\\ & \qquad
   	=    \mathbb{E}_{\beta_0,\pi_0}  \left[ h_\epsilon(Y,\beta_0,\gamma_*)    \right]^2 
   	-  2 \, \left(  \delta_{\beta_0,\pi_0} - \delta_{\beta_0,\pi(\gamma_*)}  \right) \,  \mathbb{E}_{\beta_0,\pi_0}   h_\epsilon(Y,\beta_0,\gamma_*) 
   	+  \left(  \delta_{\beta_0,\pi_0} - \delta_{\beta_0,\pi(\gamma_*)}  \right)^2 
   	\\
   	&  \qquad=     \mathbb{E}_{\beta_0,\pi_0}  \left[ h_\epsilon(Y,\beta_0,\gamma_*)    \right]^2 + o(1)  ,
   	\end{align*}
   	uniformly in $\pi_0 \in \Gamma_{\epsilon}(\gamma_*)$.
   	Applying part   (ii) of Lemma~\ref{lemma:Expansions}
   	with $q_\epsilon(y)=  \left[ h_{\epsilon}(y, \beta_0, \gamma_*) \right]^2$
   	we find that
   	$\mathbb{E}_{\beta_0,\pi_0}  \left[ h_\epsilon(Y,\beta_0,\gamma_*)    \right]^2
   	=   \mathbb{E}_{\beta_0,\pi(\gamma_*)}  \left[ h_\epsilon(Y,\beta_0,\gamma_*)    \right]^2 + o(1)
   	=   {\rm Var}_{\beta_0,\pi(\gamma_*)}(h_\epsilon(Y,\beta_0,\gamma_*)) + o(1) 
   	$, uniformly in $\pi_0 \in \Gamma_{\epsilon}(\gamma_*)$,
   	where in the last step we have also used that $h_\epsilon(y,\beta_0,\gamma_*)$ satisfies the unbiasedness constraint
   	\eqref{Con:Unbiased}.
   	Therefore,
   	\begin{align}
   	\sup_{\pi_0 \in \Gamma_{\epsilon}(\gamma_*)} 
   	\mathbb{E}_{\beta_0,\pi_0}  \left[ 
   	h_\epsilon(Y,\beta_0,\gamma_*)   + \delta_{\beta_0,\pi(\gamma_*)} -  \delta_{\beta_0,\pi_0}   \right]^2 
   	&=     {\rm Var}_{\beta_0,\pi(\gamma_*)}(h_\epsilon(Y,\beta_0,\gamma_*)) + o(1)  . 
   	\label{ApproxVAR}
   	\end{align}	
   	Using  the unbiasedness constraint again, as well as Lemma~\ref{lemma:Expansions}(iii)  and Assumptions~\ref{ass:Expansion}(ii) and~\ref{ass:Expansion}(iv) we find
   	\begin{align}
   	&  \sup_{\pi_0 \in \Gamma_{\epsilon}(\gamma_*)} 
   	\left|  \mathbb{E}_{\beta_0,\pi_0}   h_\epsilon(Y,\beta_0,\gamma_*) + \delta_{\beta_0,\pi(\gamma_*)}  -  \delta_{\beta_0,\pi_0}   \right|
   	\nonumber \\
   	&  =     \sup_{\pi_0 \in \Gamma_{\epsilon}(\gamma_*)} 
   	\left|   \left\langle  \pi_0 - \pi(\gamma_*) ,  \mathbb{E}_{\beta_0,\pi(\gamma_*)}   h_\epsilon(Y,\beta_0,\gamma_*)   \nabla_\pi \log f_{\beta_0,\pi(\gamma_*)}(Y) 
   	-  \nabla_\pi \delta_{\beta_0,\pi(\gamma_*)} \right\rangle
   	\right|    +   o(\epsilon^{1/2}) 
   	\nonumber\\  
   	&    = 
   	\epsilon^{1/2} \left\|
   	\mathbb{E}_{\beta_0,\pi(\gamma_*)}   h_\epsilon(Y,\beta_0,\gamma_*)   \nabla_\pi \log f_{\beta_0,\pi(\gamma_*)}(Y) 
   	-  \nabla_\pi \delta_{\beta_0,\pi(\gamma_*)}
   	\right\|_{\gamma_*}
   	+   o(\epsilon^{1/2})   
   	\nonumber \\
   	&
   	= b_{\epsilon}(h_\epsilon,\beta_0,\gamma_*) +  o(\epsilon^{1/2})    ,
   	\label{ApproxBIAS}
   	\end{align}
   	where in the last step we used the definition of the worst-case bias in \eqref{eqbepsilon} of the main text.
   	We furthermore have
   	\begin{align*}
   	&   
   	\mathbb{E}_{\beta_0,\pi_0}  \left[  
   	\widehat \delta(h_\epsilon,\beta_0,\gamma_*) -  \delta_{\beta_0,\pi_0}      \right]^2 
   	\\
   	& =      \mathbb{E}_{\beta_0,\pi_0}  \left(  
   	\frac 1 n \sum_{i=1}^n h(Y_i,\beta_0,\gamma_*) + \delta_{\beta_0,\pi(\gamma_*)}  -  \delta_{\beta_0,\pi_0}    \right)^2  
   	\\
   	&=  \left[   \mathbb{E}_{\beta_0,\pi_0}   h(Y,\beta_0,\gamma_*) + \delta_{\beta_0,\pi(\gamma_*)}  -  \delta_{\beta_0,\pi_0}    \right]^2  
   	+ \frac 1 n {\rm Var}_{\beta_0,\pi_0}\left[ h(Y,\beta_0,\gamma_*)  + \delta_{\beta_0,\pi(\gamma_*)}  -  \delta_{\beta_0,\pi_0}    \right] 
   	\\    
   	&=  \frac {n-1} n  \left[   \mathbb{E}_{\beta_0,\pi_0}   h(Y,\beta_0,\gamma_*) -  \delta_{\beta_0,\pi_0} +\delta_{\beta_0,\pi(\gamma_*)}   \right]^2  
   	+ \frac 1 n  \mathbb{E}_{\beta_0,\pi_0}  \left[ h(Y,\beta_0,\gamma_*) + \delta_{\beta_0,\pi(\gamma_*)}  -  \delta_{\beta_0,\pi_0}    \right]^2 .
   	\end{align*}
   	Taking the supremum of this last result over $\pi_0 \in \Gamma_\epsilon(\gamma_*)$,
   	and then applying \eqref{ApproxVAR} and \eqref{ApproxBIAS} gives
   	\begin{align*}
   	\sup_{\pi_0 \in \Gamma_\epsilon(\gamma_*)}   
   	\mathbb{E}_{\beta_0,\pi_0}  \left[  
   	\widehat \delta(h_\epsilon,\beta_0,\gamma_*) -  \delta_{\beta_0,\pi_0}      \right]^2 
   	&=
   	b_{\epsilon}(h_\epsilon,\beta_0,\gamma_*)^2
   	+  \frac{{\rm Var}_{\beta_0,\pi(\gamma_*)}(h_\epsilon(Y,\beta_0,\gamma_*))  } {n} 
   	+ o(\epsilon) ,
   	\end{align*}
   	which is the statement of the lemma.
   \end{proof}

   \begin{proof}[\bf Proof of Lemma~\ref{lemma:EtaEstimatorEffect}]
   	Let $\eta=(\beta', \gamma')'$, $\widehat \eta:=(\widehat \beta', \widehat \gamma')'$, and
   	$\eta_*:=(\beta_0', \gamma_*')'$.
   	By a Taylor expansion in $\eta$ around $\eta_*$ we find that
   	\begin{align}    
   	\widehat \delta\,^{\rm MMSE}_\epsilon 
   	&= \delta_{\widehat \beta,\pi(\widehat \gamma)}+\frac 1 n \sum_{i=1}^n h_\epsilon^{\rm MMSE}(Y_i,\widehat \beta,\widehat \gamma)
   	\nonumber  \\
   	&=  \delta_{\beta_0,\pi(\gamma_*)}+\frac 1 n \sum_{i=1}^n h_\epsilon^{\rm MMSE}(Y_i,\beta_0,\gamma_*)
   	\nonumber   \\
   	& \quad
   	\underbrace{
   		\left( \widehat \eta - \eta_* \right)'  \left[ \nabla_\eta \delta_{\beta_0,\pi(\gamma_*)}+\mathbb{E}_{\beta_0,\pi(\gamma_*)}     \nabla_\eta  h^{\rm MMSE}_\epsilon(Y,  \beta_0,\gamma_*) \right]
   	}_{=r^{(1)}} 
   	\nonumber   \\
   	& \quad
   	+     \underbrace{  \left( \widehat \eta - \eta_* \right)'   \frac 1 n \sum_{i=1}^n  \left[   \nabla_\eta h^{\rm MMSE}_\epsilon(Y_i,\beta_0,\gamma_*) - \mathbb{E}_{\beta_0,\pi_0}     \nabla_\eta  h^{\rm MMSE}_\epsilon(Y_i,\beta_0,\gamma_*) \right]
   	}_{=r^{(2)}}
   	\nonumber   \\
   	& \quad
   	+     \underbrace{  \left( \widehat \eta - \eta_* \right)'   \left[\mathbb{E}_{\beta_0,\pi(\gamma_*)}   \nabla_\eta h^{\rm MMSE}_\epsilon(Y,  \beta_0,\gamma_*) - \mathbb{E}_{\beta_0,\pi_0}     \nabla_\eta  h^{\rm MMSE}_\epsilon(Y,  \beta_0,\gamma_*) \right]
   	}_{=r^{(3)}}
   	\nonumber   \\
   	& \quad      +  \frac{1}{2}  \underbrace{ \left( \widehat \eta - \eta_* \right)'  
   		\left[    \frac 1 n \sum_{i=1}^n   \nabla^2_{\eta \eta'} h^{\rm MMSE}_\epsilon(Y_i,  \widetilde \beta, \widetilde \gamma)  \right] \left( \widehat \eta - \eta_* \right) 
   	}_{=r^{(4)}} ,
   	\label{DefR1234}
   	\end{align}   
   	where $\widetilde  \eta=(\widetilde \beta', \widetilde \gamma')'$ is a value between $\widehat \eta$ and $ \eta_* $. 
   	Our  constraints \eqref{Con:Unbiased}
   	and \eqref{Con:EtaGradient} guarantee that
   	$\nabla_\eta\delta_{\beta_0,\pi(\gamma_*)}+\mathbb{E}_{\beta_0,\pi(\gamma_*)}     \nabla_\eta  h^{\rm MMSE}_\epsilon(Y,  \beta_0,\gamma_*) = 0$; that is,
   	we have $r^{(1)} =0$.
   	Using Assumption~\ref{ass:Expansion2} and the Cauchy-Schwarz inequality we furthermore find
   	\begin{align*}
   	&
   	\left(  \mathbb{E}_{\beta_0,\pi_0}    \left| r^{(2)}  
   	\right| \right)^2
   	\\
   	& \leq   
   	\mathbb{E}_{\beta_0,\pi_0}\left\| \widehat \eta - \eta_* \right\|^2
   	\; \mathbb{E}_{\beta_0,\pi_0}    \left\| \frac 1 n \sum_{i=1}^n  \left[   \nabla_\eta h^{\rm MMSE}_\epsilon(Y_i,\beta_0,\gamma_*) - \mathbb{E}_{\beta_0,\pi_0}     \nabla_\eta  h^{\rm MMSE}_\epsilon(Y_i,\beta_0,\gamma_*) \right] \right\|^2
   	\\
   	&\leq       \mathbb{E}_{\beta_0,\pi_0}\left\| \widehat \eta - \eta_* \right\|^2
   	\,  \frac 1 n \,   \mathbb{E}_{\beta_0,\pi_0}  \left\|  \nabla_\eta h^{\rm MMSE}_\epsilon(Y,  \beta_0,\gamma_*) \right\|^2
   	=  O\left( \frac 1 {n^2} \right),
   	\end{align*}
   	uniformly in $\pi_0\in\Gamma_{\epsilon}(\gamma_*)$,
   	where in the second step we have used the independence of $Y_i$ across $i$.
   	Similarly, we have
   	\begin{align*}
   	\left( \mathbb{E}_{\beta_0,\pi_0}    \left| r^{(3)}
   	\right|  \right)^2
   	& \leq    
   	\mathbb{E}_{\beta_0,\pi_0}\left\| \widehat \eta - \eta_* \right\|^2
   	\left\| \mathbb{E}_{\beta_0,\pi_0}   \nabla_\eta h^{\rm MMSE}_\epsilon(Y,  \beta_0,\gamma_*) - \mathbb{E}_{\beta_0,\pi(\gamma_*)}     \nabla_\eta  h^{\rm MMSE}_\epsilon(Y,  \beta_0,\gamma_*) \right\|^2
   	\\
   	&=  O\left( \frac 1 {n} \right)  O\left( \epsilon \right) =  O\left( \frac 1 {n^2} \right),
   	\end{align*}
   	uniformly in $\pi_0\in\Gamma_{\epsilon}(\gamma_*)$, where we have used that
   	$$
   	\sup_{\pi_0 \in \Gamma_{\epsilon}(\gamma_*)}  \left\| \mathbb{E}_{\beta_0,\pi_0}   \nabla_\eta h^{\rm MMSE}_\epsilon(Y,  \beta_0,\gamma_*) - \mathbb{E}_{\beta_0,\pi(\gamma_*)}     \nabla_\eta  h^{\rm MMSE}_\epsilon(Y,  \beta_0,\gamma_*) \right\| = O(\epsilon^{1/2}),
   	$$
   	which   follows from Assumptions~\ref{ass:Expansion}(iii)
   	and \ref{ass:Expansion2}(ii) by using the proof strategy of part (ii) of Lemma~\ref{lemma:Expansions}.
   	Finally, applying  H\"older's inequality we have
   	\begin{align*}
   	\mathbb{E}_{\beta_0,\pi_0}    \left| r^{(4)} 
   	\right|
   	& \leq     \mathbb{E}_{\beta_0,\pi_0}
   	\left[
   	\left\| \widehat \eta - \eta_* \right\|^2
   	\left\|   \frac 1 n \sum_{i=1}^n   \nabla^2_{\eta \eta'} h^{\rm MMSE}_\epsilon(Y_i,   \widetilde \beta, \widetilde \gamma)  \right\|
   	\right]
   	\\
   	& \leq   
   	\left\{  \mathbb{E}_{\beta_0,\pi_0}\left\| \widehat \eta - \eta_* \right\|^{\chi} \right\}^{\frac 2 {\chi}}
   	\left\{
   	\mathbb{E}_{\beta_0,\pi_0}
   	\left\|   \frac 1 n \sum_{i=1}^n   \nabla^2_{\eta \eta'} h^{\rm MMSE}_\epsilon(Y_i,   \widetilde \beta, \widetilde \gamma)  \right\|^{\frac {\chi} {\chi-2}}
   	\right\}^{\frac {\chi-2} {\chi}}
   	\\
   	&= O\left( \frac 1 n \right),
   	\end{align*}
   	uniformly in $\pi_0\in\Gamma_{\epsilon}(\gamma_*)$, where we have used Assumption~\ref{ass:Expansion2}(iii). We have thus shown that
   	\begin{align*}
   	\sup_{\pi_0 \in \Gamma_{\epsilon}(\gamma_*)}  \mathbb{E}_{\beta_0,\pi_0}  
   	\left|  r^{(1)}+ r^{(2)}+ r^{(3)}+ \frac 1 2 \, r^{(4)}\right|
   	= O\left( \frac 1 n \right),
   	\end{align*}
   	which together with \eqref{DefR1234} gives the statement of the lemma.
   \end{proof}

   The proof of the next lemma uses the following theorem of Petrov (1975), which generalizes the Berry-Esseen theorem
   to sample averages of random variables without a third moment.
   \begin{theorem}[Theorem 5 on p. 112 in Petrov 1975]
   	\label{th:Petrov1975}
   	Let $X_1,\ldots,X_n$ be independent random variables, such that
   	$\mathbb{E} X_j = 0$, $\mathbb{E}(X_j^2 g(|X_j|)) < \infty$ for $j=1,\ldots,n$, and for some function $g : [0,\infty) \rightarrow [0,\infty)$
   	such that both $g(x)$ and $x/g(x)$ are non-decreasing for $x>0$.
   	We write 
   	\begin{align*}
   	\sigma^2_j &= \mathbb{E} X_j^2 ,
   	&
   	B_n &= \sum_{j=1}^n \sigma^2_j ,
   	&
   	F_n(x) &= \Pr\left(  B_n^{-1/2} \sum_{j=1}^n X_j <x  \right).
   	\end{align*}
   	Then there exists an absolute constant $A>0$ such that
   	$$
   	\sup_x \left| F_n(x) - \Phi(x) \right| \leq \frac{A} {B_n g(\sqrt{B_n})} \sum_{j=1}^n \mathbb{E}\left( X_j^2 g(X_j) \right) .
   	$$
   	
   \end{theorem}

   \begin{proof}[\bf Proof of Lemma~\ref{lemma:trimmedMSEtrunc}]
   	\# \underline{Preliminaries:}   
   	We first establish some preliminary results on the sample averages of
   	$$
   	\widetilde h_{\epsilon}(Y_i, \beta_0, \gamma_*,\pi_0) 
   	:= h_{\epsilon}(Y_i, \beta_0, \gamma_*) 
   	-   \mathbb{E}_{\beta_0,\pi_0}   h_{\epsilon}(Y_i, \beta_0, \gamma_*) .
   	$$
   	According to our assumptions  the $ \widetilde h_{\epsilon}(Y_i, \beta_0, \gamma_*,\pi_0) $
   	are independent random variables with zero mean and finite absolute moments of order $\kappa>2$,
   	under $P_0 = P(\beta_0,\pi_0)$.
   	By applying the result in Dharmadhikari and Jogdeo (1969) we thus find that\footnote{
   		This result is an extension of the Bahr-Esseen inequality to moments larger than two. See also
   		inequality number 16 on p. 60 of Petrov (1975).
   	}
   	\begin{align*}
   	\mathbb{E}_{\beta_0,\pi_0} 
   	\left|  \frac 1 {\sqrt{n}} \sum_{i=1}^n  \widetilde h_{\epsilon}(Y_i, \beta_0, \gamma_*,\pi_0) 
   	\right|^{\kappa}  
   	\leq 
   	C_{\kappa} \,
   	\mathbb{E}_{\beta_0,\pi_0} \left|  \widetilde  h_{\epsilon}(Y_i, \beta_0, \gamma_*,\pi_0)   \right|^{\kappa} ,
   	\end{align*}
   	where the constant $C_{\kappa}>0$ only depends on $\kappa$.
   	Through a combination of the Minkowski and H\"older's inequalities
   	we find that our assumption
   	$\sup_{\pi_0 \in \Gamma_{\epsilon}(\gamma_*)}  \mathbb{E}_{\beta_0,\pi_0}  \left| h_{\epsilon}(Y, \beta_0, \gamma_*)  \right|^{\kappa} = O(1)$
   	also guarantees 
   	$\sup_{\pi_0 \in \Gamma_{\epsilon}(\gamma_*)}  \mathbb{E}_{\beta_0,\pi_0}  \left|  \widetilde  h_{\epsilon}(Y, \beta_0, \gamma_*)  \right|^{\kappa} = O(1)$.
   	We therefore obtain that
   	\begin{align}
   	\sup_{\pi_0\in \Gamma_\epsilon(\gamma_*)}  
   	\left( \mathbb{E}_{\beta_0,\pi_0} 
   	\left|  \frac 1 {\sqrt{n}} \sum_{i=1}^n  \widetilde h_{\epsilon}(Y_i, \beta_0, \gamma_*,\pi_0)   \right|^{\kappa} \right)^{\frac 1 {\kappa}}
   	= O(1) .
   	\label{TildeHresult1}
   	\end{align}
   	Next, we apply Theorem 5 of Chapter V in Petrov (1975), which is restated above as Theorem~\ref{th:Petrov1975},
   	with $X_i$ equal to $\widetilde h_{\epsilon}(Y_i, \beta_0, \gamma_*,\pi_0) $ 
   	and $g(x) = x^{\min\{1,\kappa-2\}}$ to find that
   	\begin{align*}
   	\sup_{\pi_0\in \Gamma_\epsilon(\gamma_*)}
   	\sup_{x \in \mathbb{R}}
   	\left| 
   	{\rm P}_{\beta_0,\pi_0} \left( 
   	\frac { \sum_{i=1}^n  \widetilde h_{\epsilon}(Y_i, \beta_0, \gamma_*,\pi_0)  } {\sqrt{n} \,  \sigma(\beta_0, \gamma_*,\pi_0)  }
   	\leq  x  \right)
   	- \Phi(x) \right| = o(1),
   	\end{align*}
   	where $\sigma^2(\beta_0, \gamma_*,\pi_0) =  \mathbb{E}_{\beta_0,\pi_0}  \widetilde h^2_{\epsilon}(Y_i, \beta_0, \gamma_*,\pi_0)  $. This, in particular, implies that
   	\begin{align}
   	\sup_{\pi_0\in \Gamma_\epsilon(\gamma_*)}
   	{\rm P}_{\beta_0,\pi_0} \left(     \left|  \frac 1 {\sqrt{n}} \sum_{i=1}^n 
   	\widetilde h_{\epsilon}(Y_i, \beta_0, \gamma_*,\pi_0)    \right| >   \log(n) \right) = o(1) .
   	\label{TildeHresult2}
   	\end{align}
   	By an application of H\"older's inequality we find that \eqref{TildeHresult1} and \eqref{TildeHresult2} also imply 
   	\begin{align}      
   	\sup_{\pi_0\in \Gamma_\epsilon(\gamma_*)}  
   	\mathbb{E}_{\beta_0,\pi_0}  \left[  
   	\left(   \frac 1 {\sqrt{n}} \sum_{i=1}^n  \widetilde h_{\epsilon}(Y_i, \beta_0, \gamma_*)   \right)^2
   	\mathbbm{1}\left(   \left|  \frac 1 {\sqrt{n}} \sum_{i=1}^n  \widetilde h_{\epsilon}(Y_i, \beta_0, \gamma_*)   \right|  >       \log n  \right) 
   	\right] 
   	= o(1).
   	\label{TildeHresult3}
   	\end{align}	
   	Finally, we notice that
   	\begin{align}
   	\sup_{\pi_0\in \Gamma_\epsilon(\gamma_*)} 
   	\left|  \delta_{\beta_0,\pi(\gamma_*)}  - \delta_{\beta_0,\pi_0}
   	+  \mathbb{E}_{\beta_0,\pi_0}  h_{\epsilon}(Y, \beta_0, \gamma_*) 
   	\right| = O(\epsilon^{1/2})	,
   	\label{AnotherPreliminaryBound}
   	\end{align}   
   	which follows by applying part (i) and (ii) of Lemma~\ref{lemma:Expansions}
   	with $q_\epsilon(y)=  h_{\epsilon}(y, \beta_0, \gamma_*) $
   	and noting that $  \mathbb{E}_{\beta_0,\pi(\gamma_*)}  h_{\epsilon}(Y, \beta_0, \gamma_*) = 0 $
   	by  the unbiasedness constraint \eqref{Con:Unbiased}.

   	\# \underline{Main result of the Lemma~\ref{lemma:trimmedMSEtrunc}:}      
   	Having established those preliminary results, we now derive the statement of the lemma.   
   	Define 
   	\begin{align*}
   	k_n &:= \frac 1 {\sqrt{n}} \sum_{i=1}^n  h_{\epsilon}(Y_i, \beta_0, \gamma_*)
   	+ \sqrt{n} \left[ \delta_{\beta_0,\pi(\gamma_*)}  -  \delta_{\beta_0,\pi_0}  \right] 
   	\\ &=
   	\frac 1 {\sqrt{n}} \sum_{i=1}^n  \widetilde h_{\epsilon}(Y_i, \beta_0, \gamma_*)
   	+ \sqrt{n} \left[ \delta_{\beta_0,\pi(\gamma_*)}  -  \delta_{\beta_0,\pi_0}   +  \mathbb{E}_{\beta_0,\pi_0}  h_{\epsilon}(Y_i, \beta_0, \gamma_*)   \right]. 
   	\end{align*}
   	The decomposition of $\widehat \delta_{\epsilon} $ in \eqref{DefRemainder1app}
   	can then be rewritten as
   	\begin{align*}
   	\sqrt{n} \, \left(   \widehat \delta_{\epsilon}  -    \delta_{\beta_0,\pi_0}  \right)
   	&=   
   	k_n + R_n .
   	\end{align*} 
   	We have
   	\begin{align*}
   	&  n \,   \mathbb{E}_{\beta_0,\pi_0}  \left[ \left( \widehat \delta_{\epsilon}  - \delta_{\beta_0,\pi_0} \right)^2
   	\mathbbm{1}\left(
   	\left| \widehat \delta_{\epsilon}  - \delta_{\beta_0,\pi_0} \right| \leq m_n		 
   	\right) \right] 
   	\\
   	&\quad \qquad = \mathbb{E}_{\beta_0,\pi_0}  \left[ \left( k_n + R_n \right)^2
   	\mathbbm{1}\left(
   	\left|k_n + R_n \right| \leq n^{1/2} \, m_n		 
   	\right) \right] 		  
   	\\
   	&\quad  \qquad  =
   	\mathbb{E}_{\beta_0,\pi_0}  k_n^2
   	-
   	\underbrace{ \mathbb{E}_{\beta_0,\pi_0}  \left[   k_n^2 \,
   		\mathbbm{1}\left(
   		\left|k_n + R_n \right| > n^{1/2} \, m_n		 
   		\right) \right] 	}_{= \text{term I}}
   	\\[10pt]
   	&\qquad  \qquad   \qquad  \qquad \qquad  \qquad   \qquad  \qquad 
   	+ \underbrace{  \mathbb{E}_{\beta_0,\pi_0}  \left[  (R_n^2  + 2 k_n \, R_n  )
   		\mathbbm{1}\left(
   		\left|k_n + R_n \right| \leq n^{1/2} \, m_n		 
   		\right) \right] 			}_{= \text{term II}}.
   	\end{align*}
   	Thus, Lemma~\ref{lemma:trimmedMSEtrunc} is proved if we can show that term I is $o(1)$,
   	and that term II is larger or equal to minus $o(1)$,
   	both uniformly over $\pi_0\in \Gamma_\epsilon(\gamma_*)$.
   	For term I we use H\"older's inequality to obtain that
   	\begin{align*}
   	& \sup_{\pi_0\in \Gamma_\epsilon(\gamma_*)}  \mathbb{E}_{\beta_0,\pi_0}  \left[   k_n^2 \,
   	\mathbbm{1}\left(
   	\left|k_n + R_n \right| > n^{1/2} \, m_n		 
   	\right) \right] 
   	\\	  
   	& \leq 	
   	\left\{  \sup_{\pi_0\in \Gamma_\epsilon(\gamma_*)} 
   	\left( \mathbb{E}_{\beta_0,\pi_0}  \left| k_n \right|^{\kappa} \right)^{\frac 2 {\kappa}}	
   	\right\}
   	\left\{ \sup_{\pi_0\in \Gamma_\epsilon(\gamma_*)} 
   	\left[ \mathbb{E}_{\beta_0,\pi_0}    \mathbbm{1}\left(
   	\left|k_n + R_n \right| > n^{1/2} \, m_n			 
   	\right) \right]^{\frac {\kappa-2} {\kappa}}
   	\right\}
   	\\	  
   	& \leq 	
   	\Bigg\{ 
   	\underbrace{
   		\sup_{\pi_0\in \Gamma_\epsilon(\gamma_*)} 
   		\left( \mathbb{E}_{\beta_0,\pi_0}  \left| \frac 1 {\sqrt{n}} \sum_{i=1}^n  \widetilde h_{\epsilon}(Y_i, \beta_0, \gamma_*) \right|^{\kappa} \right)^{\frac 2 {\kappa}}
   	}_{=O(1)}
   	\\ & \qquad \qquad   \qquad \qquad  \qquad \qquad  \quad +
   	\underbrace{
   		\sup_{\pi_0\in \Gamma_\epsilon(\gamma_*)} 
   		\left( n^{1/2}  \left|  \delta_{\beta_0,\pi(\gamma_*)}  -  \delta_{\beta_0,\pi_0}    +  \mathbb{E}_{\beta_0,\pi_0}  h_{\epsilon}(Y_i, \beta_0, \gamma_*)    \right|  \right) 	
   	}_{=O(1)}
   	\Bigg\}
   	\\ & \quad \times  
   	\Bigg\{ 
   	\bigg[ \underbrace{
   		\sup_{\pi_0\in \Gamma_\epsilon(\gamma_*)}  \mathbb{E}_{\beta_0,\pi_0}    \mathbbm{1}\left(
   		\left|k_n   \right| > \frac 1 2 \, n^{1/2} \, m_n			 
   		\right) }_{=o(1)}
   	+ 
   	\underbrace{
   		\sup_{\pi_0\in \Gamma_\epsilon(\gamma_*)}  \mathbb{E}_{\beta_0,\pi_0}    \mathbbm{1}\left(
   		\left|R_n   \right| > \frac 1 2 \, n^{1/2} \, m_n			 
   		\right)		  
   	}_{=o(1)} 
   	\bigg]^{\frac {\kappa-2} {\kappa}}
   	\Bigg\}
   	\\
   	&= o(1) ,		  
   	\end{align*}
   	where we also used the definition of $k_n$ together with the 
   	triangle inequality,
   	and we employed 
   	\eqref{TildeHresult1}, \eqref{TildeHresult2} and \eqref{AnotherPreliminaryBound}
   	and
   	Assumption (ii) of the lemma, together with our assumption that
   	$n^{1/2} \, m_n \gg \log(n)$ as $n \rightarrow \infty$.
   	
   	Next, for term II we use that $R_n^2  + 2 k_n \, R_n $ is positive whenever $ \left|R_n   \right| 	>   2 \left|k_n   \right| 	$
   	to obtain that
   	\begin{align*}
   	& \mathbb{E}_{\beta_0,\pi_0}  \left[  (R_n^2  + 2 k_n \, R_n  )
   	\mathbbm{1}\left(
   	\left|k_n + R_n \right| \leq n^{1/2} \, m_n		 
   	\right) \right] 
   	\\
   	& \quad =  \mathbb{E}_{\beta_0,\pi_0}  \left[  (R_n^2  + 2 k_n \, R_n  )
   	\mathbbm{1}\left(
   	\left|k_n + R_n \right| \leq n^{1/2} \, m_n		 
   	\right) 
   	\mathbbm{1}\left(  \left|R_n   \right| 	  \leq   2 \left|k_n   \right| 	   \right)
   	\right] 
   	\\ & \qquad		  
   	+
   	\underbrace{
   		\mathbb{E}_{\beta_0,\pi_0}  \left[  (R_n^2  + 2 k_n \, R_n  )
   		\mathbbm{1}\left(
   		\left|k_n + R_n \right| \leq n^{1/2} \, m_n		 
   		\right)
   		\mathbbm{1}\left(  \left|R_n   \right| 	>   2 \left|k_n   \right| 	   \right)
   		\right] 	
   	}_{ \geq 0} 	  
   	\\
   	& \quad \geq  \mathbb{E}_{\beta_0,\pi_0}  \left[  (R_n^2  + 2 k_n \, R_n  )
   	\mathbbm{1}\left(
   	\left|k_n + R_n \right| \leq n^{1/2} \, m_n		 
   	\right) 
   	\mathbbm{1}\left(  \left|R_n   \right| 	  \leq   2 \left|k_n   \right| 	   \right)
   	\right] 	
   	\\
   	& \quad \geq 
   	- 2 \, \mathbb{E}_{\beta_0,\pi_0}  \Big[   | k_n|  \;  | R_n|   \;
   	\mathbbm{1}\left(  \left|R_n   \right| 	  \leq   2 \left|k_n   \right| 	   \right)
   	\Big] 	  	  
   	\\
   	& \quad \geq 
   	- 2 \, \left\{ \mathbb{E}_{\beta_0,\pi_0} \,   k_n^2  \right\}^{1/2} 	  
   	\left\{    \mathbb{E}_{\beta_0,\pi_0}  \left[  R_n^2  \,
   	\mathbbm{1}\left(  \left|R_n   \right| 	  \leq   2 \left|k_n   \right| 	   \right)
   	\right] 	  
   	\right\}^{1/2}
   	\end{align*}
   	where in the last step we also used the Cauchy-Schwarz inequality.
   	Our preliminary results \eqref{TildeHresult1} and \eqref{AnotherPreliminaryBound} imply that
   	$  \sup_{\pi_0\in \Gamma_\epsilon(\gamma_*)}  \mathbb{E}_{\beta_0,\pi_0} \,   k_n^2 = O(1)$.
   	Furthermore we have
   	\begin{align*}
   	& \mathbb{E}_{\beta_0,\pi_0}  \left[  R_n^2 \,
   	\mathbbm{1}\left(  |R_n| \leq  2 \, |k_n| \right) 
   	\right] 
   	\\ & \qquad		  
   	= 	    \mathbb{E}_{\beta_0,\pi_0}  \left[  R_n^2 \,
   	\mathbbm{1}\left(  |R_n| \leq  2 \, |k_n| \right) 
   	\mathbbm{1}\left(  |k_n| \leq      \log n  \right) 
   	\right] 
   	\\ &\qquad \quad
   	+  \mathbb{E}_{\beta_0,\pi_0}  \left[  R_n^2 \,
   	\mathbbm{1}\left(  |R_n| \leq  2 \, |k_n| \right) 
   	\mathbbm{1}\left(  |k_n|  >      \log n  \right) 
   	\right] 	  
   	\\ &\qquad \leq	   \mathbb{E}_{\beta_0,\pi_0}  \left[  R_n^2 \,
   	\mathbbm{1}\left(  |R_n| \leq  2 \,    \log n  \right) 
   	\right] 	
   	+  4 \, \mathbb{E}_{\beta_0,\pi_0}  \left[  k_n^2 \,
   	\mathbbm{1}\left(  |k_n| >      \log n  \right) 
   	\right]	
   	\\ &\qquad = o(1),		  	  
   	\end{align*}
   	uniformly over $\pi_0\in \Gamma_\epsilon(\gamma_*)$, where we used \eqref{TildeHresult3} and Assumption (v)
   	of the lemma.
   	We thus conclude that term II indeed satisfies
   	$$
   	\sup_{\pi_0\in \Gamma_\epsilon(\gamma_*)}  
   	\left\{ -  \mathbb{E}_{\beta_0,\pi_0}  \left[  (R_n^2  + 2 k_n \, R_n  )
   	\mathbbm{1}\left(
   	\left|k_n + R_n \right| \leq n^{1/2} \, m_n		 
   	\right) \right]  \right\}
   	\leq    o(1)	  .
   	$$
   	Combining the above gives the statement of the lemma.
   \end{proof}

 \subsection{Lemma~\ref{lemma:hMMSEmoment}}
 
 \paragraph{Notation.}
 
 For the proof of Lemma~\ref{lemma:hMMSEmoment} (which assumes the locally quadratic case of Section \ref{Sec_param}) it is convenient to introduce some further notation. We assume that there exists a map $\Omega_{\gamma_*} : \overline {\cal T} \rightarrow {\cal T}$
 such that, for all $v \in \overline {\cal T}$,
 \begin{align*}
 	\| v \|^2_{{\rm ind},\gamma_*}  = \left\langle  v , \Omega_{\gamma_*}   v   \right\rangle .
 \end{align*}
 We assume that $\Omega_{\gamma_*}$ is invertible, and write $\Omega_{\gamma_*}^{-1}: {\cal T} \rightarrow \overline {\cal T}$
 for its inverse.  The map $\Omega_{\gamma_*}^{-1}$ is exactly the ``transposition'' map introduced
 less formally in the main text; that is, 
 for $u \in {\cal T}$ we have $u^{\top} =   \Omega_{\gamma_*}^{-1} \, u \in \overline {\cal T}$.
 Thus, our norm on the cotangent space from the main text
 $ \| u \|^2_{\gamma_*}  = u^{\top} u$ can now be written as
 \begin{align*}
 	\| u \|^2_{\gamma_*}  = \left\langle  \Omega_{\gamma_*}^{-1} \, u , u   \right\rangle .
 \end{align*}
 The norm $ \left\|  \cdot  \right\|_{\gamma_*}$
 is dual to $\|  \cdot  \|_{{\rm ind},\gamma_*}$; that is, we have
 \begin{align*}
 	\left\|  u  \right\|_{\gamma_*}
 	&=
 	\sup_{ v \in \overline {\cal T} \setminus \{0\}} \;   \frac{ \langle v,u \rangle  } {  \| v \|_{{\rm ind},\gamma_*} }  .
 \end{align*}
 Notice also that $ \| \cdot \|_{{\rm ind},\gamma_*} $, $ \| \cdot \|_{\gamma_*} $, $\Omega_{\gamma_*}$, and $\Omega_{\gamma_*}^{-1}$ could all be defined
 for general $\pi \in \Pi$, but since we use them only at the reference value $\pi(\gamma_*)$ we index them simply by $\gamma_*$.

 The vector norms  $ \| \cdot \|_{{\rm ind},\gamma_*} $, $ \| \cdot \|_{\gamma_*} $ and $\|.\|$
 on $\overline {\cal T}$, ${\cal T}$ and $\mathbb{R}^{\dim \beta + \dim \gamma}$ induce natural norms on 
 any maps between $\overline {\cal T}$, ${\cal T}$ and $\mathbb{R}^{\dim \beta + \dim \gamma}$.
 With a slight abuse of notation
 we denote all those norms simply by $\|.\|_{\gamma_*}$. 
 In particular,
 for $\Omega_{\gamma_*}^{-1}: {\cal T} \rightarrow \overline {\cal T}$ we have
 \begin{align}
 	\left\| \Omega_{\gamma_*}^{-1}  \right\|_{\gamma_*}
 	:=  \sup_{u \in   {\cal T} \setminus \{0\}}  \frac{ \|   \Omega_{\gamma_*}^{-1}  \, u   \|_{{\rm ind},\gamma_*} } {\| u\|_{\gamma_*}}
 	=   \sup_{u \in   {\cal T} \setminus \{0\}}  \frac{\left\langle   \Omega_{\gamma_*}^{-1}  \, u  ,  , u    \right\rangle^{1/2} } {\| u\|_{\gamma_*}}
 	= 1 ,
 	\label{propertyMatrixNorm1}
 \end{align}
 and for $H_{\pi,\beta\gamma}:\mathbb{R}^{\limfunc{dim}\beta+\limfunc{dim}\gamma}\rightarrow{\cal{T}}$
 defined in Section~\ref{sec:LocallyQuadratic}
 we have
 \begin{align*}
 	\left\| H_{\pi,\beta\gamma}  \right\|_{\gamma_*}
 	:=  \sup_{w \in \mathbb{R}^{\dim \beta + \dim \gamma} \setminus \{0\}}  \frac{ \|  H_{\pi,\beta\gamma}  w   \|_{\gamma_*} } {\| w\| }
 	=  \sup_{v \in \overline {\cal T} \setminus \{0\}}  \sup_{w \in \mathbb{R}^{\dim \beta + \dim \gamma} \setminus \{0\}}  
 	\frac{  \left\langle v  , H_{\pi,\beta\gamma}  w \right\rangle } {\| v\|_{{\rm ind},\gamma_*} \, \| w\|  }.
 \end{align*}
 Using Assumption~\ref{ass:Expansion}(v) and the Cauchy-Schwarz inequality we find that
 \begin{align}
 	\left\|  H_{\pi,\beta\gamma}     \right\|_{\gamma_*} 
 	&=   \left\| \mathbb{E}_{\beta_0,\pi(\gamma_*)} \left\{  \left[ \nabla_{\pi} \log f_{\beta_0,\pi(\gamma_*)}(Y) \right]
 	\left[ \nabla_{\beta\gamma} \log f_{\beta_0,\pi(\gamma_*)}(Y) \right]'  \right\}  \right\|_{\gamma_*} 
 	\nonumber \\
 	&\leq   \left[  \mathbb{E}_{\beta_0,\pi(\gamma_*)}    \left\|     \nabla_{\pi} \log f_{\beta_0,\pi(\gamma_*)}(Y)  	 \right\|_{\gamma_*}^{2} \right]^{1/2}
 	\left[ \mathbb{E}_{\beta_0,\pi(\gamma_*)}    \left\|   \nabla_{\beta\gamma} \log f_{\beta_0,\pi(\gamma_*)}(Y)  	 \right\|^{2}  \right]^{1/2}
 	\nonumber   \\
 	&= O(1) .
 	\label{propertyMatrixNorm2}
 \end{align}

 \begin{proof}[\bf Proof of Lemma~\ref{lemma:hMMSEmoment}]
 	Equation \eqref{SolutionMMSE} in Lemma~\ref{lem_locquad} in the main text provides an explicit solution for 
 	$ h^{\rm MMSE}_{\epsilon}(y,\beta_0,\gamma_*) $, which in the notation of this appendix can be
 	written as
 	\begin{align*}
 		h^{\rm MMSE}_{\epsilon}(y,\beta_0,\gamma_*) 
 		&=  \left[ \nabla_{\beta\gamma}  \delta_{\beta_0,\pi(\gamma_*)} \right]'
 		H_{\beta\gamma} ^{-1}        \left[ \nabla_{\beta\gamma} \log f_{\beta_0,\pi(\gamma_*)}(y) \right]	
 		\\
 		& \quad +     \left\langle  
 		\left[  \widetilde H_{{\pi}} \,  \Omega_{\gamma_*} + (\epsilon n)^{-1}  \Omega_{\gamma_*} \right]^{-1}
 		\, \widetilde \nabla_{\pi} \delta_{\beta_0,\pi(\gamma_*)} 
 		\, , \,  \widetilde \nabla_{\pi} \log f_{\beta_0,\pi(\gamma_*)}(y) 
 		\,   \right\rangle ,
 	\end{align*}	
 	where 
 	$  \widetilde \nabla_{\pi} \log f_{\beta_0,\pi(\gamma_*)}(y) 
 	=   \nabla_{\pi} \log f_{\beta_0,\pi(\gamma_*)}(y) 
 	-  H_{\pi,\beta\gamma}    H_{\beta\gamma}^{-1}   \,  \nabla_{\beta\gamma} \log f_{\beta_0,\pi(\gamma_*)}(y) $
 	and
 	$ \widetilde \nabla_{\pi} \delta_{\beta_0,\pi(\gamma_*)}
 	= \allowbreak \nabla_{\pi} \delta_{\beta_0,\pi(\gamma_*)} \allowbreak - H_{\pi,\beta\gamma}    H_{\beta\gamma}^{-1} \,  \nabla_{\beta\gamma} \delta_{\beta_0,\pi(\gamma_*)} $.
 	We thus have
 	\begin{align*}
 		\left|  h^{\rm MMSE}_{\epsilon}(y,\beta_0,\gamma_*)  \right|
 		& \leq   \left\| \nabla_{\beta\gamma}  \delta_{\beta_0,\pi(\gamma_*)} \right\|
 		\left\|   H_{\beta\gamma} ^{-1}  \right\|
 		\left\|    \left[ \nabla_{\beta\gamma} \log f_{\beta_0,\pi(\gamma_*)}(y) \right]	 \right\|
 		\\
 		& \quad +  
 		(\epsilon n)  \left\|   \Omega_{\gamma_*}^{-1} \right\|_{\gamma_*}
 		\left\| \widetilde \nabla_{\pi} \delta_{\beta_0,\pi(\gamma_*)}   \right\|_{\gamma_*}
 		\left\|  \widetilde \nabla_{\pi}  \, \log f_{\beta_0,\pi(\gamma_*)}(y)  \right\|_{\gamma_*} ,
 	\end{align*}
 	where we used that
 	$  \left\|  \left[  \widetilde H_{{\pi}} \,  \Omega_{\gamma_*} + (\epsilon n)^{-1}  \Omega_{\gamma_*} \right]^{-1}  \right\|_{\gamma_*} \leq
 	(\epsilon n)  \left\|   \Omega_{\gamma_*}^{-1} \right\|_{\gamma_*}$,
 	because both 	  $ \widetilde H_{{\pi}} \,  \Omega_{\gamma_*}$ and $\Omega_{\gamma_*} $ 
 	are positive semi-definite. We furthermore have
 	\begin{align*}
 		\left\| \widetilde \nabla_{\pi} \delta_{\beta_0,\pi(\gamma_*)}   \right\|_{\gamma_*}
 		&\leq   \left\|  \nabla_{\pi} \log f_{\beta_0,\pi(\gamma_*)}(y)  \right\|_{\gamma_*}
 		+  \left\|  H_{\pi,\beta\gamma}     \right\|_{\gamma_*}
 		\left\| H_{\beta\gamma}^{-1}   \right\|
 		\left\|  \nabla_{\beta\gamma} \log f_{\beta_0,\pi(\gamma_*)}(y)   \right\|
 		,	\\
 		\left\|  \widetilde \nabla_{\pi}  \, \log f_{\beta_0,\pi(\gamma_*)}(y)  \right\|_{\gamma_*} 
 		&\leq  
 		\left\| \allowbreak \nabla_{\pi} \delta_{\beta_0,\pi(\gamma_*)}  \right\|_{\gamma_*}
 		+  \left\| H_{\pi,\beta\gamma}     \right\|_{\gamma_*}
 		\left\|  H_{\beta\gamma}^{-1}   \right\|
 		\left\|   \nabla_{\beta\gamma} \delta_{\beta_0,\pi(\gamma_*)}	  \right\| .
 	\end{align*}
 	Combining those inequalities with our
 	Assumption~\ref{ass:Expansion}(ii) and (v) as well as the results \eqref{propertyMatrixNorm1} and \eqref{propertyMatrixNorm2} above
 	we   find that
 	\begin{align*}
 		\sup_{\pi_0 \in \Gamma_{\epsilon}(\gamma_*)}  & \mathbb{E}_{\beta_0,\pi_0}  \left[ h^{\rm MMSE}_{\epsilon}(Y,\beta_0,\gamma_*)  \right]^{2+\nu}
 		= O(1) .
 	\end{align*}          
 \end{proof} 
 
 \subsection{Lemma \ref{lem_locquad}}
 
 Before deriving the equivalent characterizations of $h_{\epsilon}^{\rm MMSE}(y,\beta_0,\gamma_*) $ given in the lemma
 we note that the optimization problem \eqref{MMSEproblem} that defines $h_{\epsilon}^{\rm MMSE}(y,\beta_0,\gamma_*) $  has a unique
 solution (up to possible deviations on a measure zero set of $y$'s, which are irrelevant for our purposes). This uniqueness follows,
 because under
 the unbiasedness constraint \eqref{Con:Unbiased}, we have
 ${\rm Var}_{\beta_0,\pi(\gamma_*)}(h(Y,\beta_0,\gamma_*) )= \mathbb{E}_{\beta_0,\pi(\gamma_*)} h^2(Y,\beta_0,\gamma_*)$,
 which is quadratic and strictly convex in $h(y,\beta_0,\gamma_*) $, while all other components of
 the objective function and constraints in \eqref{MMSEproblem} are linear in $h(y,\beta_0,\gamma_*) $.
 
 \paragraph{Equation (\ref{SolutionMMSE_linsys}).}
 
 Using simplified notation here,
 our goal is to find the function $h(y) = h(y,\beta_0,\gamma_*) $ that minimizes
 \begin{align*}
 \mathbb{E}  h^2(Y)   +
(\epsilon n) \, \left\{   \nabla_\pi \delta   -  \mathbb{E} \left[ h(Y)  s_\pi (Y) \right] \right\}^\top
\left\{  \nabla_\pi \delta   -  \mathbb{E} \left[ h(Y)s_\pi (Y) \right] \right\} ,
 \end{align*}
 subject to the constraints $ \mathbb{E}  h(Y) =0 $ and
 $\mathbb{E}  \,  h(Y)    s_{\beta\gamma} (Y) =  \nabla_{\beta\gamma}  \delta$. 
 
 Using the latter constraint
 and the definition of $\widetilde \nabla_{\pi}  $ 
 we can equivalently rewrite the objective function as
 \begin{align*}
 &   \mathbb{E}  h^2(Y)   +
 (\epsilon n) \, \left\{  \widetilde \nabla_\pi \delta   -  \mathbb{E} \left[ h(Y)\widetilde s_\pi (Y) \right] \right\}^\top
\left\{ \widetilde \nabla_\pi \delta   -  \mathbb{E} \left[ h(Y) \widetilde s_\pi (Y) \right]
 \right\} 
 \\
 & \qquad    \qquad    \qquad   \qquad    \qquad    \qquad   
 +2 \left\{ \nabla_{\beta\gamma}  \delta   -  \mathbb{E}  \left[  h(Y)     s_{\beta\gamma} (Y) \right] \right\}' 
 H_{\beta\gamma}^{-1}     \, \nabla_{\beta\gamma}  \delta  .
 \end{align*}
 
 The unconstrained minimizer of this rewritten quadratic objective function satisfies the first-order condition
 \begin{align*}
 h^{\rm MMSE}_\epsilon(y) =  s_{\beta\gamma} (y) '  H_{\beta\gamma}^{-1}      \nabla_{\beta\gamma}  \delta
 + (\epsilon n) \,
  \widetilde s_\pi (y)^\top
\left\{ \widetilde \nabla_\pi \delta   -  \mathbb{E} \left[ h^{\rm MMSE}_\epsilon(Y)  \widetilde s_\pi (Y) \right]
 \right\} ,
 \end{align*}
 and because 
 $ \mathbb{E}   s_{\beta\gamma}(Y) = 0$, $ \mathbb{E}   \widetilde s_\pi (Y) = 0$,
 and  
 $\mathbb{E}  [ s_{\beta\gamma} (Y)   s_{\beta\gamma} (Y) '  ]= H_{\beta\gamma}  $,
 we find that this unconstrained minimizer already satisfies both constraints 
 $ \mathbb{E}  h(Y) =0 $ and
 $\mathbb{E}   h(Y)    s_{\beta\gamma} (Y) =  \nabla_{\beta\gamma}  \delta$,
 and is therefore also the constrained minimizer that we wanted to derive.
 
 \paragraph{Equation (\ref{SolutionMMSE_rewrite}).}

 Note that, by \eqref{SolutionMMSE_linsys}, we have
 $h_{\epsilon}^{\rm MMSE}(y) =    s_{\beta\gamma} (y) '     H_{\beta\gamma}^{-1}    \nabla_{\beta\gamma}  \delta
 +   \widetilde s_{\pi} (y)^\top  \,  u $,
 for some $u \in {\cal T}$, and one can easily verify that this implies that 
 $ \widetilde \nabla_{\pi} \delta -\mathbb{E}\left[h_{\epsilon}^{\rm MMSE}(Y) \widetilde s_{\pi} (Y) \right]$ is equal to the same expression with $\widetilde s_{\pi}$ replaced by $s_{\pi}$.
 
 \paragraph{Equation (\ref{SolutionMMSE}).}
 We have already shown that equation \eqref{SolutionMMSE_linsys} is the FOC of the minimization problem \eqref{MMSEproblem}. 
 We now want to show that the solution for  $  h_{\epsilon}^{\rm MMSE}(y)$  given in equation
\eqref{SolutionMMSE} satisfies the FOC \eqref{SolutionMMSE_linsys}, which implies that it solves  \eqref{MMSEproblem}.
Equation \eqref{SolutionMMSE_linsys} can be rewritten as
 \begin{align}
h_{\epsilon}^{\rm MMSE}(y)
&=    s_{\beta\gamma}(y) '     H_{\beta\gamma}^{-1}     
\, \nabla_{\beta\gamma}  \delta +  (\epsilon n)    \,  \widetilde s_{\pi}(y)^\top \, u ,
&
u &:=   \widetilde \nabla_{\pi} \delta -\mathbb{E}\left[h_{\epsilon}^{\rm MMSE}(Y) \widetilde s_{\pi} (Y)  
\right]   .
   \label{Rewrite_SolutionMMSE_linsys}
\end{align}
Plugging the expression for  $  h_{\epsilon}^{\rm MMSE}(y)$ given  by equation
\eqref{SolutionMMSE} into this definition of $u$ and using that
$\mathbb{E}\left[\widetilde s_{\pi} (Y)  \widetilde s_{\pi} (Y)^\top   \right] =  \widetilde H_{{\pi}}  $,
and $\mathbb{E}\left[  \widetilde s_{\pi} (Y) s_{\beta\gamma} (Y)'     \right] = 0 $, we find that
\eqref{SolutionMMSE} implies that
\begin{align*}
    u &=     \widetilde \nabla_{\pi} \delta - 
     \widetilde H_{{\pi}}   \left[  \widetilde H_{{\pi}}  + (\epsilon n)^{-1}  \mathbb{I} \right]^{-1}
	\, \widetilde \nabla_{\pi} \delta 
    \\
     &=  \left\{	  \mathbb{I} -  \widetilde H_{{\pi}}   \left[  \widetilde H_{{\pi}}  + (\epsilon n)^{-1}  \mathbb{I} \right]^{-1} \right\}
	\, \widetilde \nabla_{\pi} \delta 
    \\
     &=  \left\{	  \left[  \widetilde H_{{\pi}}  + (\epsilon n)^{-1}  \mathbb{I} \right]
      \left[  \widetilde H_{{\pi}}  + (\epsilon n)^{-1}  \mathbb{I} \right]^{-1}
       -  \widetilde H_{{\pi}}   \left[  \widetilde H_{{\pi}}  + (\epsilon n)^{-1}  \mathbb{I} \right]^{-1} \right\}
	\, \widetilde \nabla_{\pi} \delta 
    \\
    &= 	 (\epsilon n)^{-1}   \left[  \widetilde H_{{\pi}}  + (\epsilon n)^{-1}  \mathbb{I} \right]^{-1}  \, \widetilde \nabla_{\pi} \delta .
\end{align*}
This expression for $u$ makes the first equation in  \eqref{Rewrite_SolutionMMSE_linsys} equivalent to \eqref{SolutionMMSE}.
Therefore, we have shown that    $  h_{\epsilon}^{\rm MMSE}(y)$ as given  by  
\eqref{SolutionMMSE} indeed solves  \eqref{SolutionMMSE_linsys}, and therefore also our optimization problem in \eqref{MMSEproblem}.

 \subsection{Lemma \ref{lem_locquad_cov}}
 
 Our goal is to choose the function $h(\cdot,\cdot,\beta,\gamma,f_X)$ such that
 the worst-case mean squared error
 $$  \sup_{ \pi_0 \in  \Gamma_\epsilon(\gamma_*)} \mathbb{E}_{\beta_0,\pi_0,f_X}\left[ \left( \widehat{\delta}_h  - \delta_{\beta_0,\pi_0,f_X}  \right)^2 \right]
 $$
 is minimized for small values of $\epsilon$, subject to unbiasedness under the reference model,
 and also subject to local robustness constraints to account for the fact that $\beta_0$, $\gamma_*$ and $f_X$ are estimated
 from the sample.

  Unbiasedness is
 \begin{align}
 \mathbb{E}_{f_X}\,\mathbb{E}_{\beta_0,\pi(\gamma_*)} \, h(Y,X,\beta_0,\gamma_*,f_X) = 0 , 
 \label{Con:Unbiased_cov}
 \end{align} 
while local robustness is
 \begin{align} 
 \begin{array}{l}
 \displaystyle
 \mathbb{E}_{f_X}\,\mathbb{E}_{\beta_0,\pi(\gamma_*)}  \,  h(Y, X, \beta_0,\gamma_*,f_X)    \, \nabla_{\beta\gamma} \log f_{\beta_0,\pi(\gamma_*)}(Y\,|\, X)
 =  \mathbb{E}_{f_X}\,\nabla_{\beta\gamma}\delta_{\beta_0,\pi(\gamma_*)}(X),
 \\
 [5pt]
 \displaystyle
 \mathbb{E}_{\beta_0,\pi(\gamma_*)}  \, \left[ h(Y, X, \beta_0,\gamma_*,f_X)   \,|\, X=x\right]
 =  \delta_{\beta_0,\pi(\gamma_*)}(x)-\mathbb{E}_{f_X}\,\delta_{\beta_0,\pi(\gamma_*)}(X).
 \end{array}
 \label{Con:EtaGradient_cov}
 \end{align}

 The minimum-MSE influence function satisfies
 \begin{align*}
 &h_{\epsilon}^{\rm MMSE}( \cdot ,\cdot,\beta_0,\gamma_*,f_X)  = \\
 &\argmin_{h( \cdot ,\cdot,\beta_0,\gamma_*,f_X)} \,
 \Bigg\{  \epsilon \, \left\|   \mathbb{E}_{f_X}\,\nabla_\pi \delta_{\beta_0,\pi(\gamma_*)}(X) -  \mathbb{E}_{f_X}\,\mathbb{E}_{\beta_0,\pi(\gamma_*)} \, h(Y,X,\beta_0,\gamma_*,f_X) \; \nabla_\pi \log f_{\beta_0,\pi(\gamma_*)}(Y\,|\, X) \right\|^2_{\gamma_*}  
 \nonumber \\ & \quad\quad\quad\quad\quad\quad
 +  \frac{\mathbb{E}_{f_X}\, {\rm Var}_{\beta_0,\pi(\gamma_*)}(h(Y,X,\beta_0,\gamma_*,f_X)\,|\, X)  } {n}   \Bigg\}
 \quad \quad \quad 
 \text{subject to \eqref{Con:Unbiased_cov} and \eqref{Con:EtaGradient_cov}}.
 \end{align*}
 
 In the locally quadratic case, following similar derivations as for equation (\ref{SolutionMMSE_linsys}) in Lemma \ref{lem_locquad}, we obtain (\ref{SolutionMMSE_COV}).

\subsection{Corollary \ref{SolutionMMSE_para}}

This is a direct implication of (\ref{SolutionMMSE}).

\subsection{Corollary \ref{SolutionMMSE_semiparam}}

This is a direct implication of (\ref{SolutionMMSE_rewrite}).

\subsection{Corollary \ref{MMSE_prob}}

Lemma \ref{lem_locquad_cov} implies, analogously to (\ref{SolutionMMSE_rewrite}), that
	\begin{align}
	&h_{\epsilon}^{\rm MMSE}(y,x)
	\,=\,     \delta(x)-\mathbb{E}_{f_X}\delta(X) +  s_{\beta\gamma}(y\,|\, x) '     [\mathbb{E}_{f_X}H_{\beta\gamma}(X)]^{-1}     
	\, \mathbb{E}_{f_X}\nabla_{\beta\gamma}  \delta(X) 
	\nonumber  \\& \quad \quad \quad \quad\quad \quad\quad \quad    +   (\epsilon n)  \widetilde s_{\pi} (y\,|\, x) ^\top
	\left\{  \mathbb{E}_{f_X}\nabla_{\pi} \delta(X) -\mathbb{E}_{f_X}\mathbb{E}\left[ h_{\epsilon}^{\rm MMSE}(Y,X) s_{\pi} (Y\,|\, X)\right]   \right\} .
	\label{SolutionMMSE_COV2}    
	\end{align}

Since $A$ and $X$ are independent, $\mathbb{E}_{f_X}\nabla_{\pi} \delta(X)$ can be represented by the function $$a\mapsto\mathbb{E}_{f_X}\left[\Delta(a,X)\right]-\mathbb{E}_{f_X}\delta(X).$$Likewise, $\mathbb{E}_{f_X}\mathbb{E}\left[ h_{\epsilon}^{\rm MMSE}(Y,X) s_{\pi} (Y\,|\, X)\right]$ can be represented by the function
$$a\mapsto\mathbb{E}_{f_X}\mathbb{E}\left[ h_{\epsilon}^{\rm MMSE}(Y,X) \,|\, A=a,X\right]=\overline{h}_{\epsilon}^{\rm MMSE}(a).$$ 

Moreover, we have for any cotangent element $u$ (a function of $a$),
\begin{align}
\widetilde s_{\pi} (y\,|\, x) ^\top \, u=& \mathbb{E}\left[ u(A) \,\big|\, Y=y,X=x\right]-\mathbb{E} \left[ u(A)\right]\notag \\
&- s_{\beta\gamma} (y\,|\, x)'[\mathbb{E}_{f_X}H_{\beta\gamma}(X)]^{-1}  \mathbb{E}_{f_X}\mathbb{E}\left[  s_{\beta\gamma} (Y\,|\, X)u(A) \right].\label{eq_score_X}
\end{align}

Corollary \ref{MMSE_prob} then follows from evaluating (\ref{eq_score_X}) at $$u(a):=\mathbb{E}_{f_X}\left[\Delta(a,X)\right]-\mathbb{E}_{f_X}\delta(X)-\overline{h}_{\epsilon}^{\rm MMSE}(a).$$

\subsection{Corollary \ref{MMSE_prob2}}

Let us start again from (\ref{SolutionMMSE_COV2}). In the correlated case, $\mathbb{E}_{f_X}\nabla_{\pi} \delta(X)$ can be represented by the function $$(a,x)\mapsto\Delta(a,x)f_X(x)-\delta(x)f_X(x).$$
Likewise, $\mathbb{E}_{f_X}\mathbb{E}\left[ h_{\epsilon}^{\rm MMSE}(Y,X) s_{\pi} (Y\,|\, X)\right]$ can be represented by the function
\begin{align*}(a,x)\mapsto&\mathbb{E}\left[ h_{\epsilon}^{\rm MMSE}(Y,X) \,|\, A=a,X=x\right]f_X(x)-\mathbb{E}\left[ h_{\epsilon}^{\rm MMSE}(Y,X) \,|\, X=x\right]f_X(x)\\
&\quad \quad \quad=\overline{h}_{\epsilon}^{\rm MMSE}(a,x)f_X(x)-\mathbb{E}\left[ h_{\epsilon}^{\rm MMSE}(Y,X) \,|\, X=x\right]f_X(x).\end{align*} 
Now, by (\ref{Con:EtaGradient_cov}) we have
\begin{equation}\mathbb{E}\left[ h_{\epsilon}^{\rm MMSE}(Y,X) \,|\, X=x\right]=\delta(x)-\mathbb{E}_{f_X} \delta(X).\label{eq_loc_rob_fX}\end{equation}
Hence, $\mathbb{E}_{f_X}\nabla_{\pi} \delta(X)-\mathbb{E}_{f_X}\mathbb{E}\left[ h_{\epsilon}^{\rm MMSE}(Y,X) s_{\pi} (Y\,|\, X)\right]$ can be represented by the function
$$(a,x)\mapsto \Delta(a,x)f_X(x)-\mathbb{E}_{f_X} \delta(X)f_X(x)-\overline{h}_{\epsilon}^{\rm MMSE}(a,x)f_X(x).$$

In the present case, cotangent elements are functions of $a$ and $x$. The corresponding squared dual norm is\footnote{This can be shown as in Subsection \ref{subsec_KL}, with the difference that here twice the KL divergence reads, using the notation of that subsection, $
	d( f_0 , f_* ) =  - \, 2 \,  \mathbb{E}_{f_X}\mathbb{E}_0  \log \frac {f_*(A\,|\, X)} {f_0(A \,|\, X)}$. Alternatively, Corollary \ref{MMSE_prob2} can be derived by defining $\pi_0$ as the joint distribution of $(A,X)$, and imposing the constraint that $\int_{\cal{A}} \pi_0(a,x)da=f_X(x)$.}
$$\|u\|_{\gamma_*}^2= \mathbb{E}_{f_X}\mathbb{E}\left[\left(\frac{u(A,X)-\mathbb{E}[u(A,X)\,|\, X]}{f_X(X)}\right)^2\right].$$  
In addition we have, for any cotangent element $u$ (a function of $a$ and $x$)
\begin{align}
\widetilde s_{\pi} (y\,|\, x) ^\top \, u=& \mathbb{E}\left[ \frac{u(A,X)}{f_X(X)} \,\big|\, Y=y,X=x\right]-\mathbb{E} \left[ \frac{u(A,X)}{f_X(X)}\,\big|\, X=x\right]\notag \\
&- s_{\beta\gamma} (y\,|\, x)'[\mathbb{E}_{f_X}H_{\beta\gamma}(X)]^{-1}  \mathbb{E}_{f_X}\mathbb{E}\left[  s_{\beta\gamma} (Y\,|\, X)\frac{u(A,X)}{f_X(X)} \right].\label{eq_score_X2}
\end{align}

 Corollary \ref{MMSE_prob2} then follows from evaluating (\ref{eq_score_X2}) at $$u(a,x):=\Delta(a,x)f_X(x)-\mathbb{E}_{f_X} \delta(X)f_X(x)-\overline{h}_{\epsilon}^{\rm MMSE}(a,x)f_X(x),$$
and noting that, by (\ref{eq_loc_rob_fX}), $\mathbb{E}[u(A,X)\,|\, X=x]=0$.

 \section{Complements to Section \ref{Sec_param}\label{App_der_sec3}}

 \subsection{Dual of the Kullback-Leibler divergence\label{subsec_KL}}
 
 Let $A$ be a random variable with domain ${\cal A}$,
 reference distribution $f_*(a)$ and ``true'' distribution $f_0(a)$. 
 We use notation $f_*(a)$ and  $f_0(a)$ as if those were densities, but point masses are also
 allowed. Twice the Kullback-Leibler (KL) divergence reads
 \begin{align*}
 d( f_0 , f_* ) =  - \, 2 \,  \mathbb{E}_0  \log \frac {f_*(A)} {f_0(A)} ,
 \end{align*}
 where $\mathbb{E}_0$ is the expectation under $f_0$. Let ${\cal F}$ be the set of all distributions, in particular, $ f \in {\cal F}$ implies $\int_{\cal{A}} f(a) da = 1$.
 Let $q : {\cal A} \rightarrow \mathbb{R}$ be a real valued function.
 For given $f_*  \in {\cal F}$ and $\epsilon > 0$ we define
 \begin{align*}
 \| q \|_{*,\epsilon} :=  \max_{\left\{ f_0 \in {\cal F} \, : \, d( f_0 , f_* ) \leq \epsilon \right\}} 
 \frac{    \mathbb{E}_0  \, q(A) - \mathbb{E}_*  \, q(A)   } {\sqrt{ \epsilon} } ,
 \end{align*}
 where $\mathbb{E}_*$ is the expectation under $f_*$.
 
 We have the following result.
 
 \begin{lemma}
 	For $q : {\cal A} \rightarrow \mathbb{R}$
 	and $f_* \in {\cal F}$
 	we assume that the 
 	moment-generating function 
 	$m_*(t) = \mathbb{E}_*  \exp( t \,  q(A) )$
 	exists for $t \in (\delta_-,  \delta_+)$
 	and some $\delta_- < 0$ and $\delta_+>0$.\footnote{%
 		Existence of $m_*(t)$ in an open interval around zero is equivalent
 		to having an exponential decay of the tails of the distribution of the random variable $Q=q(A)$. If $q(a)$ is bounded, then  $m_*(t)$ exists for all $t \in \mathbb{R}$.
 	}
 	For $\epsilon \in (0, \delta_+^2)$ we then have
 	\begin{align*}
 	\| q \|_{*,\epsilon}   &=     \sqrt{{\rm Var}_*(q(A))}  +  O(\epsilon^{\frac{1}{2}}).         
 	\end{align*}
 \end{lemma}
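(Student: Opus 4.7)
The plan is to establish matching upper and lower bounds of the form $\sqrt{\mathrm{Var}_*(q(A))} + O(\epsilon^{1/2})$ by combining the Donsker--Varadhan variational representation of the KL divergence with a near-optimal exponential tilt. Without loss of generality, recenter $q$ so that $\mu:=\mathbb{E}_* q(A)=0$, and write $\sigma^2:=\mathrm{Var}_*(q(A))$ and $\psi(t):=\log m_*(t)$. The hypothesis on $m_*$ makes $\psi$ real-analytic on $(\delta_-,\delta_+)$, with $\psi(0)=\psi'(0)=0$, $\psi''(0)=\sigma^2$, and a Taylor expansion
\[
\psi(t)=\tfrac{1}{2}\sigma^2 t^2+O(t^3)
\]
valid uniformly on any closed subinterval of $(\delta_-,\delta_+)$. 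The case $\sigma=0$ is trivial, because then $q$ is $f_*$-a.s.\ constant and the finiteness of $d(f_0,f_*)$ forces $f_0\ll f_*$, hence the same constant $f_0$-a.s., so $\|q\|_{*,\epsilon}=0$. I therefore assume $\sigma>0$.

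For the upper bound, I would invoke the Gibbs/Donsker--Varadhan inequality: for every $f_0\in\mathcal F$ and every $t\in(\delta_-,\delta_+)$,
\[
t\,\mathbb{E}_0 q(A)\le \psi(t)+\mathrm{KL}(f_0\|f_*)=\psi(t)+\tfrac{1}{2}d(f_0,f_*).
\]
Restricting to $d(f_0,f_*)\le \epsilon$ and $t>0$, this gives $\mathbb{E}_0 q(A)\le \psi(t)/t+\epsilon/(2t)$. Substituting the expansion of $\psi$ and minimizing the right-hand side at $t^\star=\sqrt{\epsilon}/\sigma$ (which lies in $(\delta_-,\delta_+)$ as soon as $\epsilon<\delta_+^2\sigma^2$, and in any case for $\epsilon$ small) yields $\mathbb{E}_0 q(A)\le \sqrt{\epsilon}\,\sigma+O(\epsilon)$, so $\|q\|_{*,\epsilon}\le \sigma+O(\epsilon^{1/2})$.

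For the matching lower bound, I would construct an explicit feasible tilt. Take $f_0(a)=f_*(a)\,e^{tq(a)}/m_*(t)$ with $t=c_\epsilon\sqrt{\epsilon}/\sigma$, where $c_\epsilon=1-O(\sqrt{\epsilon})$ is chosen below. A direct computation, using $\mathbb{E}_0 q(A)=\psi'(t)$, gives
\[
\mathrm{KL}(f_0\|f_*)=t\psi'(t)-\psi(t)=\tfrac{1}{2}\sigma^2 t^2+O(t^3),
\]
hence $d(f_0,f_*)=\sigma^2 t^2+O(t^3)$. Taking $c_\epsilon$ slightly below $1$ absorbs the cubic correction and enforces $d(f_0,f_*)\le\epsilon$, and then
\[
\mathbb{E}_0 q(A)-\mathbb{E}_* q(A)=\psi'(t)-\psi'(0)=\sigma^2 t+O(t^2)=\sqrt{\epsilon}\,\sigma+O(\epsilon).
\]
Dividing by $\sqrt{\epsilon}$ gives $\|q\|_{*,\epsilon}\ge \sigma+O(\epsilon^{1/2})$, completing the proof.

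The main obstacle is not the variational step itself, which is standard, but the bookkeeping of remainders. One must verify that $t=O(\sqrt{\epsilon})$ stays inside $(\delta_-,\delta_+)$ (which is ensured by the assumption $\epsilon\in(0,\delta_+^2)$) and that the $O(t^3)$ term in $\psi(t)$ admits a constant depending only on $f_*$ and $q$; this follows from the analyticity of $\psi$ on a fixed open interval around $0$ via Cauchy's estimates, so the $O(\epsilon^{1/2})$ in the conclusion is uniform in the relevant range of $\epsilon$.
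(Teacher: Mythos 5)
Your proof is correct, but it is organized differently from the paper's. The paper attacks the constrained maximization directly: it writes down the Lagrangian, asserts via the Karush--Kuhn--Tucker conditions that the maximizer is the exponential tilt $f_0(a)=f_*(a)e^{tq(a)}/m_*(t)$, solves the constraint $d(f_0,f_*)=\epsilon$ for $t$ by a mean-value expansion of $d(t)=2[t\,k^{(1)}_*(t)-k_*(t)]$, and then expands $\|q\|_{*,\epsilon}=\epsilon^{-1/2}[k^{(1)}_*(t)-k^{(1)}_*(0)]$. You instead sandwich the supremum: the Donsker--Varadhan inequality $t\,\mathbb{E}_0 q\le\psi(t)+\tfrac12 d(f_0,f_*)$, optimized at $t=\sqrt{\epsilon}/\sigma$, gives the upper bound \emph{for every} feasible $f_0$ without ever identifying the maximizer, and an explicit feasible tilt at $t=c_\epsilon\sqrt{\epsilon}/\sigma$ gives the matching lower bound. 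What your route buys is rigor at the one place the paper is informal: the claim that the first-order conditions of an infinite-dimensional constrained problem yield the global maximum is simply asserted there ("the well-known exponential tilting result"), whereas your upper bound needs no such claim. You also handle the degenerate case $\sigma=0$, which the paper's expansion $t=\epsilon^{1/2}/\sqrt{{\rm Var}_*(q(A))}+O(\epsilon)$ silently excludes. What the paper's route buys is that it identifies the worst-case $f_0$ explicitly, which is used elsewhere (the calibration of $\epsilon$ and the least-favorable-prior discussion). The remainder bookkeeping is essentially identical in both arguments: everything reduces to Taylor expansions of the cumulant-generating function near zero, and both proofs share the same minor loose end that $t=O(\sqrt{\epsilon}/\sigma)$ is only guaranteed to lie in $(0,\delta_+)$ for $\epsilon$ small when $\sigma<1$, which is harmless for an asymptotic $O(\epsilon^{1/2})$ statement.
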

 
 \begin{proof}[\bf Proof]
 	Let the  cumulant-generating function of the random variable $q(A)$
 	under the reference measure $f_*$ be
 	$k_*(t) = \log m_*(t)$.
 	We assume existence of $m_*(t)$ and $k_*(t) $ for $t \in (\delta_-,  \delta_+)$.
 	This also implies that all derivatives of $m_*(t)$ and $k_*(t) $ exist in this interval.
 	We  denote the $p$-th derivative of $m_*(t)$ by $m^{(p)}_*(t)$,
 	and analogously for $k_*(t) $.

 	In the following we denote the maximizing $f_0$ 
 	in the definition of $\| q \|_{*,\epsilon}$ simply by $f_0$.
 	Applying standard optimization method (Karush-Kuhn-Tucker) we find the well-known
 	exponential tilting result
 	\begin{align*}
 	f_0(a) = c \, f_*(a) \,  \exp( t \,  q(a)    ) ,
 	\end{align*}
 	where the constants $c, t \in (0, \infty)$ are determined by the constraints
 	$\int_{\cal{A}} f_0(a) da = 1$ and $d( f_0 , f_* ) = \epsilon $. Using the constraint 
 	$\int_{\cal{A}} f_0(a) da = 1$ we can solve for $c$ to obtain
 	\begin{align*}
 	f_0(a) = \frac{ f_*(a) \,  \exp( t \,  q(a)    ) }
 	{\mathbb{E}_*  \exp( t \,  q(A) ) }  
 	=  \frac{ f_*(a) \,  \exp( t \,  q(a)    ) }
 	{ m_*(t) } .
 	\end{align*}
 	Using this we find that
 	\begin{align*}
 	d(t) & := d( f_0 , f_* )
 	\\ 
 	&=  2 \,  \mathbb{E}_*  \frac {f_0(A)} {f_*(A)}  \log \frac {f_0(A)} {f_*(A)}
 	\\  
 	&= \frac{ 2 \, t } {m_*(t) } \,  \mathbb{E}_*     \exp( t \,  q(A)    )     q(A) 
 	-  \frac{ 2 \log m_*(t)  } {m_*(t) } \,  \mathbb{E}_*     \exp( t \,  q(A)    )   
 	\\
 	&=      \frac{ 2 \, t \, m^{(1)}_*(t) } {m_*(t) }  - 2 \log m_*(t) 
 	\\  
 	&=    2 \left[ t \, k^{(1)}_*(t)    -   k_*(t) \right]  .
 	\end{align*}
 	We have $d(0)=0$, $d^{(1)}(0)=0$, $d^{(2)}(0) = 2 k_*^{(2)}(0) = 2 {\rm Var}_*(q(A))$,
 	$d^{(3)}(t) = 4  k^{(3)}_*(t) + 2 t k^{(4)}_*(t)$.
 	A mean-value expansion thus gives
 	\begin{align*}
 	d(t) &= {\rm Var}_*(q(A))  t^2 +  \frac {t^3} {6} \left[ 4 \,  k^{(3)}_*(\tilde t) + 2 \, \tilde t \, k^{(4)}_*( \tilde t) \right] ,
 	\end{align*}
 	where $0 \leq \tilde t \leq t \leq \delta_+$.
 	The value $t$ that satisfies the constraint  $d(t) = \epsilon$ therefore satisfies
 	\begin{align*}
 	t =  \frac{ \epsilon^{\frac{1}{2}} } {\sqrt{ {\rm Var}_*(q(A))}} + O(\epsilon).
 	\end{align*}
 	Next, using that
 	$       \| q \|_{*,\epsilon}  = \epsilon^{-\frac{1}{2}} \;  \mathbb{E}_* \left[  \left(  \frac{  f_0(A)   } { f_*(A)} -1 \right)  q(A) \right] $
 	we find
 	\begin{align*}
 	\| q \|_{*,\epsilon}
 	&= \epsilon^{-\frac{1}{2}} \left[   k^{(1)}_*(t)    -  k^{(1)}_*(0)   \right] .
 	\end{align*}
 	Again using that  $k_*^{(2)}(0) =  {\rm Var}_*(q(A))$ and applying a mean value expansion we obtain
 	\begin{align*}
 	\| q \|_{*,\epsilon}
 	&= \epsilon^{-\frac{1}{2}} \left[  t \, k^{(2)}_*(t)  + \frac 1 2 \, t^2 \,   k^{(3)}_*(\bar t)   \right] 
 	\\
 	&= \epsilon^{-\frac{1}{2}} \left[  t \,  {\rm Var}_*(q(A)) + \frac 1 2 \, t^2 \,   k^{(3)}_*(\bar t)   \right] 
 	\\ 
 	&=  \sqrt{{\rm Var}_*(q(A))}  +  O(\epsilon^{\frac{1}{2}}) ,
 	\end{align*}
 	where $\bar t \in [0,t]$.
 \end{proof}
 
 \subsection{Equations (\ref{eq_minimum_IP_sol}), (\ref{regu_IF_semi}) and (\ref{eq_minimum_IP_sol_X})}
 
Here we use simplified notation as in Section \ref{Sec_param}. Let us start by deriving (\ref{eq_minimum_IP_sol}). In this case $\beta_0$ and $\gamma_*$ are known, and Corollary \ref{SolutionMMSE_semiparam} gives
\begin{align*}
h_{\epsilon}^{\rm MMSE}
=    &   (\epsilon n) \mathbb{E}_{{\cal{A}}\,|\, {\cal{Y}}}\left[\Delta-\delta-\mathbb{E}_{{\cal{Y}}\,|\, {\cal{A}}}h^{\rm MMSE}\right],
\end{align*}
so
\begin{align*}
h_{\epsilon}^{\rm MMSE}
=    &   \left[(\epsilon n)^{-1}\mathbb{I}_{\cal{Y}}+\mathbb{E}_{{\cal{A}}\,|\, {\cal{Y}}}\circ\mathbb{E}_{{\cal{Y}}\,|\, {\cal{A}}}\right]^{-1} \mathbb{E}_{{\cal{A}}\,|\, {\cal{Y}}}\left[\Delta-\delta\right].
\end{align*}
(\ref{eq_minimum_IP_sol}) then follows from the operator identity
\begin{align*}
\left[(\epsilon n)^{-1}\mathbb{I}_{\cal{Y}}+\mathbb{E}_{{\cal{A}}\,|\, {\cal{Y}}}\circ\mathbb{E}_{{\cal{Y}}\,|\, {\cal{A}}}\right]^{-1} \mathbb{E}_{{\cal{A}}\,|\, {\cal{Y}}}= \mathbb{E}_{{\cal{A}}\,|\, {\cal{Y}}}\left[\mathbb{E}_{{\cal{Y}}\,|\, {\cal{A}}}\circ\mathbb{E}_{{\cal{A}}\,|\, {\cal{Y}}}+(\epsilon n)^{-1}\mathbb{I}_{\cal{A}}\right]^{-1}.
\end{align*}

Let us now derive (\ref{regu_IF_semi}). In this case $\gamma_*$ is known. Since $\Delta(A)=c'\beta_0={\delta}$, Corollary \ref{SolutionMMSE_semiparam} implies
\begin{align*}
h_{\epsilon}^{\rm MMSE}(y)
=    &  s_{\beta\gamma} (y) '     H_{\beta\gamma}^{-1}   c    - (\epsilon n) \bigg\{\mathbb{E}\left[\overline{h}^{\rm MMSE}(A)\,|\, Y=y\right]-s_{\beta\gamma} (y)'H_{\beta\gamma}^{-1}\mathbb{E} \left[s_{\beta\gamma} (Y) \overline{h}^{\rm MMSE}(A)     \right]\bigg\}.
\end{align*}
Hence, we have, for some vector $b$,
\begin{align*}
h_{\epsilon}^{\rm MMSE}
=    &s_{\beta\gamma} (y)'b - (\epsilon n) \mathbb{E}_{{\cal{A}}\,|\, {\cal{Y}}}\circ\mathbb{E}_{{\cal{Y}}\,|\, {\cal{A}}}h^{\rm MMSE}.
\end{align*}
Using the Woodbury identity
\begin{equation*}\left[\mathbb{I}_{\cal{Y}}+(\epsilon n)\mathbb{E}_{{\cal{A}}\,|\, {\cal{Y}}}\circ\mathbb{E}_{{\cal{Y}}\,|\, {\cal{A}}}\right]^{-1}=  \underset{=\mathbb{W}^{\epsilon}}{\underbrace{\mathbb{I}_{\cal{Y}}-\mathbb{E}_{{\cal{A}}\,|\, {\cal{Y}}} \left[\mathbb{E}_{{\cal{Y}}\,|\, {\cal{A}}}\circ\mathbb{E}_{{\cal{A}}\,|\, {\cal{Y}}}+(\epsilon n)^{-1}\mathbb{I}_{\cal{A}}\right]^{-1} \mathbb{E}_{{\cal{Y}}\,|\, {\cal{A}}}}},\end{equation*}
we thus obtain
\begin{align*}
h_{\epsilon}^{\rm MMSE}
=    & \mathbb{W}^{\epsilon}s_{\beta\gamma} (y) 'b.
\end{align*}
Lastly, since by (\ref{Con:EtaGradient}) $\mathbb{E}\,[h_{\epsilon}^{\rm MMSE}(Y)s_{\beta\gamma}(Y) ]=c$, we obtain (\ref{regu_IF_semi}) whenever the denominator is non-singular. 

Finally, let us derive (\ref{eq_minimum_IP_sol_X}). In this case $\beta_0$ and $\gamma_*$ are known and $\Delta(A)$ does not depend on $X$, and Corollary \ref{MMSE_prob} gives
\begin{align*}
h_{\epsilon}^{\rm MMSE}
=    &   (\epsilon n) \mathbb{E}_{{\cal{A}}\,|\, {\cal{Y}},{\cal{X}}}\left[\mathbb{E}_{f_X}(\Delta-\delta)-\mathbb{E}_{{\cal{Y}},{\cal{X}}\,|\, {\cal{A}}}h^{\rm MMSE}\right].
\end{align*}
Hence, denoting $\mathbb{I}_{{\cal{Y}},{\cal{X}}}h(y,x)=h(y,x)$ the identity operator, we have
\begin{align*}
h_{\epsilon}^{\rm MMSE}
=    &   \left[(\epsilon n)^{-1}\mathbb{I}_{{\cal{Y}},{\cal{X}}}+\mathbb{E}_{{\cal{A}}\,|\, {\cal{Y}},{\cal{X}}}\circ\mathbb{E}_{{\cal{Y}},{\cal{X}}\,|\, {\cal{A}}}\right]^{-1} \mathbb{E}_{{\cal{A}}\,|\, {\cal{Y}},{\cal{X}}}\mathbb{E}_{f_X}(\Delta-\delta).
\end{align*}
(\ref{eq_minimum_IP_sol_X}) then follows from
\begin{align*}
\left[(\epsilon n)^{-1}\mathbb{I}_{{\cal{Y}},{\cal{X}}}+\mathbb{E}_{{\cal{A}}\,|\, {\cal{Y}},{\cal{X}}}\circ\mathbb{E}_{{\cal{Y}},{\cal{X}}\,|\, {\cal{A}}}\right]^{-1} \mathbb{E}_{{\cal{A}}\,|\, {\cal{Y}},{\cal{X}}}= \mathbb{E}_{{\cal{A}}\,|\, {\cal{Y}},{\cal{X}}}\left[\mathbb{E}_{{\cal{Y}},{\cal{X}}\,|\, {\cal{A}}}\circ\mathbb{E}_{{\cal{A}}\,|\, {\cal{Y}},{\cal{X}}}+(\epsilon n)^{-1}\mathbb{I}_{\cal{A}}\right]^{-1}.
\end{align*}

 \section{Computation in semi-parametric mixture models\label{App_compute}}
 
 Here we describe how we compute a numerical approximation to the minimum-MSE estimator in semi-parametric mixture models 
 $$\widehat{\delta}_{\epsilon}^{\rm MMSE}=\mathbb{E}_{\widehat{\beta},\pi(\widehat{\gamma})}\,\Delta_{\widehat{\beta}}(A)+\frac{1}{n}\sum_{i=1}^nh_{\epsilon}^{\rm MMSE}(Y_i,\widehat{\beta},\widehat{\gamma}),$$ where $h_{\epsilon}^{\rm MMSE}$ is given by Corollary \ref{SolutionMMSE_semiparam}, and $\widehat{\beta},\widehat{\gamma}$ are preliminary estimates. {As we pointed out in Section \ref{Sec_param}, $h_{\epsilon}^{\rm MMSE}$ is the solution to a (well-posed) Tikhonov-regularized linear inverse problem, and many numerical methods are available to solve such problems; see Engl \textit{et al.} (2000) and Kress (2014) for classic references. The simulation-based approach that we have implemented and describe here is closely related to the strategy presented in Bonhomme (2012).} We abstract from conditioning covariates. In the presence of correlated covariates $X_i$, we use the same technique to approximate $h_{\epsilon}^{\rm MMSE}(\cdot\,|\, x)$ for each value of $X_i=x$.  We use this approach in the numerical illustration based on the dynamic panel data model in Section \ref{Sec_numeric}, where the covariate is the initial condition. We denote $\eta=(\beta',\gamma')'$.\footnote{Here we present a general method based on simulations. In the cross-sectional probit model (\ref{mod_bc_cs}), explicit closed-form expressions are available, and we use those for computation in our first illustration.}

 Draw an i.i.d. sample $(Y^{(1)},A^{(1)}),...,(Y^{(S)},A^{(S)})$ of $S$ draws from $ g_{\beta}\times \pi({\gamma})$. Let $G$ be $S\times S$ with $(\tau,s)$ element $g_{\beta}(Y^{(\tau)}\,|\, A^{(s)})/\sum_{s'=1}^Sg_{\beta}(Y^{(\tau)}\,|\,A^{(s')})$, $G_Y$ be $N\times S$ with $(i,s)$ element $g_{\beta}(Y_i\,|\, A^{(s)})/\sum_{s'=1}^Sg_{\beta}(Y_i\,|\,A^{(s')})$, $\Delta$ be $S\times 1$ with $s$-th element $\Delta_{\beta}(A^{(s)})$, $I$ be the $S\times S$ identity matrix, and $\iota$ and $\iota_Y$ be the $S\times 1$ and $N\times 1$ vectors of ones. In addition, let $D$ be the $S\times \dim\eta$ matrix with $(s,k)$ element 
 $$d_{\eta_k}(Y^{(s)})=\frac{\sum_{s'=1}^S\left(\nabla_{\eta_k}\log g_{\beta}(Y^{(s)}\,|\,A^{(s')})+\nabla_{\eta_k}\log\pi({\gamma})(A^{(s')})\right)g_{\beta}(Y^{(s)}\,|\,A^{(s')})}{\sum_{s'=1}^Sg_{\beta}(Y^{(s)}\,|\,A^{(s')})},$$
 and let $D_Y$ be $N\times \dim\eta$ with $(i,k)$ element $d_{\eta_k}(Y_i)$, $Q=I-DD^{\dagger}$, $\widetilde{G}_Y=G_Y-D_YD^\dagger G$, $\widetilde{\iota}_Y=\iota_Y-D_YD^{\dagger}\iota$, $\widetilde{G}=Q G$, $\widetilde{\iota}=Q\iota$, and $\partial \Delta$ be the $K\times 1$ vector with $k$-th element $\frac{1}{S}\sum_{s=1}^S\nabla_{\eta_k}\Delta(A^{(s)},\beta)+\Delta(A^{(s)},\beta)\nabla_{\eta_k}\log \pi (\gamma)(A^{(s)})$. 
 
 From Corollary \ref{SolutionMMSE_semiparam}, a fixed-$S$ approximation to the minimum-MSE estimator is then
 \begin{align*} 
 &\widetilde{\delta}_{\epsilon}^{\rm MMSE}=\iota^\dagger\Delta+\iota_Y^{\dagger}\widetilde{h}_{\epsilon}^{\rm MMSE},
 \end{align*}
 where 
 \begin{align*} 
& \widetilde{h}_{\epsilon}^{\rm MMSE}=
 D_Y(D'D/S)^{-1}\,\partial \Delta+
 (\epsilon n)\Bigg[\left(\widetilde{G}_Y-\widetilde{\iota}_Y\iota^\dagger\right)\Delta\notag\\
 &\quad \quad \quad \quad \quad \quad-\widetilde{G}_YG'\left(\widetilde{G}G'+(\epsilon n)^{-1}I\right)^{-1}\left((\epsilon n)^{-1}D(D'D/S)^{-1}\, \partial \Delta+\left(\widetilde{G}-\widetilde{\iota}\iota^\dagger\right)\Delta\right)\Bigg],\end{align*}
and $(\beta,\gamma)$ are replaced by the preliminary $(\widehat{\beta},\widehat{\gamma})$ in all the quantities above, including when producing the simulated draws. $\widetilde{\delta}_{\epsilon}^{\rm MMSE}$ is consistent for $\widehat{\delta}_{\epsilon}^{\rm MMSE}$ as $S$ tends to infinity {for fixed $n$, under suitable regularity conditions (see Bonhomme, 2012, for a closely related setup). Note that matrix inverses remain well-defined as $S$ tends to infinity, due to the presence of the Tikhonov-penalization term $(\epsilon n)^{-1}I$.}  

 \paragraph{Confidence intervals.}
 
 From Subsection \ref{subsec_CI}, computing confidence intervals only requires, in addition to computing critical values under correct specification, to compute an estimate of the bias of the estimator $b_{\epsilon}(h,\widehat{\beta},\widehat{\gamma})$. In semi-parametric mixture models we have, for an asymptotically linear estimator based on $h$ satisfying \eqref{Con:Unbiased} and \eqref{Con:EtaGradient},
 $$ b_{\epsilon}(h,\beta_0,\gamma_*)=\epsilon^{\frac{1}{2}} \left\{{\limfunc{Var}}_{\beta_0,\pi(\gamma_*)}[\Delta_{\beta_0}(A)-\mathbb{E}_{\beta_0,\pi(\gamma_*)}(h(Y)\,|\, A)]\right\}^{\frac{1}{2}}.$$
A numerical approximation to the bias of $\widehat{\delta}_{\epsilon}^{\rm MMSE}$ is then
\begin{align*} 
& \widetilde{b}_{\epsilon}(h_{\epsilon}^{\rm MMSE},\beta_0,\gamma_*)=\epsilon^{\frac{1}{2}}\left\|\Delta-\iota^{\dagger}\Delta-G'\widetilde{h}_{\epsilon}^{\rm MMSE}\right\|.
\end{align*}

 \paragraph{Values of $\epsilon$.}
 
 In turn, $\epsilon_k$ in (\ref{seq_eps_gen}) can be approximated as $\mu(\alpha,p)^2/(n\lambda_{k})$, where $\lambda_{k}$ is the $k$-th largest eigenvalue of $G'QG=\widetilde{G}'\widetilde{G}$ (removing the eigenvalue equal to one since it corresponds to a constant eigenfunction).

\section{Models defined by moment restrictions\label{App_GMM}}

  In this section, we consider settings where a finite-dimensional parameter $(\beta_0',\pi_0')'$ does not fully determine the distribution $f_0$ of $Y$, but satisfies a finite-dimensional system of moment conditions
  \begin{equation}
  \mathbb{E}_{f_0}\Psi(Y,\beta_0,\pi_0)=0,\label{moment_cond_gmm}
  \end{equation}
which may be just-identified, over-identified or under-identified. We focus on asymptotically linear generalized method-of-moments (GMM) estimators of $\delta_{\beta_0,\pi_0}$ that satisfy 
  \begin{equation}\widehat{\delta}=\delta_{\beta_0,\pi(\gamma_*)}+a(\beta_0,\gamma_*)'\frac{1}{n}\sum_{i=1}^n\Psi(Y_i,\beta_0,\pi(\gamma_*))+o_{P_0}(\epsilon^{\frac{1}{2}}+n^{-\frac{1}{2}}),\label{est_delta_hat_GMM}\end{equation}
  for a parameter vector $a(\beta_0,\gamma_*)$. We will characterize the form of $a(\beta_0,\gamma_*)$ leading to minimum worst-case MSE in $\Gamma_{\epsilon}(\gamma_*)$.

 We assume that the remainder in (\ref{est_delta_hat_GMM}) is uniformly bounded similarly as in (\ref{DefRemainder1}). In this case local robustness with respect to $(\beta_0',\gamma_*')'$ takes the form
\begin{equation}\label{localrobust_GMM}
\nabla_{\beta\gamma}\delta_{\beta_0,\pi(\gamma_*)}+\mathbb{E}_{f_0}\nabla_{\beta\gamma}\Psi(Y,\beta_0,\pi(\gamma_*))\, a(\beta_0,\gamma_*)=0.
\end{equation}
It is natural to focus on asymptotically linear GMM estimators here, since $f_0$ is unrestricted except for the moment condition (\ref{moment_cond_gmm}).

To derive the worst-case bias of $\widehat{\delta}$ note that, by (\ref{moment_cond_gmm}), for any $\pi_0\in \Gamma_{\epsilon}(\gamma_*)$ we have
$$\mathbb{E}_{f_0}\Psi(Y,\beta_0,\pi(\gamma_*))=-\left[\mathbb{E}_{f_0}\nabla_{\pi}\Psi(Y,\beta_0,\pi(\gamma_*))\right]'\,(\pi_0-\pi(\gamma_*))+o(\epsilon^{\frac{1}{2}}),$$
so, under appropriate regularity conditions,
\begin{align*} 
&\sup_{\pi_0 \in \Gamma_\epsilon(\gamma_*)}
\left|   \mathbb{E}_{f_0} \widehat{\delta}  -\delta_{\beta_0,\pi_0}\right|=\epsilon^{\frac{1}{2}} \, \left\|  \nabla_\pi \delta_{\beta_0,\pi(\gamma_*)} + \mathbb{E}_{f_0}\nabla_{\pi}\Psi(Y,\beta_0,\pi(\gamma_*))\, a(\beta_0,\gamma_*)\right\|_{\gamma_*} + o(\epsilon^{\frac{1}{2}} +n^{-\frac{1}{2}}).\end{align*} 
The worst-case MSE of 
$$\widehat{\delta}_{a,{\beta_0},\gamma_*}:=\delta_{\beta_0,\pi(\gamma_*)}+a(\beta_0,\gamma_*)'\frac{1}{n}\sum_{i=1}^n\Psi(Y_i,\beta_0,\pi(\gamma_*))$$
is thus
\begin{align*} 
&\epsilon \, \left\|  \nabla_\pi \delta_{\beta_0,\pi(\gamma_*)} + \mathbb{E}_{f_0}\nabla_{\pi}\Psi(Y,\beta_0,\pi(\gamma_*))\, a(\beta_0,\gamma_*)\right\|^2_{\gamma_*}\\
&\quad \quad \quad +a(\beta_0,\gamma_*)' \frac{\mathbb{E}_{f_0}\Psi(Y,\beta_0,\pi(\gamma_*))\Psi(Y,\beta_0,\pi(\gamma_*))'}{n}a(\beta_0,\gamma_*)+o(\epsilon +n^{-1}).\end{align*}

To obtain an explicit expression for the minimum-MSE estimator, let us focus on the case where $\pi_0$ is finite-dimensional and $\|\cdot\|_{\gamma_*}=\|\cdot\|_{\Omega^{-1}}$. Let us define 
\begin{align*} 
&V_{\beta_0,\pi(\gamma_*)}=\mathbb{E}_{f_0}\Psi(Y,\beta_0,\pi(\gamma_*))\Psi(Y,\beta_0,\pi(\gamma_*))',\,\,\,\,\, K_{\beta_0,\pi(\gamma_*)}=\mathbb{E}_{f_0}\nabla_{\pi}\Psi(Y,\beta_0,\pi(\gamma_*)),\end{align*}
and $$K_{\beta_0,\gamma_*}=\mathbb{E}_{f_0}\nabla_{\beta\gamma}\Psi(Y,\beta_0,\pi(\gamma_*)).$$
For all $\beta_0,\gamma_*$ we aim to minimize
\begin{align*} 
&\epsilon \, \left\|  \nabla_\pi \delta_{\beta_0,\pi(\gamma_*)} + K_{\beta_0,\pi(\gamma_*)}a(\beta_0,\gamma_*)\right\|^2_{\Omega^{-1}} +a(\beta_0,\gamma_*)' \frac{V_{\beta_0,\pi(\gamma_*)}}{n}a(\beta_0,\gamma_*),\\&\quad\quad\quad \text{ subject to }\nabla_{\beta\gamma}\delta_{\beta_0,\pi(\gamma_*)}+K_{\beta_0,\gamma_*}a(\beta_0,\gamma_*)=0.
\end{align*} 
A solution is given by\footnote{Here we assume that $K_{\beta_0,\gamma_*}V_{\beta_0,\pi(\gamma_*)}^{\dagger} K_{\beta_0,\gamma_*}'$ is non-singular, requiring that $\beta_0,\gamma_*$ be identified from the moment conditions. Existence follows from the fact that, by the generalized information identity, $V_{\beta_0,\pi(\gamma_*)}a=0$ implies that $K_{\beta_0,\pi(\gamma_*)}a=0$. Moreover, although $a_{\epsilon}^{\rm MMSE}(\beta_0,\gamma_*)$ may not be unique, $a_{\epsilon}^{\rm MMSE}(\beta_0,\gamma_*)'\Psi(Y,\beta_0,\pi(\gamma_*))$ is unique almost surely.}
\begin{align}
&a_{\epsilon}^{\rm MMSE}(\beta_0,\gamma_*)=- B_{\beta_0,\pi(\gamma_*),\epsilon }^{\dagger}K_{\beta_0,\gamma_*}'\left(K_{\beta_0,\gamma_*}B_{\beta_0,\pi(\gamma_*),\epsilon }^{\dagger} K_{\beta_0,\gamma_*}'\right)^{-1}\nabla_{\beta\gamma}\delta_{\beta_0,\pi(\gamma_*)}\notag\\
&-B_{\beta_0,\pi(\gamma_*),\epsilon }^{\dagger}\left(I-K_{\beta_0,\gamma_*}'\left(K_{\beta_0,\gamma_*}B_{\beta_0,\pi(\gamma_*),\epsilon }^{\dagger} K_{\beta_0,\gamma_*}'\right)^{-1}K_{\beta_0,\gamma_*}B_{\beta_0,\pi(\gamma_*),\epsilon }^{\dagger}\right) K_{\beta_0,\pi(\gamma_*)}'\Omega^{-1}\nabla_{\pi}\delta_{\beta_0,\pi(\gamma_*)},\label{aMMSE}
\end{align}
where $B_{\beta_0,\pi(\gamma_*),\epsilon }= K_{\beta_0,\pi(\gamma_*)}'\Omega^{-1}K_{\beta_0,\pi(\gamma_*)}+(\epsilon n)^{-1}V_{\beta_0,\pi(\gamma_*)}$, and $B_{\beta_0,\pi(\gamma_*),\epsilon }^{\dagger}$ is its Moore-Penrose generalized inverse. Note that, in the likelihood case and taking $\Psi(y,\beta,\pi)=\nabla_{\pi}\log f_{\beta,\pi}(y)$, the function $h(y,\beta_0,\gamma_*)=a_{\epsilon}^{\rm MMSE}(\beta_0,\gamma_*)'\Psi(y,\beta_0,\pi(\gamma_*))$ simplifies to (\ref{SolutionMMSE}).

As a special case, when $\epsilon=0$ we have
\begin{align*}
&a_{0}^{\rm MMSE}(\beta_0,\gamma_*)=-V_{\beta_0,\pi(\gamma_*)}^{\dagger}K_{\beta_0,\gamma_*}'\left(K_{\beta_0,\gamma_*}V_{\beta_0,\pi(\gamma_*)}^{\dagger} K_{\beta_0,\gamma_*}'\right)^{-1}\nabla_{\beta\gamma}\delta_{\beta_0,\pi(\gamma_*)}.
\end{align*}
In this case, given preliminary estimators $\widehat{\beta}$ and $\widehat{\gamma}$, the minimum-MSE estimator $$\widehat{\delta}_{\epsilon}^{\rm MMSE}=\delta_{\widehat{\beta},\pi(\widehat{\gamma})}+a_{0}^{\rm MMSE}(\widehat{\beta},\widehat{\gamma})'\frac{1}{n}\sum_{i=1}^n\Psi(Y_i,\widehat{\beta},\pi(\widehat{\gamma}))$$ is the one-step approximation to the optimal GMM estimator based on the reference model. To obtain a feasible estimator one simply replaces the expectations in $V_{\beta_0,\pi(\gamma_*)}$ and $K_{\beta_0,\gamma_*}$ by sample analogs.

As a second special case, consider $\epsilon $ tending to infinity. Focusing on the known-$(\beta_0,\gamma_*)$ case for simplicity, $a_{\epsilon}^{\rm MMSE}(\beta_0,\gamma_*)$ tends to $-K^{\rm ginv}_{\beta_0,\pi(\gamma_*)}\nabla_{\pi}\delta_{\beta_0,\pi(\gamma_*)}$,
where
\begin{align*}&K^{\rm ginv}_{\beta_0,\pi(\gamma_*)}:=\\
&	\left( V_{\beta_0,\pi(\gamma_*)}^\dagger \right)^{1/2} \left[\left( V_{\beta_0,\pi(\gamma_*)}^\dagger \right)^{1/2}   K_{\beta_0,\pi(\gamma_*)}'\Omega^{-1}K_{\beta_0,\pi(\gamma_*)} \left( V_{\beta_0,\pi(\gamma_*)}^\dagger \right)^{1/2} \right]^{\dagger}
\left( V_{\beta_0,\pi(\gamma_*)}^\dagger \right)^{1/2}
K_{\beta_0,\pi(\gamma_*)}'\Omega^{-1}\end{align*}
is a generalized inverse of $K_{\beta_0,\pi(\gamma_*)}$, and the choice of $\Omega$
corresponds to choosing one specific such generalized inverse.
In this case, the minimum-MSE estimator is the one-step approximation to a particular GMM estimator based on the ``large'' model.

Lastly, given a parameter vector $a$, confidence intervals can be constructed as explained in Subsection \ref{subsec_CI}, taking $$b_{\epsilon}(a,\widehat{\beta},\widehat{\gamma})=\epsilon^{\frac{1}{2}} \, \left\|  \nabla_\pi \delta_{\widehat{\beta},\pi(\widehat{\gamma})} + \frac{1}{n}\sum_{i=1}^n\nabla_{\pi}\Psi(Y_i,\widehat{\beta},\pi(\widehat{\gamma}))\, a(\widehat{\beta},\widehat{\gamma})\right\|_{\Omega^{-1}}.$$ 

\paragraph{Example.}

Consider again the OLS/IV example of Subsection \ref{subsec_par}, but now drop the Gaussian assumptions on the distributions. For known $C$, the set of moment conditions corresponds to the moment functions
$$\Psi(y,x,z,\beta,\pi)=\left(\begin{array}{c}x(y-x'\beta-\pi'(x-C z))\\z(y-x'\beta)\end{array}\right).$$ 
In this case, letting $W=(X',Z')'$ we have 
$$K_{\beta_0,\gamma_*}=-\mathbb{E}_{f_0}\left(XW'\right),\,\,\,\,\, K_{\beta_0,\pi(\gamma_*)}=-\mathbb{E}_{f_0}\left(\begin{array}{cc}XX'&XZ'\\(X-C Z)X' &0\end{array}\right),$$
and
$$ V_{\beta_0,\pi(\gamma_*)}=\mathbb{E}_{f_0}\left((Y-X'\beta_0)^2WW'\right).$$
Given a preliminary estimator $\widetilde{\beta}$, $V_{\beta_0,\pi(\gamma_*)}$ can be estimated as
$\frac{1}{n}\sum_{i=1}^n(Y_i-X_i'\widetilde{\beta})^2W_iW_i'$,
whereas $K_{\beta_0,\gamma_*}$ and $K_{\beta_0,\pi(\gamma_*)}$ can be estimated as sample means. The estimator based on (\ref{aMMSE}) then interpolates nonlinearly between the OLS and IV estimators, similarly as in the likelihood case.

\section{Numerical illustrations\label{App_Fig}}

\subsection{Interpretation of $\epsilon$ in the cross-sectional binary choice model}

Here we use the binary choice model of Subsection \ref{subsec_bin} to provide additional intuition about the interpretation of $\epsilon$ based on statistical testing. 

Let ${\cal{U}}_k$ denote the span of the first $k$ non-constant eigenfunctions of the operator $\widetilde{H}_{\pi}$. By construction, any density $\pi_0\notin \Gamma_{\epsilon_k}(\gamma_*)$ such that $(\pi_0-\pi(\gamma_*))/\pi(\gamma_*)\in {\cal{U}}_k$ can be ``detected'' easily, in the sense that the local power of a $5\%$-likelihood ratio test exceeds 80\%.\footnote{${\cal{U}}_k$ consists of cotangent elements that have zero mean under the reference model. Any such $u\in{\cal{T}}$ can be mapped to a direction $v=u \cdot \pi(\gamma_*)\in \overline {\cal T}$ in the tangent space.} In the upper panel of Figure \ref{Fig_CS_eps_interpret}, we plot the eigenfunctions in ${\cal{U}}_k$. Plotting those allows one to visualize the directions along which setting $\epsilon$ to either of the $\epsilon_k$'s provides power guarantees outside the neighborhood. We see that the eigenfunctions do not vary outside the $[-1,1]$ interval, where the support of $X'\beta_0$ lies. Within the $[-1,1]$ interval, the eigenfunctions oscillate and belong to orthogonal bases of functions. 

To see how well the true $\pi_0$ can be approximated using the directions in ${\cal{U}}_k$, in the bottom panel of Figure \ref{Fig_CS_eps_interpret}, we report the projection of $\pi_0$ onto ${\cal{U}}_k$. We see that, outside the $[-1,1]$ interval, the projection is only governed by the reference normal density, reflecting the limited support of $X$. Within the interval, the approximation to the true bimodal density improves as $k$ increases. At the same time, note that, consistently with our local approach, the approximating functions are not necessarily non-negative.\footnote{In addition, since we know $\pi_0$ in this exercise, we can compute the local power of a $5\%$-likelihood ratio test in direction $\pi_0-\pi(\gamma_*)$, for any value of $\epsilon$. We find a power of 0.51 at $\epsilon_1$ and 0.71 at $\epsilon_2$ when $X$ has 4 points of support, and 0.67 at  $\epsilon_1$, 0.92 at $\epsilon_2$, and 0.99 at $\epsilon_3$ when $X$ has 20 points of support.}

\begin{figure}[tbp]	\caption{Eigenfunctions of $\widetilde{H}_{\pi}$ in the cross-sectional binary choice model\label{Fig_CS_eps_interpret}}
	\begin{center}
		\begin{tabular}{cc}
			\multicolumn{2}{c}{A. Eigenfunctions}\\
			(a) $n_X=4$ & (b) $n_X=20$	\\
			\includegraphics[width=60mm, height=40mm]{./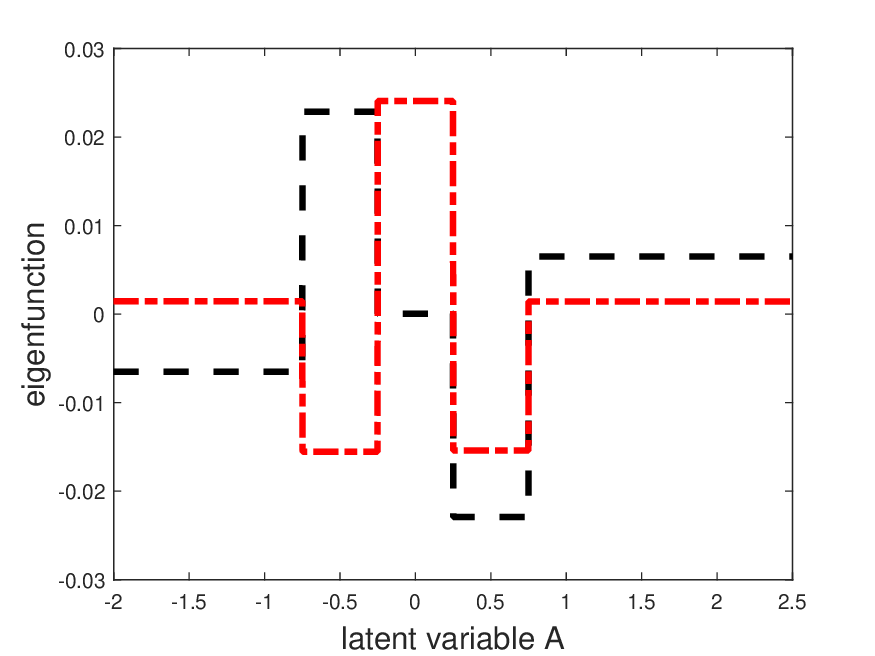}&	\includegraphics[width=60mm, height=40mm]{./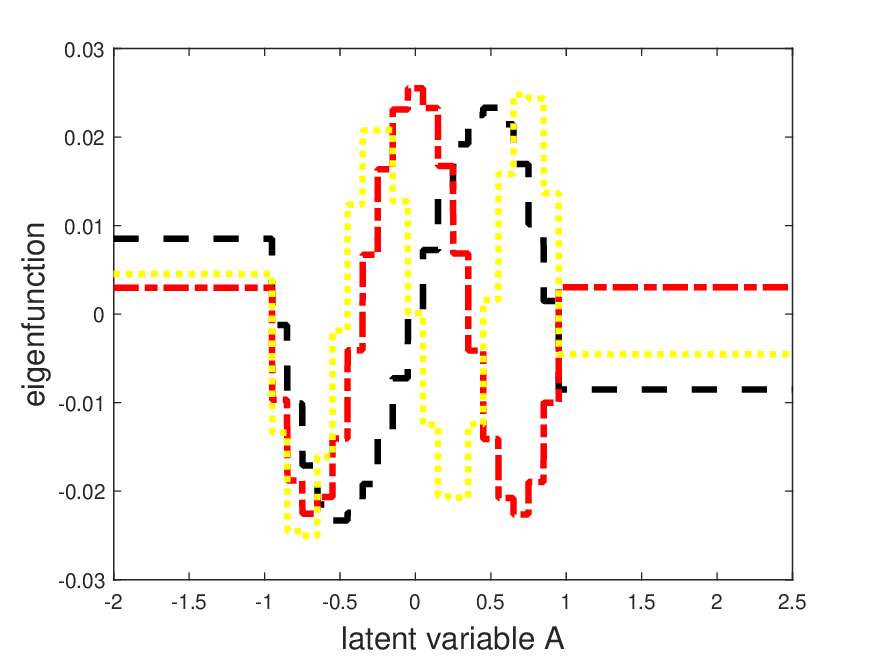}\\
			\multicolumn{2}{c}{B. Projections}\\
			(c) $n_X=4$ & (d) $n_X=20$ 	\\
			\includegraphics[width=60mm, height=40mm]{./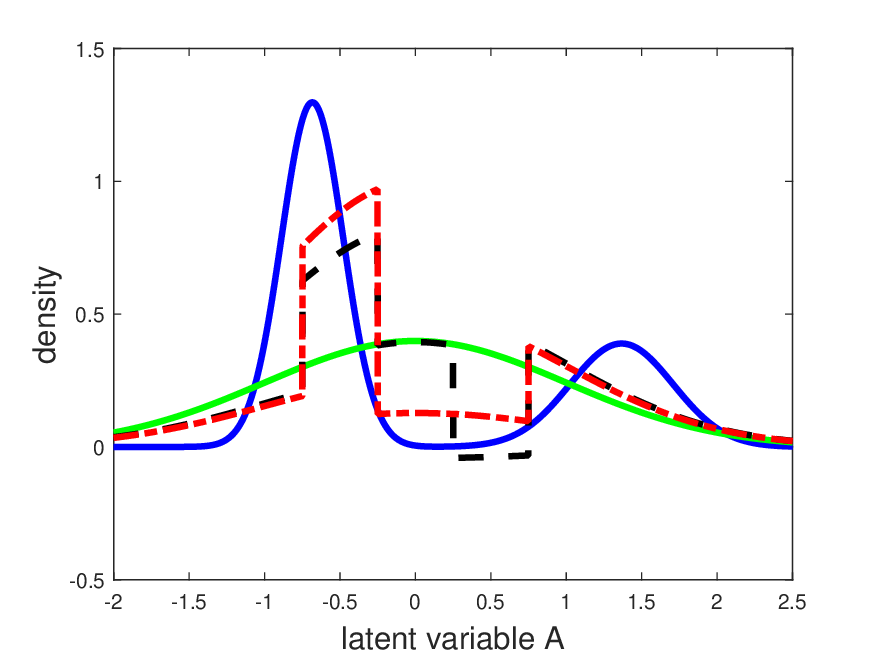}&	\includegraphics[width=60mm, height=40mm]{./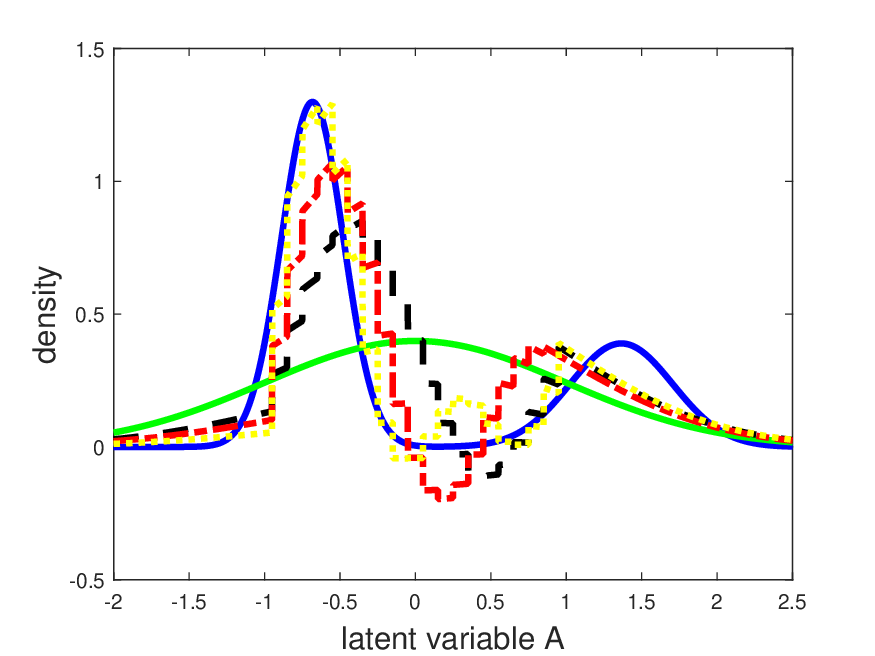}\\			\end{tabular}
	\end{center}
	\par
	\textit{{\small Notes: In the top panel we report the first 2 (respectively, first 3) non-constant eigenfunctions of $\widetilde{H}_{\pi}$. The first eigenfunction is shown in dashed, the second one in dashed-dotted, and the third one in dotted. In the bottom panel we plot the true and reference densities in solid, as well as the successive approximations using the first, the first two, or the first three eigenfunctions. }}
\end{figure}

\clearpage

\subsection{Additional tables}

\begin{table}[h!]
	\caption{Monte Carlo simulation of the average effect in the cross-sectional binary choice model, \textit{interpolation} ($x_0=(0.5,1)'$) \label{TabProbit_CS_1}}
	\begin{center}
		\begin{tabular}{l||ccccccc}
			Minimum-MSE, for $\epsilon=$  & {0.0001} & 0.20 & 0.40  & 0.60 & 0.80 & 1.00 \\\hline\hline 
			&\multicolumn{6}{c}{A. $n_X=4$} \\\hline
			Worst-case bias &  0.0021  &  0.0783 &   0.1104 &   0.1351  &  0.1560  &  0.1744\\
			Asymptotic standard error&  0.0228  &  0.0288  &  0.0297  &  0.0300 &   0.0302  &  0.0303
			\\
			Monte Carlo bias  & 0.1026 &   0.0197 &  0.0134  &  0.0111   & 0.0099   & 0.0092			
			\\
			Monte Carlo standard deviation & 0.0253  & 0.0281  &  0.0288   & 0.0291   & 0.0292  &  0.0293		
			\\
			Monte Carlo root MSE &0.1057  &  0.0343  &  0.0317  &  0.0311  &  0.0308  &  0.0307		
			\\
			CI length &0.0936  &  0.2697   & 0.3372   & 0.3878   & 0.4302  &  0.4674			
			\\
			CI coverage & 0.0180   & 0.9990   & 1.0000 &  1.0000  &  1.0000  &  1.0000				
			\\\hline	&\multicolumn{6}{c}{B. $n_X=20$} \\\hline
			Worst-case bias &  0.0021  &  0.0480   & 0.0610  &  0.0714  &  0.0805   & 0.0887
			\\
			Asymptotic standard error& 0.0227  &  0.0394   & 0.0453   & 0.0487  &  0.0509  &  0.0526
			\\
			Monte Carlo bias  & 0.0976   & 0.0080 &   0.0037  &  0.0026  &  0.0022 &   0.0020
			\\
			Monte Carlo standard deviation &0.0239 &   0.0386  &  0.0446   & 0.0480  &  0.0502   & 0.0519			
			\\
			Monte Carlo root MSE & 0.1005   & 0.0394  &  0.0447   & 0.0480   & 0.0502   & 0.0519
			\\
			CI length & 
			0.0931  &  0.2503  &  0.2996  &  0.3337   & 0.3607  &  0.3835
			\\
			CI coverage &
			0.0190   & 0.9990   & 1.0000 &   1.0000   & 1.0000  &  1.0000
			\\\hline
		\end{tabular}
	\end{center}
	\par
	\textit{{\small Notes: Performance of the minimum-MSE estimator in the cross-sectional binary choice model, for different values of $\epsilon$. $n=500$, results for $1000$ simulations. The nominal level for confidence intervals (CI) is 95\%. $n_X$ denotes the number of points of support of the first component of $X$.}}
\end{table}

\begin{table}[h!]
	\caption{Monte Carlo simulation of the average effect in the cross-sectional binary choice model, \textit{extrapolation} ($x_0=(-0.5,1)'$) \label{TabProbit_CS_2}}
	\begin{center}
		\begin{tabular}{l||ccccccc}
			Minimum-MSE, for $\epsilon=$  & 0.0001 & 0.20 & 0.40  & 0.60 & 0.80 & 1.00 \\\hline\hline 
			&\multicolumn{6}{c}{A. $n_X=4$} \\\hline
			Worst-case bias & 0.0029 &   0.1269  &  0.1794 &   0.2197   & 0.2537  &  0.2837\\
			Asymptotic standard error& 0.0296  &  0.0312  &  0.0315   & 0.0316  &  0.0316  &  0.0317
			\\
			Monte Carlo bias  & -0.0987  & -0.0903  & -0.0901  & -0.0900 &  -0.0900  & -0.0900
			\\
			Monte Carlo standard deviation &0.0283 &   0.0330  &  0.0334   & 0.0335  &  0.0336  &  0.0336
			\\
			Monte Carlo root MSE &0.1027 &   0.0961 &   0.0961   & 0.0961  &  0.0961  &  0.0961
			\\
			CI length &0.1219  &  0.3762  &  0.4822  &  0.5632  &  0.6314 &   0.6914
			\\
			CI coverage &0.2000  &  0.9370 &   0.9850  &  0.9960   & 0.9990  &  1.0000
			\\\hline	&\multicolumn{6}{c}{B. $n_X=20$} \\\hline
			Worst-case bias & 0.0028   & 0.1172  &  0.1645 &   0.2008   & 0.2314  &  0.2584
			\\
			Asymptotic standard error&0.0313 &   0.0401  &  0.0443   & 0.0470  &  0.0489  &  0.0503
			\\
			Monte Carlo bias  & -0.0902  & -0.0961   &-0.0988 &  -0.0999 &  -0.1005 &   -0.1009
			\\
			Monte Carlo standard deviation &0.0287  &  0.0373   & 0.0412  &  0.0437   & 0.0456  &  0.0471
			\\
			Monte Carlo root MSE & 0.0947   & 0.1031 &   0.1070 &   0.1090  &  0.1104   & 0.1113
			\\
			CI length & 0.1284  &  0.3915  &  0.5026   & 0.5857   & 0.6544  &  0.7141
			\\
			CI coverage & 0.2530   & 0.9500 &    0.9910 &   0.9960  &  0.9970   & 0.9970
			\\\hline
		\end{tabular}
	\end{center}
	\par
	\textit{{\small Notes: See the notes to Table \ref{TabProbit_CS_1}. }}
\end{table}

\begin{table}[h!]
	\caption{Monte Carlo simulation results for the autoregressive parameter in the dynamic binary choice panel data model\label{TabProbit_PD_1}}
	\begin{center}
		\begin{tabular}{l||cccccc}
			Minimum-MSE, for $\epsilon=$  & 0.00 & 0.20   & 0.40 & 0.60 &  0.80 &1.00 \\\hline\hline 
			&\multicolumn{6}{c}{A. $T=5$} \\\hline
			Worst-case bias &   0.0001  &  0.0179  &  0.0227 &   0.0266  &  0.0299  &  0.0327\\
			Asymptotic standard error&  0.0952  &  0.0975 &   0.0979 &   0.0981  &  0.0983 &   0.0985\\
			Monte Carlo bias  &  -0.1729  & -0.0615&   -0.0555  & -0.0531 &  -0.0518  & -0.0509\\
			Monte Carlo standard deviation & 0.1252 &   0.1111   & 0.1129&    0.1136 &   0.1141   & 0.1145
			\\
			Monte Carlo root MSE &0.2135  &  0.1270  &  0.1258   & 0.1255   & 0.1254 &    0.1253
			\\
			CI length &  0.3734  &  0.4179   & 0.4292  &  0.4379  &  0.4452 &   0.4516
			\\
			CI coverage & 0.5470  &  0.8890  &  0.9080  &  0.9160 &   0.9220  &  0.9280
			\\\hline	&\multicolumn{6}{c}{B. $T=10$} \\\hline
			Worst-case bias &0.0001  &  0.0090 &   0.0118  &  0.0140  &  0.0158  &  0.0175
			\\
			Asymptotic standard error& 0.0607  &  0.0614   & 0.0615 &   0.0616  &  0.0616 &   0.0617	\\
			Monte Carlo bias  & -0.0780  & -0.0137 &   -0.0120  & -0.0114  & -0.0110  & -0.0107
			\\
			Monte Carlo standard deviation & 0.0676 &   0.0731   & 0.0736   & 0.0738 &   0.0739   & 0.0740
			\\
			Monte Carlo root MSE & 0.1032  &  0.0744    &0.0745   & 0.0746  &  0.0747 &   0.0748
			\\
			CI length & 0.2381  &  0.2587  &  0.2647   & 0.2694 &   0.2733 &   0.2768
		\\
			CI coverage & 0.7130  &  0.9210  &  0.9330   & 0.9360   & 0.9360 &   0.9380
		\\\hline
			&\multicolumn{6}{c}{C. $T=20$} \\\hline
			Worst-case bias &   0.0001  &  0.0058 &   0.0078 &   0.0093&    0.0106 &   0.0118
			  \\
			Asymptotic standard error& 0.0418 &   0.0421   & 0.0422 &   0.0422&    0.0422   & 0.0422
			\\
			Monte Carlo bias  & -0.0304  & -0.0023  & -0.0019  & -0.0017  & -0.0017  & -0.0016\\
			Monte Carlo standard deviation &0.0442 &   0.0488 &   0.0490  &  0.0490   & 0.0491  &  0.0491
			\\
			Monte Carlo root MSE & 0.0537  &  0.0488   & 0.0490  &  0.0491   & 0.0491  &  0.0491
			\\
			CI length &0.1638  &  0.1766 &   0.1808  &  0.1840 &   0.1867  &  0.1891 \\
			CI coverage &0.8780 &   0.9110  &  0.9180  &  0.9230   & 0.9260  &  0.9300
			\\\hline		\end{tabular}
	\end{center}
	\par
	\textit{{\small Notes: Performance of the minimum-MSE estimator of $\beta_0$ in the dynamic panel data binary choice model, for different values of $\epsilon$. $n=500$, results for $1000$ simulations. The nominal level for confidence intervals (CI) is 95\%. }}
\end{table}

\begin{table}[h!]
	\caption{Monte Carlo simulation results for the average state dependence parameter in the dynamic binary choice panel data model\label{TabProbit_PD_2}}
	\begin{center}
		\begin{tabular}{l||cccccc}
			Minimum-MSE, for $\epsilon=$  & 0.00 & 0.20   & 0.40 & 0.60 &  0.80  &1.00\\\hline\hline 
			&\multicolumn{6}{c}{A. $T=5$} \\\hline
			Worst-case bias &    0.0000  &  0.0099 &    0.0134 &   0.0162 &    0.0185 &    0.0205
			\\
			Asymptotic standard error& 0.0259  &  0.0268  &  0.0270 &   0.0272 &   0.0273 &   0.0274
			\\
			Monte Carlo bias  & -0.0538 &  -0.0218 &  -0.0202  & -0.0196   &-0.0193  & -0.0191
			\\
			Monte Carlo standard deviation &0.0439 &   0.0324  &  0.0331 &   0.0334 &   0.0336   & 0.0337
			\\
			Monte Carlo root MSE &0.0694  &  0.0391  &  0.0387   & 0.0387  &  0.0387   & 0.0388
			\\
			CI length &0.1017  &  0.1250  &  0.1329 &   0.1389   & 0.1439  &  0.1483\\
			CI coverage &  0.4450  &  0.8620  &  0.8850  &  0.9000  &  0.9190 &   0.9240
			\\\hline	&\multicolumn{6}{c}{B. $T=10$} \\\hline
			Worst-case bias & 0.0000 &   0.0121  &  0.0169 &   0.0207   & 0.0238 &    0.0266
			\\
			Asymptotic standard error& 0.0181  &  0.0184   & 0.0185  &  0.0186 &   0.0186  &  0.0187
			\\
			Monte Carlo bias  &  -0.0212  & -0.0047  & -0.0048  & -0.0050  & -0.0051  & -0.0052
			\\
			Monte Carlo standard deviation &  0.0257 &   0.0229 &   0.0230   & 0.0231   & 0.0231 &   0.0232
			\\
			Monte Carlo root MSE &  0.0333  &  0.0233   & 0.0235   & 0.0236  &  0.0237   & 0.0238
			\\
			CI length & 0.0710  &  0.0963 &   0.1063  &  0.1141  &  0.1206  &  0.1263
			
			\\
			CI coverage &0.6610 &   0.9630 &   0.9780 &   0.9830 &   0.9870  &  0.9880
		\\\hline
			&\multicolumn{6}{c}{C. $T=20$} \\\hline
			Worst-case bias & 0.0000  &  0.0163 &    0.0230   & 0.0281  &  0.0325 &   0.0363
			\\
			Asymptotic standard error& 0.0134 &   0.0135  &  0.0136&    0.0137 &   0.0137 &   0.0138
			\\
			Monte Carlo bias &  -0.0097 &  -0.0028 &  -0.0028  & -0.0028  & -0.0028 &  -0.0028	\\
			Monte Carlo standard deviation &  0.0187 &   0.0153  &  0.0153  &  0.0154  &  0.0154    & 0.0155
			\\
			Monte Carlo root MSE &  0.0210   & 0.0155 &    0.0156   & 0.0156    &0.0157   & 0.0157	\\
			CI length &  0.0525 &   0.0857 &   0.0993  &  0.1098 &   0.1187  &  0.1265
				\\
			CI coverage & 0.7840 &   0.9890  &  0.9930  &  0.9960   & 0.9970  &  0.9970 	\\\hline		\end{tabular}
	\end{center}
	\par
	\textit{{\small Notes: Performance of the minimum-MSE estimator of $\delta_{\beta_0,\pi_0}$ in the dynamic panel data binary choice model, for different values of $\epsilon$. $n=500$, results for $1000$ simulations. The nominal level for confidence intervals (CI) is 95\%. }}
\end{table}

\clearpage

\end{document}